\documentclass[conference]{IEEEtran}
\IEEEoverridecommandlockouts
\usepackage{cite}
\usepackage{amsthm}
\usepackage{amsmath,amssymb,amsfonts}
\usepackage[linesnumbered, ruled, longend]{algorithm2e}
\usepackage{algorithmic}
\usepackage{graphicx}
\usepackage{textcomp}
\usepackage{xcolor}
\usepackage{framed}
\usepackage{diagbox}
\usepackage{url}
\usepackage{multirow}
\usepackage{bm}
\usepackage[cmintegrals]{newtxmath}
\usepackage{caption, subcaption}
\usepackage{tabularx}
\usepackage{booktabs}
\newtheorem{theorem}{\bf Theorem}

\newtheorem{lemma}{Lemma}

\newtheorem{obs}{Observation}
\newtheorem{definition}{\bf Definition}

\newtheorem{assumption}{Assumption}

\newtheorem{counter-intuition}{Counter-intuition}

\newtheorem{insight}{Insight}
\def\Nsetp{\mathbb{N}_+} 

\def\Ex{\mathbb{E}}

\def\Tl{Low-type}
\def\Th{High-type}
\def\Lcost{c_L}
\def\Hcost{c_H}

\def\Lqual{r_L}
\def\Hqual{r_H}
\def\None{$N$}
\def\Ltype{$L$}
\def\Htype{$H$}
\def\Both{$B$}
\def\ageloss{\delta^2}
\def\cel{\eta_L}
\def\ceh{\eta_H}
\def\ceb{\eta_B}
\def\dis{\rho}

\ifodd 0
\newcommand{\rev}[1]{{\color{blue}#1}} 
\newcommand{\com}[1]{\textbf{\color{red} (COMMENT: #1) }} 
\newcommand{\comg}[1]{\textbf{\color{green} (COMMENT: #1)}}
\newcommand{\response}[1]{\textbf{\color{green} (RESPONSE: #1)}} 
\else
\newcommand{\rev}[1]{#1}

\newcommand{\com}[1]{}
\newcommand{\comg}[1]{}
\newcommand{\response}[1]{}
\fi

\def\BibTeX{{\rm B\kern-.05em{\sc i\kern-.025em b}\kern-.08em
    T\kern-.1667em\lower.7ex\hbox{E}\kern-.125emX}}
\begin{document}


\title{Recruiting Heterogeneous Crowdsource Vehicles for Updating a High-definition Map
}

\author{\IEEEauthorblockN{Wentao Ye, Yuan Luo, Bo Liu, Jianwei Huang
    \thanks{Wentao Ye is with Shenzhen Institute of Artificial Intelligence and Robotics for Society (AIRS), School of Science and Engineering, The Chinese University of Hong Kong, Shenzhen, Guangdong, China (e-mail: wentaoye@link.cuhk.edu.cn). }
    \thanks{ Yuan Luo is with School of Science and Engineering, The Chinese University of Hong Kong, Shenzhen, Guangdong, China (e-mail: luoyuan@cuhk.edu.cn).}
    \thanks{ Bo Liu is with Shenzhen Institute of Artificial Intelligence and Robotics for Society (AIRS), Guangdong, China (e-mail: liubo@cuhk.edu.cn).}
    \thanks{Jianwei Huang is with School of Science and Engineering, Shenzhen Institute of Artificial Intelligence and Robotics for Society, The Chinese University of Hong Kong, Shenzhen, Shenzhen 518172, China (corresponding author, e-mail: jianweihuang@cuhk.edu.cn).}
   }
   }
\maketitle

\begin{abstract}
\rev{The high-definition map is a cornerstone of autonomous driving. Unlike constructing a costly fleet of mapping vehicles, the crowdsourcing paradigm is a cost-effective way to keep an HD map up to date.} \rev{Achieving practical success for crowdsourcing-based HD maps is contingent on addressing two critical issues: freshness and recruitment costs. Given that crowdsource vehicles are often heterogeneous in terms of operational costs and sensing capabilities, it is practical to recruit heterogeneous crowdsource vehicles to achieve the tradeoff between freshness and recruitment costs. However, existing works neglect this aspect. To solve it, we formulate this problem as a Markov decision process.} We demonstrate that the optimal policy is threshold-type age-dependent. \rev{Additionally, our findings reveal some counter-intuitive insights. In some cases, the company should initiate vehicle recruitment earlier when vehicles arrive more frequently, or have higher operational costs or sensing capabilities.} \rev{Besides, we propose an efficient algorithm, called the bound-based relative value iteration (BRVI) algorithm, to overcome the technical challenge that finding an optimal policy is time-consuming.} Numerical simulations show that (i) \rev{the optimal policy reduces the average cost by $19.04\%$ compared to the state-of-the-art mechanism}, and (ii) \rev{the proposed algorithm can reduce the convergence time by $13.66\%$ on average compared to the existing algorithm.}
\end{abstract}

\begin{IEEEkeywords}
HD map, crowdsourcing, Markov decision process, Age of information.
\end{IEEEkeywords}


		

\section{Introduction}
\rev{The high-definition (HD) map is a foundation for autonomous driving since it provides autonomous vehicles with all critical static and dynamic information about their surrounding environment\cite{seif2016autonomous,bao2022high}.} Dynamic information in HD maps such as the \rev{congestion resolution}, dynamic incidence (\emph{e.g.,} agglomerate fog), and the construction area\cite{massow2016deriving}, is crucial. According to their real-time and unpredictable nature, dynamic information in HD maps must be highly fresh, requiring updates from the HD map company (``company" hereafter) within seconds or minutes\cite{b2}. Traditional methods for updating HD maps involve dedicated mapping vehicles equipped with high-precision sensors, such as LiDARs and high-accuracy cameras\rev{\cite{Ria2015build}}. However, maintaining such a fleet is expensive, limiting both its scale and the frequency of updates.

To reduce the cost while maintaining the information freshness, many \rev{HD map companies have adopted crowdsourcing, which is often viewed as a cheap source of high-freshness data\cite{Mobileye2017Crowd, Here2021Mer}.} However, there are still some issues to be addressed. \rev{On one hand, to ensure driver safety and promote efficiency in the transportation system, it is desirable to have an HD map that is as fresh as possible\cite{hao2022freshness}. To achieve this, companies\footnote{For convenience, we refer to the company as 'she' and vehicles as 'he'.} need to recruit a significant number of crowdsource vehicles to collect fresh data and pay for their operational cost. These payments comprise the company's recruitment costs. On the other hand, although crowdsourcing is often viewed as a cheap source for collecting data, keeping an up-to-date HD map can be prohibitively expensive, especially for high-frequency updates and wide coverage HD maps.\cite{cao2019online}. Thus, achieving a tradeoff between freshness and recruitment costs is crucial for the practical success of crowdsourcing-based HD maps.}

\rev{Achieving the tradeoff between freshness and recruitment costs requires considering the heterogeneity of crowdsource vehicles in terms of their operational costs and sensing capabilities. This is practical since vehicles from different manufacturers often have varying operational costs and sensing capabilities due to their installed sensors. Furthermore, according to \cite{huang2019crowdsourcing}, the heterogeneity of crowdsource workers' costs and capability usually has a great impact on a crowdsourcing platform's tradeoff.}

\rev{While few studies focus on freshness and recruitment costs, some\cite{chen2022love,cao2020trajectory} only consider one aspect of the tradeoff and ignore the other, potentially leading to cost-ineffective methods for creating and updating HD maps. Other studies\cite{zhang2019mobile, shi2023federated} that do consider both aspects of the tradeoff overlook the heterogeneity of crowdsource vehicles, which prevents practical optimization of the tradeoff. Our work focuses on achieving the tradeoff between freshness and recruitment costs while accounting for heterogeneous crowdsource vehicles.}

\rev{To achieve this tradeoff, we propose minimizing the company's time-average cost (``average cost" hereafter).} \rev{We define this average cost by the sum of two components: (1) the company's profit loss caused by an outdated HD map and (2) recruitment costs. Since an outdated HD map will incur some profit loss, we can use a loss function of the age of information (AOI) called AoI loss to measure this profit loss\cite{wang2022dynamic}.} In sum, we characterize the tradeoff by the company's average cost, which comprises AoI loss and recruitment costs. 

\rev{Besides, in the context of crowdsourcing for HD maps, the technical challenge is an efficient derivation of the optimal recruitment policy. Because Short-term traffic flow prediction is generally more accurate than long-term\cite{yu2016data,wang2019traffic,majumdar2021congestion}, with "short-term" referring to a few minutes, such as 2 minutes\cite{rusyaidi2020review}. Thus, the company must derive and apply the optimal policy quickly. However, existing related works often neglect this aspect.}

Against this background, we design an MDP model to minimize the average cost with heterogeneous crowdsource vehicles. \rev{Our model accounts for the varying sensing capabilities and operational costs of crowdsource vehicles and assumes that they arrive at a single place of interest (PoI) randomly.} The company \rev{needs to} decide whether to recruit arriving vehicles to minimize her average cost. To determine the optimal recruiting policy within an efficient computational time, we formulate the decision process as a Markov decision process (MDP) and design a time-efficient algorithm called the BRVI algorithm.

This paper advances the state of the art in the following ways: 
\begin{itemize}
    \item \textit{A Novel Problem Formulation:} To the best of our knowledge, it is the first study \rev{to examine the tradeoff between AoI loss and recruitment costs of HD maps involving crowdsource vehicles.} To be practical, we assume crowdsource vehicles arrive randomly and have heterogeneous operational costs and sensing capabilities. Achieving such a tradeoff can help the company reduce the average cost of maintaining an HD map.
    \item \textit{Optimal Policy Analysis:} We prove that the optimal policy \rev{is threshold-type age-dependent. Additionally, our findings reveal some counter-intuitive insights. In some cases, the company should initiate vehicle recruitment earlier when vehicles arrive more frequently, or have higher operational costs or sensing capabilities.}
    \item \textit{An Efficient Algorithm:} We propose an algorithm called the BRIV algorithm that derives the optimal policy efficiently. \rev{Compared to the existing algorithm based on the relative value iteration (RVI), it leverages the upper bounds of the corresponding thresholds to reduce the feasible set of optimization operations. This ultimately saves convergence time.}
    \item \textit{Numerical Results:} Numerical results show that the optimal policy \rev{reduces the company's average cost by up to $19.04\%$ compared with the state-of-the-art mechanism}\cite{wang2022dynamic}, respectively. Moreover, the proposed BRVI algorithm reduces the convergence time by $13.66\%$ on average compared to the existing algorithm.
\end{itemize}

\section{Related Works}
This section covers two related areas: \rev{mobile crowdsensing related to AoI} and crowdsourcing-based HD map updates.

rec{In the field of AoI-related mobile crowdsensing, there recently has been some work jointly considering the freshness and recruitment costs issues\cite{krishnan2014robust,yu2020edge,wang2022dynamic}. For instance, Wang \textit{et al}. proposed dynamic pricing to offer age-dependent monetary rewards, thus encouraging workers to contribute\cite{wang2022dynamic}. However, they assume crowdsource workers have homogeneous capabilities, which is not practical for crowdsourcing-based HD map settings. Our work involves vehicles' heterogeneity in terms of sensing capabilities when considering the tradeoff between AoI and recruitment cost.

While the tradeoff between freshness and recruitment cost is well-studied in AoI-related mobile crowdsensing, there are only a few papers considering these factors separately in crowdsourcing-based HD map updates. For example, Shi \textit{et al.} formulated an overlapping coalition formation game to recruit crowdsource vehicles to contribute data and improve the HD map quality, which is defined by freshness and spatial completeness\cite{shi2023federated}. Li \textit{et al}. developed a periodic crowdsource task distribution framework to achieve high time coverage and efficiency with lower recruitment cost\cite{li2021brief}. Our work is the first to consider the tradeoff between AoI loss and recruitment expenses in crowdsourcing-based HD map updates, which has not been addressed in the previous papers.}

\section{System Model}
In this section, we present the system model to minimize the company's average cost. \rev{First, we describe the crowdsourcing paradigm in Section~\ref{subsec: Overview of crowdsourcing in the HD map}, then introduce the company's cost function in Section~\ref{subsec: average cost}.} Finally, we formulate the average cost minimization problem as an MDP in Section~\ref{subsec: formulation}.

\subsection{Overview of crowdsourcing in the HD map} \label{subsec: Overview of crowdsourcing in the HD map}

We consider a setting in that an HD map company collects dynamic information at a PoI from crowdsource vehicles \rev{over a period of time $t\in\Nsetp$}. For ease of presentation, we assume there are only two types of crowdsource vehicles: \rev{\Tl}~and \rev{\Th}\footnote{For a more general case where vehicles' operational costs and sensing capabilities follow arbitrary distributions, our results remain valid.}. \rev{\Tl~vehicles have worse sensors than \Th~vehicles, meanwhile the corresponding operational costs and sensing capabilities satisfy $\Lcost<\Hcost$ and $\Lqual<\Hqual$, respectively.} Here, \rev{we use $\Lqual$~and $\Hqual$~to denote the sensing capability of \Tl~and \Th~vehicle, which refers to the probability of acquiring qualified sensing data.} \rev{Unqualified sensing data refers to data that the company cannot confidently use to update HD maps.} Note that the company can detect the quality of sensing data before using it\footnote{Take images acquired by cameras as an example, there are many preceding methods to detect the quality of the image, such as NR-IQA\cite{kang2014convolutional} and Nanny ML\cite{humphrey2022machine}.}. If sensing data are not qualified, the company will not use them to update the HD map.

\rev{As is common in the literature (\emph{e.g.,}\cite{zhang2018monte,cai2009adaptive}),} we use the Bernoulli process to model crowdsource vehicles' arrival for analytical convenience\footnote{A more general model is the Markov arrival process, which largely expands the state space and complicates the problem. We will consider the general scenario in our future work.}. Hence, we make Assumption \ref{ass: arrival process} as follows.
\begin{assumption}\label{ass: arrival process}
The arrivals of \Tl~and \Th~vehicles are independent and follow the Bernoulli processes with the arrival probabilities of $p_L$ and $p_H$, respectively.
\end{assumption}

In this subsection, we briefly describe the crowdsourcing workflow in the HD map.
\begin{figure}[htbp]
    \centering
    \includegraphics[width=6cm]{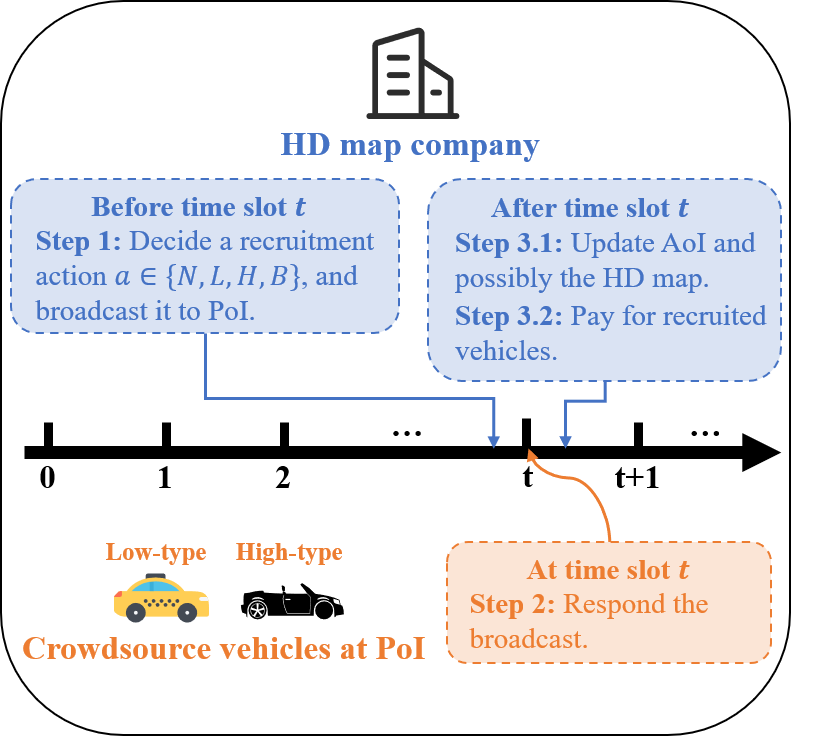}
    \caption{\rev{The crowdsourcing workflow for the HD map.}}
    \label{fig: crowdsourcing_workflow}
\end{figure}

Fig.~\ref{fig: crowdsourcing_workflow} illustrates the workflow of the HD map crowdsourcing in each time slot $t$. The details are as follows: 
\begin{itemize}        
    \item In stage 1, the company announces a recruitment action $a\in\{N, L, H, B\}$ \rev{before the time slot $t$}, where $N$ means no recruitment, $L$ means only recruiting a \Tl~vehicle, $H$ means only recruiting a \Th~vehicle, and $B$ means recruiting both.
    \item In stage 2, \rev{crowdsource vehicles arrive at PoI and respond to the broadcast in the time slot $t$}. An example scenario involves a broadcast message stating ``the recruitment action is \Ltype." In this case, the \Tl~vehicle present at the PoI will detect the PoI and upload the corresponding sensing data to the company since she will pay to cover the operational cost of the vehicle.   
    \item In stage 3, the company finishes the following things \rev{after the time slot $t$}:
    \begin{itemize}
        \item \rev{In stage 3.1, she updates the HD map and the corresponding AoI once receiving the qualified sensing data from recruited vehicles.}
        \item \rev{In stage 3.2, she pays for the recruited vehicles once receiving sensing data from recruited vehicles}\footnote{Because recruited vehicles have already incurred operational costs to acquire and send sensing data. If the company refuses to reward recruited vehicles that provide unqualified sensing data, she would discourage broad participation.}.
    \end{itemize}
         
\end{itemize}

\subsection{Company's cost function} \label{subsec: average cost}
In this subsection, we first describe the AoI loss. Next, we introduce \rev{the company's cost function that depends on the current AoI and her action.}

\subsubsection{AoI loss} Similar as the literature\cite{wang2018microeconomic}, we use AoI loss to measure the freshness. The less fresh the HD map, the higher value of AoI loss. Let $\delta(t)$ denote the current AoI at time slot $t$. We choose the square function $\delta^2(t)$, which reflects AoI loss that the company's cost convexly increases in the age of its provided information\cite{wang2018microeconomic}.

Based on the definition of AoI\cite{hsu2019scheduling, pan2021minimizing}, the AoI increases linearly as time goes on and resets to 1 when the company updates the HD map. As a result, the dynamics of AoI $\delta(t)$ is as follows
\rev{\begin{equation}\label{Eq: dynamics of AoI}
    \delta(t+1) = \left\{\begin{array}{ll}
        1,  & \text{the HD map is updated,} \\
        \delta(t)+1, & \text{otherwise}.
    \end{array}
    \right.
\end{equation}}

The company updates the HD map when it successfully recruits at least one vehicle that provides qualified sensing data in a time slot. \rev{More specifically, the update happens in a time slot if one of the following is true:}
\begin{itemize}
    \item \rev{The company chooses the recruitment action \Ltype; meanwhile, a \Tl~vehicle arrives and reports the qualified sensing data; }
    \item \rev{The company chooses the recruitment action \Htype; meanwhile, a \Th~vehicle arrives and reports the qualified sensing data; }
    \item \rev{The company chooses the recruitment action \Both; meanwhile, a \Tl~or \Th~vehicle arrives and reports the qualified sensing data.}
\end{itemize}

\subsubsection{Company's cost function} The company's cost function is a weighted sum of the AoI loss and the recruitment costs. Due to the uncertainty introduced by random arrivals of crowdsource vehicles and the quality of sensing data, the company should minimize the expected cost.

First, \rev{we define the success probability $Q_a$ as the probability that the company will successfully update the HD map when choosing the action $a\in\mathcal{A}$.}
\begin{equation} \label{eq: success probability}
    Q_a = \left\{\begin{array}{ll}
        0,  &  a=N,\\
        r_ap_a,  & a=L\text{ or } H,\\
        r_Lp_L + r_Hp_H - r_Lr_Hp_Lp_H, & a=B.
    \end{array}\right. 
\end{equation}
where \rev{the success probability $Q_B$ of taking action \Both~is the probability that the union of two non-mutually exclusive events, \emph{i.e.,} a \Tl~or \Th~vehicle reports qualified sensing data.} 

\rev{Then, we introduce the expected AoI loss and expected recruitment costs given an action $a\in\mathcal{A}$.} Suppose that the current AoI is $\delta(t)$, then we define
\begin{itemize}
    \item \textit{Expected AoI loss} as $(1-Q_a)\delta(t)^2$, where $Q_a$ is the success probability defined in Eq.~\eqref{eq: success probability}.
    \item \textit{Expected recruitment costs} as $p_ac_a$, where 
    rev{\begin{equation}
        p_ac_a = \left\{\begin{array}{ll}
            0,  &  a=N,\\
            p_ac_a,  & a=L\text{ or } H,\\
            p_Lc_L+p_Hc_H, & a=B.
        \end{array}\right. 
    \end{equation}
    Notice that for action $a=B$, the expected recruitment cost is $p_Bc_B=p_Lc_L+p_Hc_H$ as the arrivals of \Tl~and \Th~vehicles are independent.}
\end{itemize}

Finally, we can characterize the company's cost function as follows. Suppose the AoI is $\delta=\delta(t)$ and the chosen action is $a\in\mathcal{A}$.
\begin{equation}\label{eq: cost function}
    \begin{split}
        u(\delta,a)
        =\left\{
        \begin{array}{ll}
            \beta\delta^2, & a=N,\\
            (1-\beta) p_Lc_L + \beta(1-Q_L)\delta^2,  & a=L,  \\
            (1-\beta) p_Hc_H + \beta(1-Q_H)\delta^2, & a=H,  \\
            (1-\beta)p_Bc_B + \beta(1-Q_B)\delta^2, & a=B,  \\    
        \end{array}\right.
    \end{split}    
\end{equation}
where \rev{$\beta$} is the weighted factor to balance the AoI loss and recruitment costs. The larger $\beta$ is, the more emphasis we place on the freshness.


\subsection{Problem Formulation} \label{subsec: formulation}
In this subsection, we formulate the average cost minimization problem as an average-cost MDP \rev{$\Lambda$}.

\begin{itemize}
    \item \textbf{States}: \rev{$S(t)=\delta(t)\in\mathcal{S}$, \emph{i.e.,} the AoI in time slot $t$. Let $\mathcal{S}=\Nsetp$ be the state space, which is countably infinite.}
    \item \textbf{Actions}: \rev{$a\in\mathcal{A}$, where $\mathcal{A}=\{N,L,H,B\}$ is the action space.}
    \item \textbf{State transition probability}: \rev{$P_{ss'}(a)$, \emph{i.e.,} the state transition probability from state $s$ to $s'$ under the action $a$.} Suppose the state $s$ is $\delta(t)$, the non-zero $P_{ss'}(a)$ is as follows\\
    \vspace{-0.3cm}
    \begin{equation} 
    P_{ss'}(a)=\left\{
    \begin{array}{ll}
        1,&s'=\delta(t)+1,a=N,\\            
        Q_a, &s'=1,a\not=N,\\
        1-Q_a,&s'=\delta(t)+1,a\not=N.\\
    \end{array}\right.
    \end{equation}
    \item \textbf{Cost}: \rev{$u(\delta=\delta(t), a)$ as in Eq.~\eqref{eq: cost function}, \emph{i.e.,}  the company's cost if she takes the action $a$ at slot $t$ under state $S(t)=\delta(t)$, }
    \item \textbf{Policy}: \rev{$\phi=\{\psi_1(\cdot),\psi_2(\cdot),...,\psi_{\infty}(\cdot)\}$, where $\psi_t(\cdot)$ is a function mapping from $\mathcal{S}$ to $\mathcal{A}$.}
\end{itemize}

\rev{The company's average cost under a policy $\phi$ is the value function $J(\phi)$ of the MDP $\Lambda$ under this policy as follows.}

\begin{equation}\label{Average cost}
    J(\phi)=\lim_{T\to\infty} \frac{1}{T} \mathbb{E}^\phi\left[\sum_{t=1}^T u\left(\delta=\delta(t),a=\psi_t(S(t))\right)\right].
\end{equation}

Our objective is to minimize the company's average cost $J(\phi)$ by finding the optimal policy $\phi^*$. 

\rev{According to \cite{puterman2014markov}, identifying the optimal deterministic stationary policy is sufficient for finding the optimal policy of MDP $\Lambda$.} 

\rev{\begin{definition}
    A \emph{deterministic stationary} policy is a policy that selects actions with certainty based only on the current state of the system.
\end{definition}

Note that we can regard the optimal policy $\phi^*$ as a function mapping from state space $\mathcal{S}$ to the action space $\mathcal{A}$.}


\section{Main Results}\label{sec: main results}
In the section, we characterize the optimal policy and provide an \rev{efficient} algorithm to find it. Subection~\ref{subsec: Characterization of the optimality}, shows that the optimal policy \rev{is threshold-type age-dependent}, and also provides some counter-intuitive insights. In Section~\ref{subsec: BRVIA}, we \rev{propose the BRVI algorithm to efficiently find the optimal policy by characterizing the upper bounds of optimal thresholds.}

\rev{We first define the \textit{cost-effectiveness}, which is useful to determine the structure of the optimal policy and to present the upper bounds of the thresholds.}
\begin{definition}
    \rev{The cost-effectiveness of} action $a\in\mathcal{A}/\{N\}$ by $\eta_a$ is the ratio between the expected recruitment costs and the success probability of this action.
    \begin{equation}
        \eta_a = \frac{p_a c_a}{Q_a}.
    \end{equation}
\end{definition}

For example, the cost-effectiveness of action $L$ is $\frac{p_Lc_L}{Q_L}=\frac{p_Lc_L}{p_Lr_L}=\frac{c_L}{r_L}$. \rev{Note that the smaller the cost-effectiveness, the more cost-effective the corresponding action becomes.}

\subsection{Characterization of the optimality}\label{subsec: Characterization of the optimality}

In this subsection, we characterize four \rev{possible} structures of the optimal policy in Theorem~\ref{theorem: 1} and illustrate some counter-intuitive insights.

\rev{
\begin{theorem}\label{theorem: 1}
    There exists a threshold-type age-dependent stationary deterministic optimal policy $\phi^*$ of the MDP $\Lambda$. Depending on the parameter setting, such an optimal policy falls into one of the following four structures:
    \begin{enumerate}
        \item \textbf{LH structure} (in Fig.~\ref{fig: LH structure}): if $\frac{Q_L}{Q_H} \leq 1$ and $\frac{1-Q_H}{1-Q_L}<\frac{\eta_L}{\eta_H}<1$, the optimal policy includes actions in the order of \None, \Ltype, \Htype, \Both, with three thresholds satisfying $1 \leq \delta_L \leq \delta_H \leq \delta_{B}\in\Nsetp$.
        \item \textbf{HL structure} (in Fig.~\ref{fig: HL structure}): if $\frac{Q_L}{Q_H} > 1$ and $1<\frac{\eta_L}{\eta_H}<\frac{1-Q_H}{1-Q_L}$, the optimal policy includes actions in the order of \None, \Htype, \Ltype, \Both, with three thresholds satisfying $1\leq \delta_H \leq \delta_L \leq \delta_{B}\in \Nsetp$.        
        \item \textbf{None-L structure} (in Fig.~\ref{fig: None-L structure}): if $\frac{\eta_L}{\eta_H} \geq \max\{1,\frac{1-Q_H}{1-Q_L}\}$, the optimal policy includes actions in the order of \None, \Htype, \Both, with two thresholds satisfying $1 \leq \delta_H \leq \delta_{B} \in \Nsetp$.        
        \item \textbf{None-H structure} (in Fig.~\ref{fig: None-H structure}): if $\frac{\eta_L}{\eta_H}<\min\{1,\frac{1-Q_H}{1-Q_L}\}$, the optimal policyincludes actions in the order of \None, \Ltype, \Both, with two thresholds satisfying $1 \leq \delta_L \leq \delta_{B} \in \Nsetp$.
    \end{enumerate}
    \begin{figure}[h]
        \centering
        \subcaptionbox{\rev{LH structure}\label{fig: LH structure}}{\includegraphics[width = .5\linewidth]{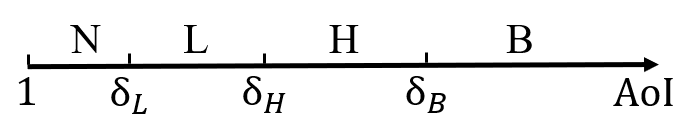}}\hfill
        \subcaptionbox{HL structure\label{fig: HL structure}}{\includegraphics[width = .5\linewidth]{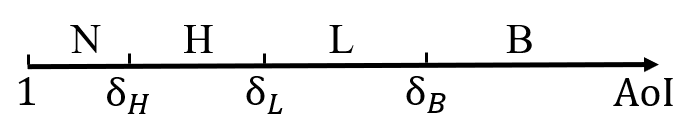}}
        \quad
        \subcaptionbox{None-L structure\label{fig: None-L structure}}{\includegraphics[width = .5\linewidth]{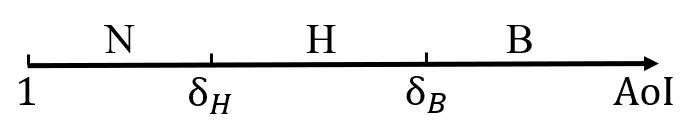}}\hfill
        \subcaptionbox{HL structure\label{fig: None-H structure}}{\includegraphics[width = .5\linewidth]{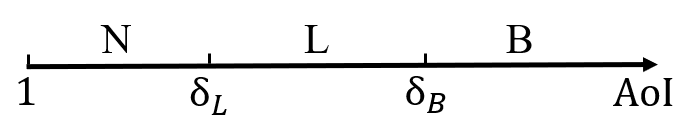}}
        \caption{\rev{Four possible structures of the optimal policy.}}
        \label{fig: possible structures}
    \end{figure}
    \vspace{-0.4cm}
\end{theorem}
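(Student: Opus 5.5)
We would proceed through the average-cost Bellman equation and a monotonicity argument on the relative value function. First, we verify that an optimal stationary deterministic policy exists and that the average-cost optimality equation
\begin{equation*}
h(\delta)+J^*=\min_{a\in\mathcal{A}}\Bigl\{u(\delta,a)+Q_a h(1)+(1-Q_a)h(\delta+1)\Bigr\}
\end{equation*}
has a solution. For this we check Sennott's conditions for countable-state MDPs. Consider the stationary policy that always plays $B$; its AoI is geometric with parameter $Q_B>0$, so the AoI has a finite second moment and this policy has finite average cost. The discounted value functions $V_\gamma(\delta)$ are nondecreasing in $\delta$, which is immediate by coupling since the cost is increasing in $\delta$. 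Moreover, $V_\gamma(\delta)-V_\gamma(1)$ is bounded above by the expected cost of reaching state $1$ under the always-$B$ policy. Taking the vanishing-discount limit yields $h$ and $J^*$ with $h$ nondecreasing. I would also use relative value iteration to show that $\Delta h(\delta):=h(\delta+1)-h(1)$ is nondecreasing. This follows by induction on the value-iteration iterates, since $\delta^2$ is increasing and the minimum of functions with this monotonicity property preserves it.

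Next, subtract the $N$ term $\beta\delta^2+h(\delta+1)$ from each bracket. The comparison quantity for action $a\neq N$ is then
\begin{equation*}
g_a(\delta)=(1-\beta)p_ac_a-Q_a\bigl(\beta\delta^2+\Delta h(\delta)\bigr)=Q_a\bigl((1-\beta)\eta_a-W(\delta)\bigr),
\end{equation*}
where $W(\delta):=\beta\delta^2+\Delta h(\delta)$ is strictly increasing in $\delta$. The optimal action at age $\delta$ therefore depends on $\delta$ only through the scalar $W=W(\delta)$. It minimizes a family of functions $f_N=0$ and $f_a(W)=Q_a((1-\beta)\eta_a-W)$, each affine in $W$ with slope $-Q_a$. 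Because the lower envelope of finitely many lines is concave, each action is optimal on an interval of $W$. Since $Q_N=0$, and since $Q_B> \max(Q_L,Q_H)$ with $p_Bc_B=p_Lc_L+p_Hc_H$, the actions appear in order of increasing slope magnitude. Thus $N$ is used for small $W$ and $B$ for large $W$, with $L$ and $H$ ordered between them according to whether $Q_L\le Q_H$. Pulling back through the monotone map $\delta\mapsto W(\delta)$ gives thresholds $\delta_L$, $\delta_H$ and $\delta_B$ in $\Nsetp$, with ties broken toward the smaller-index action.

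Finally, the four cases come from determining which of the lines $f_L$ and $f_H$ actually touch the lower envelope. Consider first the $L$ versus $H$ crossing. We have $f_L<f_H$ exactly when $W(Q_H-Q_L)<(1-\beta)(p_Lc_L-p_Hc_H)$. The intercepts with $f_N$ are at $W=(1-\beta)\eta_a$, and the crossings with $f_B$ are at
\begin{equation*}
W=(1-\beta)\,\frac{p_Bc_B-p_ac_a}{Q_B-Q_a}.
\end{equation*}
Substituting $Q_B=Q_L+Q_H-Q_LQ_H$ gives $Q_B-Q_L=Q_H(1-Q_L)$. Hence the $B$ versus $L$ crossing is at $(1-\beta)\eta_H/(1-Q_L)$, and symmetrically the $B$ versus $H$ crossing is at $(1-\beta)\eta_L/(1-Q_H)$. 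For $L$ to appear before $H$ we need $\eta_L<\eta_H$; this is the order of the intercepts with $f_N$. For $H$ to still appear before $B$, the $L$--$H$ crossing must lie before the $L$--$B$ crossing, which reduces to $\eta_L/\eta_H>(1-Q_H)/(1-Q_L)$ together with $Q_L\le Q_H$. If instead $\eta_L/\eta_H<\min\{1,(1-Q_H)/(1-Q_L)\}$, the line $f_H$ lies above the envelope everywhere, which gives the None-H structure. The remaining two cases follow by the symmetric computation.

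I expect the main obstacle to be the monotonicity of $\Delta h$. It is needed so that $W$ is increasing in $\delta$ and the intervals in $W$ become age thresholds; here the value-iteration induction has to be carried out carefully. A second obstacle is the algebra showing that the envelope conditions match exactly the stated inequalities, including the boundary cases where equality holds.
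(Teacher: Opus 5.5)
Your proposal is correct and follows essentially the paper's route. The paper also reduces each comparison with $N$ to $D^\rho(\delta,N,a)=Q_a\bigl(F(\delta)-(1-\beta)\eta_a\bigr)$, where $F(\delta)=\rho h_\rho(\delta+1)+\beta\delta^2$ is increasing because the value function is monotone, and then orders the actions by comparing these affine functions; it differs only in working in the $\rho$-discounted problem and taking the average-cost policy as a limit of discounted-optimal policies, and in leaving its case conditions as marginal cost-effectiveness comparisons rather than computing the crossings $(1-\beta)\eta_H/(1-Q_L)$ and $(1-\beta)\eta_L/(1-Q_H)$ explicitly as you do. Two small points: monotonicity of $\Delta h(\delta)=h(\delta+1)-h(1)$ is just monotonicity of $h$, which your coupling step already gives, so that ``main obstacle'' disappears; and $f_L<f_H$ holds iff $W(Q_H-Q_L)<(1-\beta)(p_Hc_H-p_Lc_L)$, so your right-hand side has the sign flipped, although your later crossing analysis is correct.
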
}

Due to the limited space, all the proofs are provided in the online appendix\cite{appendix}.

\begin{figure}[h]
    \centering
    \vspace{-0.4cm}
    \includegraphics[width = .6\linewidth]{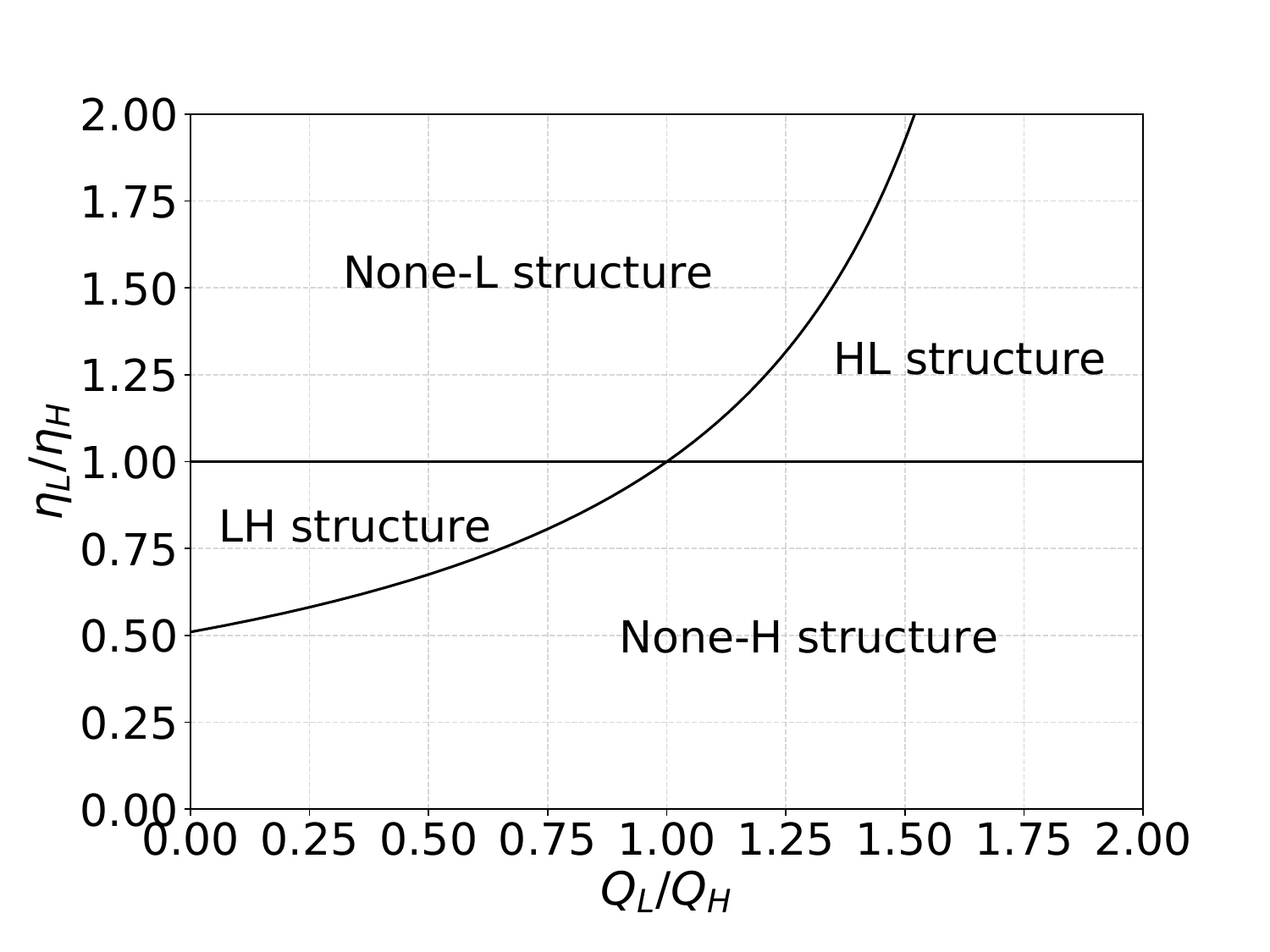}
    \caption{\rev{The distribution of the optimal policy structures.} \label{fig: dis of optimal}}
    \vspace{-0.4cm}
\end{figure}

\rev{Based on Theorem \ref{theorem: 1}, we can determine the structure of the optimal policy by calculating the success probability $Q_L, Q_H$ and the cost-effectiveness $\eta_L,\eta_H$} as shown in Fig.~\ref{fig: dis of optimal}. Here, for illustrative purposes, we set $Q_H=0.49$ and $\eta_H=5.71$ without loss of generality.

\rev{Next, we investigate the impact of the vehicles' parameters, such as arrival probabilities, on the thresholds of the optimal policy. By uncovering these connections, we reveal the underlying rationale for the optimal policy, which may apply to broader scenarios. The upcoming presentation will offer detailed analyses.}

\subsubsection{\rev{The impact of arrival probabilities on thresholds}} \rev{To examine this impact, we conducted simulations as shown in Fig.~\ref{fig: arrival prob}.} \rev{In Figs.~\ref{fig: p_L} and \ref{fig: p_H}, the x-axis represents arrival probabilities for \Tl~and \Th~vehicles, respectively, while the y-axis indicates the AoI. The thresholds $\delta_L$ and $\delta_H$ correspond to the AoI at which the company initiates recruitment of \Tl~and \Th~vehicles alone, respectively. For illustrative purposes, we use default parameters as follows: $\beta=0.0001, p_L=0.5, p_H=0.5, r_L=0.6, r_H=0.7, c_L=2, c_H=2.5$ without loss of generality. Note that we use the same default parameters in Fig.~\ref{fig: cost}.}

\begin{figure}[h]
    \centering
    \vspace{-0.4cm}
    \subcaptionbox{\label{fig: p_L}}{\includegraphics[width = .5\linewidth]{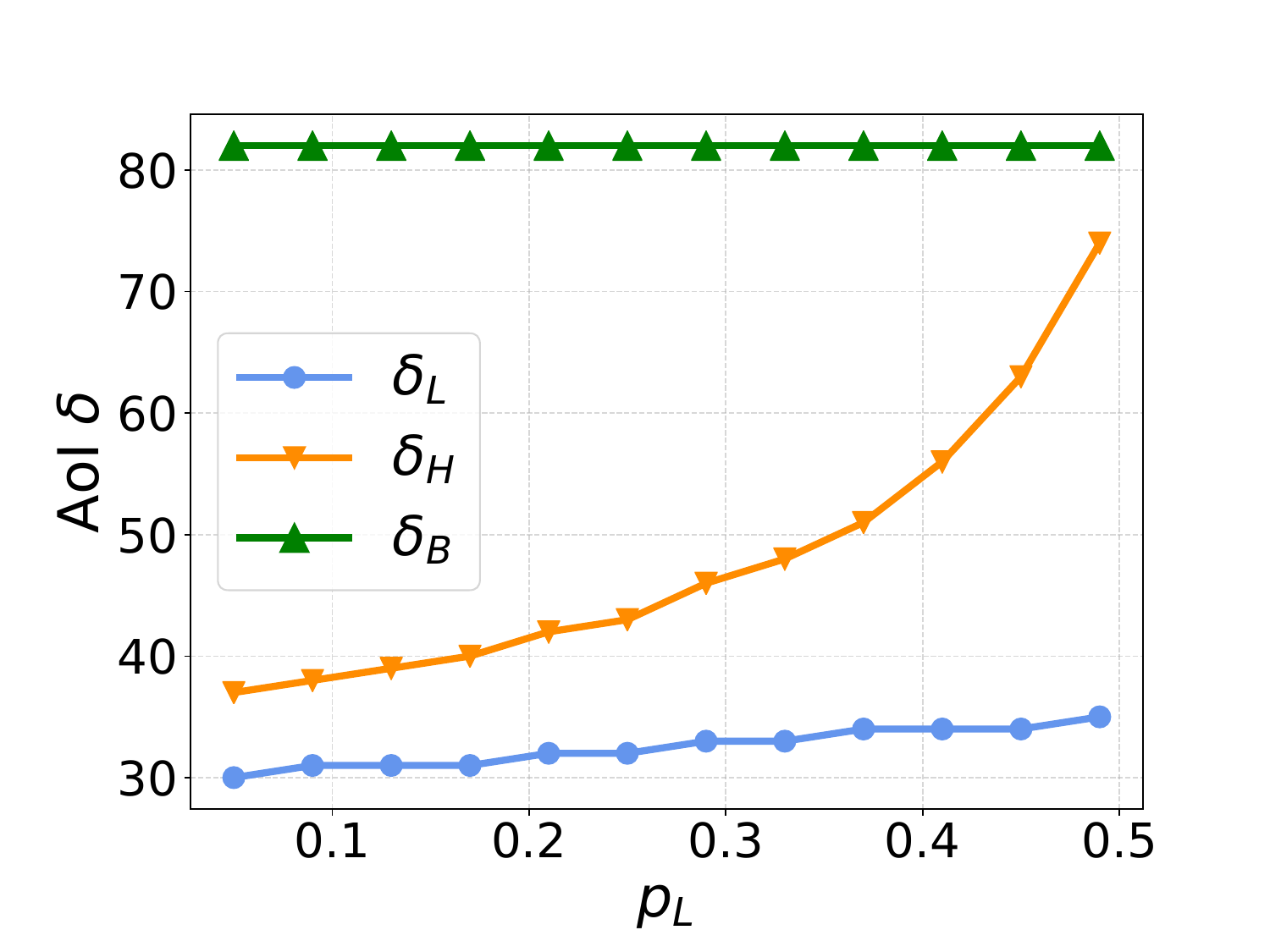}}\hfill
    \subcaptionbox{\label{fig: p_H}}{\includegraphics[width = .5\linewidth]{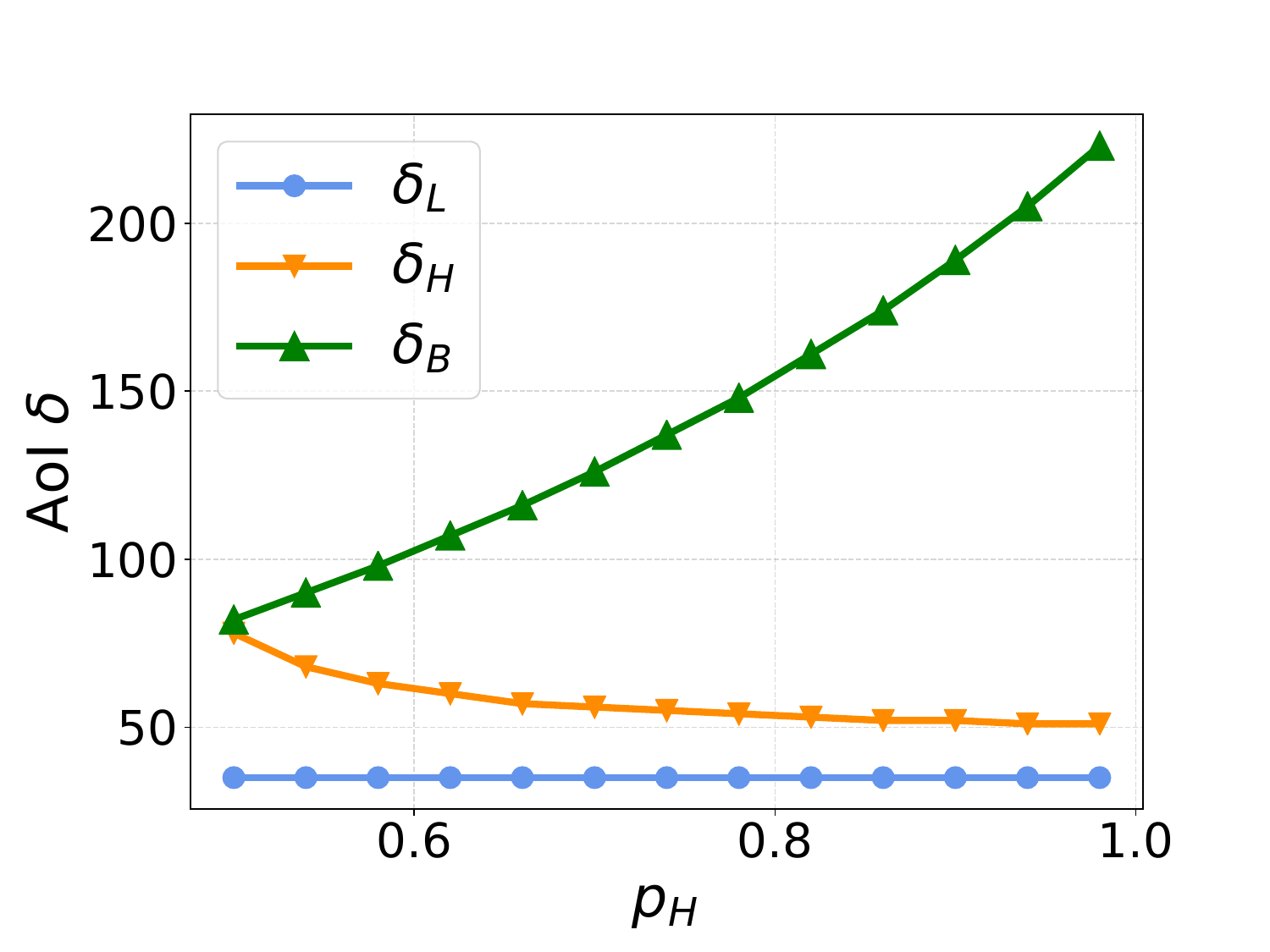}}
    \caption{The thresholds in LH structure vary as the arrival probability increases. Fig.~\ref{fig: p_L} and \ref{fig: p_H} manipulate the arrival probability $p_L$ and $p_H$ of \Tl~and \Th~vehicles, respectively.}
    \label{fig: arrival prob}
    \vspace{-0.4cm}
\end{figure}

\begin{obs}
    \rev{Fig.~\ref{fig: p_L} shows that in LH structure, $\delta_L$ increases as the arrival probability $p_L$ of \Tl~vehicles rises. Conversely, Fig.~\ref{fig: p_H} demonstrates that in LH structure, as the arrival probability $p_H$ of \Th~vehicles increases, the threshold $\delta_H$ decreases.}
\end{obs}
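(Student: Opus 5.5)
Since this Observation is stated from simulation, the plan is to recover it analytically from the Bellman optimality equation of $\Lambda$, using the threshold structure guaranteed by Theorem~\ref{theorem: 1}. Let $g$ be the optimal average cost and $h(\cdot)$ a relative value function. Define $D(\delta)=\beta\delta^2+h(\delta+1)-h(1)$, the marginal benefit of refreshing the map at age $\delta$. In each state the action is chosen to minimize $u(\delta,a)+(1-Q_a)h(\delta+1)+Q_a h(1)$.

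\textbf{Step 1: threshold conditions.} Comparing \None~with \Ltype~shows that \Ltype~is weakly better exactly when $Q_L D(\delta)\ge(1-\beta)p_Lc_L$, that is, when $D(\delta)\ge(1-\beta)\eta_L$. Hence
\[
\delta_L=\min\{\delta: D(\delta)\ge(1-\beta)\eta_L\}.
\]
Comparing \Ltype~with \Htype~(the next switch in the LH structure) shows that \Htype~is weakly better exactly when $(Q_H-Q_L)D(\delta)\ge(1-\beta)(p_Hc_H-p_Lc_L)$. Hence
\[
\delta_H=\min\{\delta: D(\delta)\ge(1-\beta)R\},\qquad R=\frac{p_Hc_H-p_Lc_L}{Q_H-Q_L}.
\]
Here $Q_H\ge Q_L$ holds in the LH structure. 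I will check that $D$ is increasing in $\delta$, which follows from the convexity of $\delta^2$ and the monotonicity of $h$ used in proving Theorem~\ref{theorem: 1}. The two thresholds are therefore the first crossings of the fixed curve $D$ over the levels $(1-\beta)\eta_L$ and $(1-\beta)R$.

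\textbf{Step 2: dependence on $p_L$.} The level $\eta_L=c_L/r_L$ does not depend on $p_L$, so $p_L$ affects $\delta_L$ only through $D$. I will show that $D(\delta)$ is pointwise nonincreasing in $p_L$. Under a fixed threshold policy, $h(\delta+1)-h(1)$ equals the expected cost accumulated until the next reset, minus $g$ per slot, relative to starting fresh. Concretely, I will write $h$ through the renewal recursion $h(\delta)=u(\delta,a)-g+(1-Q_a)h(\delta+1)+Q_ah(1)$ with $h(1)=0$, and solve it from the tail, where the action \Both~is used at every age.

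A larger $p_L$ raises $Q_L$ and $Q_B$, so future resets arrive sooner and $D$ falls. I will argue this by coupling, taking the optimal policy for the smaller $p_L$ as a suboptimal policy under the larger $p_L$. A smaller $D$ crosses the fixed level later, so $\delta_L$ weakly increases.

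\textbf{Step 3: dependence on $p_H$, the main obstacle.} Differentiating $R$ with respect to $p_H$ gives a numerator of $r_Hp_Lc_L-c_HQ_L=p_Lr_Lr_H(\eta_L-\eta_H)$. This is negative because $\eta_L<\eta_H$ in the LH structure, so the level $(1-\beta)R$ decreases in $p_H$, which pushes $\delta_H$ down.

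However, $D$ also decreases in $p_H$ by the argument of Step 2, which pushes $\delta_H$ up. Proving the Observation therefore requires showing that the drop in the level dominates the drop in $D$ near the crossing point. I would first use the closed-form $D$ obtained from the tail recursion, together with the upper bounds on the thresholds that underlie the BRVI algorithm, to bound $\partial D/\partial p_H$ by a multiple of $\beta\delta_B^2$. I would then compare this bound with $|\partial R/\partial p_H|$.

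I expect the inequality to hold only in a parameter regime, for example when $\beta$ is small as in the default setting $\beta=10^{-4}$. Failing a clean general bound, I would state the monotonicity of $\delta_H$ under that regime condition and confirm it numerically.
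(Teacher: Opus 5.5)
The paper gives no analytic proof of this Observation. It is read off Fig.~\ref{fig: arrival prob} at the default parameters, and the only justification offered is the heuristic that $\gamma_{L,H}$ decreases in $p_H$. Your Step~1 is correct: it is the average-cost version of the paper's $D^\rho(\delta,N,L)=Q_L(F(\delta)-(1-\beta)\eta_L)$ and $D^\rho(\delta,L,H)=(Q_H-Q_L)(F(\delta)-(1-\beta)\gamma_{L,H})$, and your $R$ is exactly $\gamma_{L,H}$. Your formula $\partial\gamma_{L,H}/\partial p_H=p_Lr_Lr_H(\eta_L-\eta_H)/(Q_H-Q_L)^2<0$ makes the paper's heuristic precise, and you are right that the heuristic ignores the simultaneous shift of the curve $D$ (the paper's $F$). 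As an analytic proof, though, the proposal has two gaps. \emph{Step~2:} the coupling does not give pointwise monotonicity of $D$ in $p_L$. Running the old optimal policy under a larger $p_L$ does not preserve costs, since each slot on action $L$ now costs $(1-\beta)p_L'c_L$; only the cost per success $\eta_L$ is invariant. More fundamentally, even a cost-preserving randomized emulation (play $L$ with probability $p_L/p_L'$) only yields $g'\le g$ and one-sided bounds on values. But $D(\delta)=\beta\delta^2+h(\delta+1)-h(1)$ is a difference of relative values in which each slot is charged $-g$, so a smaller gain pushes $h$ up, and relative values of optimal and suboptimal policies are not ordered. For a fixed policy the claimed monotonicity actually fails. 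In the $L$-only chain with a fixed threshold $T\ge 2$, your tail recursion gives $h(T)\to\eta_L\,q(T-1)/(q(T-1)+1)$ as $\beta\to0$, which increases in $q=Q_L$. So a valid argument must use the optimality of the threshold (for instance, the indifference condition at $\delta_L$ together with the explicit tail solution), not a policy-wise comparison.

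\emph{Step~3} is left open, and the regime you suggest is not the relevant one. The bound you plan to use satisfies $\beta\hat\delta_B^2\approx(1-\beta)\eta_L/(1-Q_H)$ by Lemma~\ref{lemma: structure (1)}, which does not vanish as $\beta\to0$. So a small $\beta$ does not make the shift of $D$ negligible. What makes the level effect dominate at the default point is that $Q_H-Q_L=0.05$ is small, which gives $\partial\gamma_{L,H}/\partial p_H=-20$. Near $p_H=1$ the slope is only about $-0.31$, and it is no longer evident that it beats the shift of $D$. Your proposal therefore ends where the paper does, with numerical confirmation at the stated parameters. That is acceptable for an Observation, but Steps~2 and~3 should not be presented as a proof of monotonicity.
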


We first define the \emph{marginal cost-effectiveness} to help explain the above observation.
\begin{definition}
    \rev{The marginal cost-effectiveness of changing action from $a_1$ to $a_2$ is the ratio between the additional recruitment costs and the success probability increase.}
    \begin{equation}
        \gamma_{a_1,a_2} = \frac{p_{a_2}c_{a_2}-p_{a_1}c_{a_1}}{Q_{a_2}-Q_{a_1}}
    \end{equation}
\end{definition}

This metric assesses the cost-effectiveness of a company's change of action. A smaller value for marginal cost-effectiveness indicates that the company is more inclined to alter its course of action. For instance, if $Q_{a_2}>Q_{a_1}>0$ and $\gamma_{a_1,a_2}$ is very small, then the company is likely to opt for action $a_2$ instead of $a_1$.

The company may choose to delay recruitment when vehicles arrive frequently, as the threshold for initiating recruitment of \Tl~vehicles should increase with the corresponding arrival probability $p_L$, as depicted in Fig.~\ref{fig: p_L}. However, this pattern does not hold for the threshold $\delta_H$ and \Th~vehicles in LH structure. This is because the marginal cost-effectiveness $\gamma_{L, H}$ decreases as $p_H$ increases, which implies that changing to take action \Htype~from action \Ltype~becomes more cost-effective. Hence, as shown in Fig.~\ref{fig: p_H}, though success probability $Q_H$ increases in $p_H$, the company becomes more willing to change to take action \Htype.

\begin{insight}
    \rev{When vehicles of a certain type arrive more frequently, the company may consider initiating corresponding vehicle recruitment earlier. This should be done if the marginal cost-effectiveness of changing to the action decreases, \emph{i.e.,} recruiting vehicles of this type alone.}
\end{insight}

\subsubsection{The impact of operational costs on thresholds} We show this relationship in Fig.~\ref{fig: cost}, where the x-axis represents operational costs of \Tl~and \Th~vehicles (from left to right, respectively) and the y-axis indicates the AoI. Note that the threshold $\delta_B$ denotes the AoI at which the company starts recruiting both types of vehicles.

\begin{figure}[h]
    \centering
    \vspace{-0.4cm}
    \subcaptionbox{\label{fig: c_L}}{\includegraphics[width = .5\linewidth]{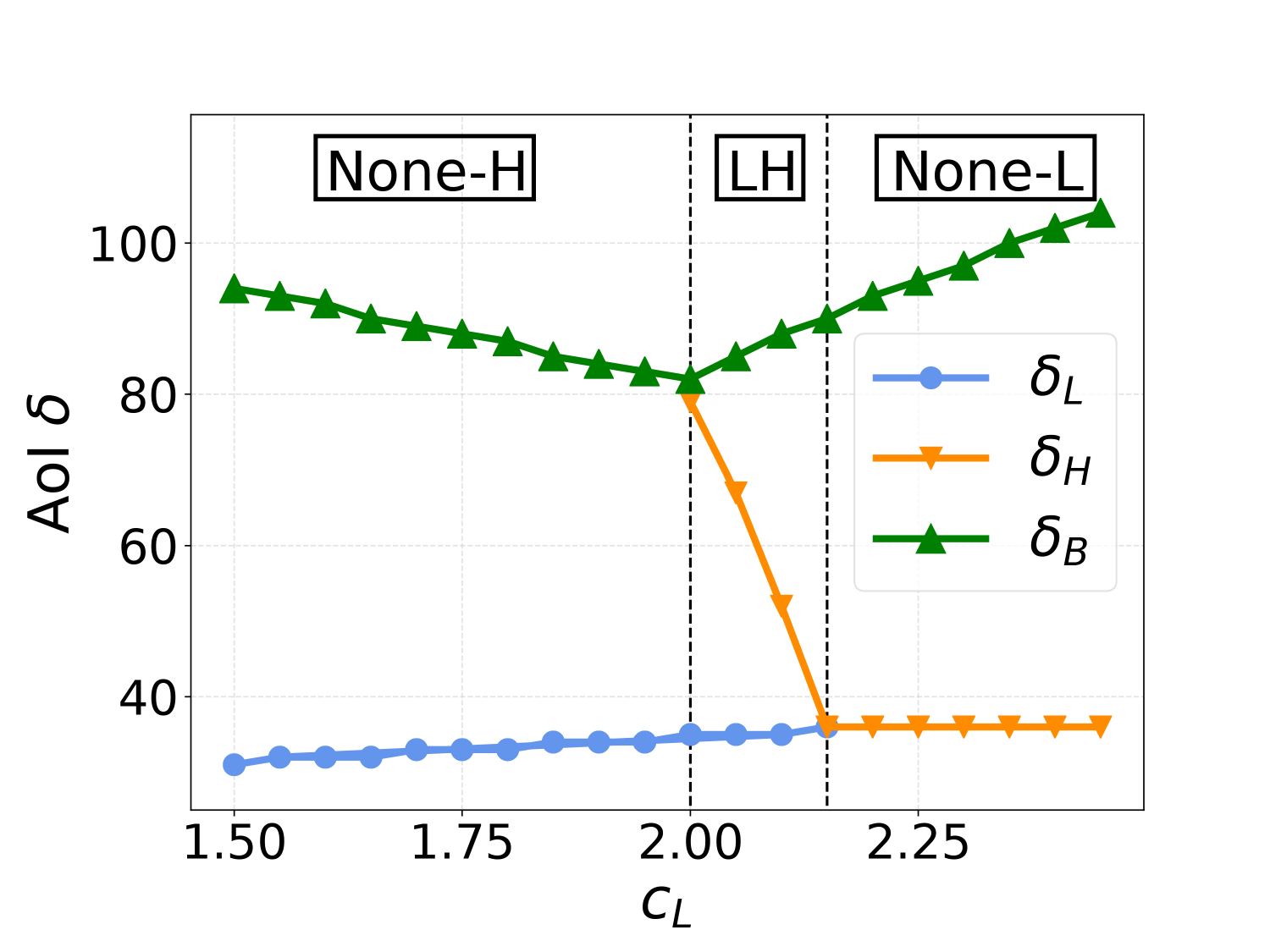}}\hfill
    \subcaptionbox{\label{fig: c_H}}{\includegraphics[width = .5\linewidth]{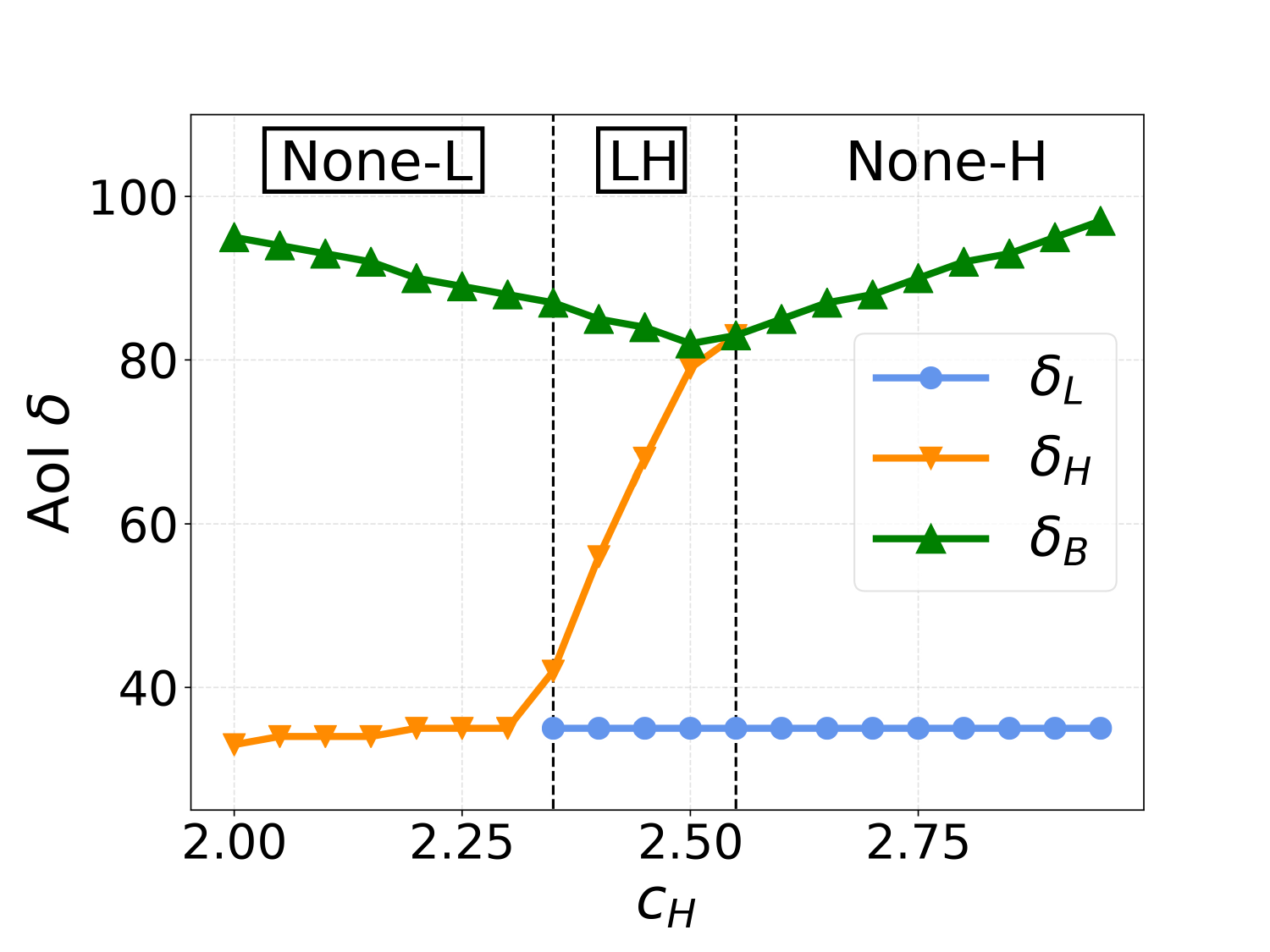}}
    \caption{The thresholds vary as the operational cost increases. Fig.~\ref{fig: c_L} and \ref{fig: c_H} depict the optimal policy transitioning between None-H to LH to None-L and None-L to LH to None-H structures, respectively.}
    \label{fig: cost}
    \vspace{-0.4cm}
\end{figure}

\begin{obs}
    Figs.~\ref{fig: c_L} and \ref{fig: c_H} show that as the operational cost of vehicles increases, the threshold $\delta_B$ of recruiting vehicles of both types will first decrease.
\end{obs}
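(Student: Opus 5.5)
Observation 2 is a numerical statement, so I would not try to prove it for all parameters. Instead I would give an analytical mechanism that forces $\delta_B$ to decrease while the operational cost increases \emph{within a fixed structure}, namely the structure in which the action chosen just below $\delta_B$ has its own cost unchanged by the perturbation. That is the ``first'' regime in Figs.~\ref{fig: c_L} and \ref{fig: c_H}. The later increase would be attributed to the structure switch predicted by Theorem~\ref{theorem: 1}.

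\textbf{Step 1 (threshold characterization).} I would take the stationary deterministic threshold policy of Theorem~\ref{theorem: 1}, the optimal average cost $g$, and a relative value function $h$ solving the average-cost Bellman equation
\[
h(\delta)+g=\min_{a\in\mathcal{A}}\big\{u(\delta,a)+Q_a h(1)+(1-Q_a)h(\delta+1)\big\}.
\]
Let $a^-$ be the action used immediately below $\delta_B$: $L$ in the None-H and LH structures ordered as in Fig.~\ref{fig: possible structures}, and $H$ in the None-L structure. Comparing $B$ with $a^-$ gives $B\succeq a^-$ iff
\[
D(\delta):=\beta\delta^2+h(\delta+1)-h(1)\ \ge\ (1-\beta)\gamma_{a^-,B}.
\]
Here $D$ is nondecreasing in $\delta$, which is the monotonicity underlying Theorem~\ref{theorem: 1}. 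Hence $\delta_B=\min\{\delta: D(\delta)\ge (1-\beta)\gamma_{a^-,B}\}$.

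\textbf{Step 2 (the right-hand side is constant in the relevant regime).} Take the case of increasing $c_L$ in the None-H structure. There $a^-=L$ and $\gamma_{L,B}=p_Hc_H/(Q_B-Q_L)$, which does not depend on $c_L$. The case of increasing $c_H$ in the None-L structure is symmetric, with $\gamma_{H,B}=p_Lc_L/(Q_B-Q_H)$ independent of $c_H$. So within these regimes the threshold target for $D$ is fixed, and any change in $\delta_B$ must come through $D$.

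\textbf{Step 3 (the left-hand side increases with the cost).} I would show that for every $\delta$ the difference $h_c(\delta+1)-h_c(1)$ is nondecreasing in the perturbed cost $c$. I would work in the $\alpha$-discounted problem and pass to the limit $\alpha\to1$, which is standard for this reset-type chain (see \cite{puterman2014markov}).

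The intended argument is a coupling. The chain started at $\delta+1$ and the chain started at $1$ coincide after the first reset, so $V_c(\delta+1)-V_c(1)$ is the excess cost accumulated before that reset. The state started at $\delta+1$ sits at larger AoI, and by Theorem~\ref{theorem: 1} larger AoI means recruiting. Raising $c$ therefore raises the excess cost of the higher trajectory at least as much as that of the lower one.

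To make this rigorous, I would run value iteration and prove by induction that $V^{(n)}_c(\delta+1)-V^{(n)}_c(1)$ is nondecreasing in $c$. The inductive step uses the threshold form at stage $n$ and the fact that the state-$1$ trajectory uses $N$ at least as long as the other trajectory does. Consequently $D$ shifts upward while the target stays fixed, so the crossing index $\delta_B$ weakly decreases.

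\textbf{Main obstacle.} The inductive step in Step 3 is the hard part. The minimum over actions interacts with the $c$-dependence, and when thresholds move, the action at $\delta+1$ may change between value-iteration stages. I would first try an envelope-type argument: fix the policy optimal at $c$, note that its cost difference is affine and nondecreasing in $c$, and then use optimality at $c'$ to sandwich the true difference.

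If that fails, I would fall back on explicit renewal formulas. For a threshold policy, $g$ and $h$ are ratios of finite sums over one renewal cycle, and in the discounted version monotonicity can be checked directly from those formulas.

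\textbf{Later increase.} Finally, I would explain the eventual reversal of $\delta_B$ as follows. Once $c_L$ crosses the boundary $\eta_L/\eta_H=\frac{1-Q_H}{1-Q_L}$ of Theorem~\ref{theorem: 1}, the action just below $\delta_B$ becomes $H$. The target $\gamma_{H,B}$ then grows linearly in $c_L$, so the monotonicity of $\delta_B$ reverses.
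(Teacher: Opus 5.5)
Your argument is sound in outline and takes a genuinely different route from the paper. The paper gives no proof here: it reads the decrease off the figures and offers a heuristic. In the None-H regime $\gamma_{L,B}$ does not depend on $c_L$, while the gap between $\eta_B$ and $\eta_L$ shrinks as $c_L$ grows ($\partial\eta_B/\partial c_L=p_L/Q_B<1/r_L=\partial\eta_L/\partial c_L$), so $B$ becomes relatively more attractive than $L$. You keep the first ingredient but notice that $\eta_B-\eta_L$ never enters the switching condition $\beta\delta^2+h(\delta+1)-h(1)\ge(1-\beta)\gamma_{L,B}$. With the right-hand side frozen, $c_L$ can move $\delta_B$ only through the relative value function. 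The operative fact is that the set of states paying $c_L$, namely $\{\delta\ge\delta_L\}$, is an up-set in the None-H structure, so a higher $c_L$ penalizes high AoI more than low AoI. The paper's explanation is short and phrased in static cost-effectiveness ratios. Yours turns it into an actual weak-monotonicity result on the None-H (resp.\ None-L) regime. It also shows why the trend can turn: at the None-H/LH boundary the binding target switches from $\gamma_{L,B}$ to $\gamma_{H,B}=\eta_L/(1-Q_H)$, which grows with $c_L$. There both sides move, so the reversal stays heuristic, but the observation only asserts the initial decrease.

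Two points need tightening.

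First, phrase Step~1 via $B$ versus $L$ rather than ``the action just below $\delta_B$'', which is $N$ whenever the $L$-band is empty. The inequality $\gamma_{H,B}=\eta_L/(1-Q_H)\le\eta_H/(1-Q_L)=\gamma_{L,B}$ is precisely the second half of the None-H condition. Also $\eta_B\le\gamma_{L,B}$, because $\eta_B$ is a convex combination of $\eta_L$ and $\gamma_{L,B}$ with $\eta_L<\gamma_{L,B}$. Hence $B$ is optimal at $\delta$ iff your $D(\delta)\ge(1-\beta)\gamma_{L,B}$, with no edge cases.

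Second, the finite sandwich in Step~3 fails. Fixing one policy bounds $V_{c'}(\delta+1)$ and $V_{c'}(1)$ from the same side, which says nothing about their difference. Run the envelope argument infinitesimally instead:
\begin{itemize}
\item In the discounted problem, $c\mapsto V_c(y)$ is an infimum of affine functions, hence concave, locally Lipschitz and a.e.\ differentiable.
\item Wherever it is differentiable, its derivative equals that of the policy optimal at $c$ (one policy for all initial states).
\item For a None-H threshold policy with $\delta+1\le\delta_L$, the fixed-policy derivative of $V(\delta+1)-V(1)$ is $(1-\alpha^{\delta})\,\partial_cV(\delta+1)\ge0$, since the chain from $1$ reaches $\delta+1$ deterministically without paying $c_L$.
\item Otherwise it is $(1-\mathbb{E}\alpha^{\tau})\big(\tfrac{(1-\beta)p_L}{1-\alpha}-\partial_cV(1)\big)\ge0$, with $\tau$ the first reset time from $\delta+1$.
\end{itemize}
Integrating in $c$ gives Step~3 without any value-iteration induction.
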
    

The intuition is that the company may tend to delay recruitment when faced with higher operational costs. For example, Fig.~\ref{fig: c_L} illustrates that $\delta_B$ of LH and None-L structures increases with the operational cost $c_L$ of \Tl~vehicles. However, $\delta_B$ of None-H structure decreases in $c_L$, which is counter-intuitive. This is because the cost-effectiveness difference in the action \Ltype~and \Both, represented by $\eta_L$ and $\eta_B$, decreases in $c_L$, and $\gamma_{LB}$ is independent of $c_L$. Therefore, the company is likely to opt for action \Both~instead of \Ltype. 

\begin{insight}
    The company may consider initiating recruitment of all vehicle types sooner when facing a higher operational cost. This should be done if the cost-effectiveness difference between the last two actions in the optimal recruitment order decreases.
\end{insight}

\rev{Sensing capabilities have a similar impact on thresholds as operational costs, as detailed in the online appendix\cite{appendix}. This yields the following insight:
\begin{insight}
    The company may consider initiating recruitment of all vehicle types sooner when facing a higher sensing capability. This should be done if the marginal cost-effectiveness of changing to action \Both~decreases.
\end{insight}}

\subsection{Bound-based Relative Value Iteration Algorithm} \label{subsec: BRVIA}

In this subsection, we propose the BRVI algorithm that computes the optimal policy efficiently.

\rev{The RVI algorithm\cite{white1963dynamic} is a classical approach for solving the average-cost MDP, which has been widely studied and improved upon \cite{hsu2019scheduling}. However, updating an infinite number of states in each iteration makes this algorithm impractical to solve our problem. To circumvent this issue, we propose using a sequence of finite-state approximate MDPs to effectively approximate the original MDP\cite{hsu2019scheduling}.}

First, let $\delta^{(m)}(t)$ be the truncated AoI in time slot $t$,
\begin{equation}
    \delta^{(m)}(t+1) = \left\{\begin{array}{ll}
        1,  & \text{the HD map is updated,} \\
        \left[\delta(t)+1\right]^+_m, & \text{otherwise},
    \end{array}
    \right.
\end{equation}
where we choose $m$ to be the number of truncated states and define the notation $[x]_m^+$ by $[x]_m^+=x$ if $x<m$ and $[x]_m^+=m$ otherwise. Then we can define the approximate MDPs $\Lambda^{(m)}$, which is the similar as the original MDP except
\begin{itemize}
    \item \textbf{States}: $S(t)=\delta^{(m)}(t)\in\mathcal{S}^{(m)}$, \emph{i.e.,} $\delta^{(m)}$ is the truncated AoI at that slot. Let $\mathcal{S}^{(m)}\in\{n\leq m|n\in\Nsetp\}$ be the state space, which is countably finite.
        
    \item \textbf{State transition probability}: $P^{(m)}_{ss'}(a)$, \emph{i.e.,} the state transition probability from state $s$ to $s'$ under the action $a$. Suppose the state $s$ is $\delta^{(m)}(t)$, the non-zero $P^{(m)}_{ss'}(a)$ is as follows\\
    \begin{equation} 
    P^{(m)}_{ss'}(a)=\left\{
    \begin{array}{ll}
        1,&s'=[\delta^{(m)}(t)+1]_m^+,a=N\\            
        Q_a, &s'=1,a\not=N\\
        1-Q_a,&s'=[\delta^{(m)}(t)+1]_m^+,a\not=N\\
    \end{array}\right.
    \end{equation}
\end{itemize}


\rev{Though we can use the traditional RVI algorithm to derive the optimal policy by the approximate MDP $\Lambda^{(m)}$\cite{hsu2019scheduling}, it is time-consuming. In the traditional RVI algorithm, the most time-consuming step is the minimization operation in Eq.~\eqref{eq: optimal policy update} for all states in each iteration.
\begin{equation}\label{eq: optimal policy update}
    J_{n+1}(\delta) = \min_{a\in\mathcal{A}} u(\delta,a) + \Ex[J_{n}(\delta')] - J_{n}(1),
\end{equation}
where $\Ex[J_{n}(\delta')]=\sum_{\delta'\in\mathcal{S}^{(m)}} P^{(m)}_{\delta\delta'}(a) J_{n}(\delta')$. In each iteration round, the computational complexity of Eq.~\eqref{eq: optimal policy update} for all truncated states is $O(|\mathcal{A}|\times m)$, which results from the size of the feasible set, that is the action space $\mathcal{A}$, and the number of truncated states $m$. As the number of truncated states $m$ goes to infinity and the size of the feasible set is factorially increasing in the number of vehicle types, we can see that the traditional RVI algorithm is time-consuming.}

\rev{Structural RVI algorithm\cite{hsu2019scheduling} is an improved RVI algorithm that computes the optimal policy efficiently utilizing the threshold-type structural property to reduce the execution number of Eq.~\eqref{eq: optimal policy update} in each iteration. To develop a lower-complexity algorithm, we additionally optimize the feasible set to further accelerate the iteration based on the upper bounds of optimal thresholds. }

\rev{For illustrative purposes, we take LH structure for example. Other structures are provided in the online appendix\cite{appendix}.}

To reduce the feasible set $\mathcal{A}$, we first characterize the upper bounds of thresholds in LH structure. First, we define $[x]^+$ as the minimum integer larger than $x$.
\begin{lemma} \label{lemma: structure (1)}
    Consider the LH structure of the optimal policy in Theorem~\ref{theorem: 1}, the optimal thresholds $\delta_L$, $\delta_H$, $\delta_B$ are upper bounded by the following values, respectively. 
    \begin{equation}
        \hat{\delta}_L=\left[\sqrt{\frac{(1-\beta)\eta_L}{\beta}}\right]^+
    \end{equation}
    \begin{equation}
        \hat{\delta}_H=\left[\sqrt{\frac{(1-\beta)(\eta_HQ_H-\eta_LQ_L)}{\beta(Q_H-Q_L)}}\right]^+
    \end{equation}
    \begin{equation}
        \hat{\delta}_B=\left[\sqrt{\frac{(1-\beta)\eta_L}{\beta(1-Q_H)}}\right]^+
    \end{equation}
\end{lemma}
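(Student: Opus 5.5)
The plan is to compare, at a given age $\delta$, the Bellman right-hand sides of the competing actions in the average-cost optimality equation of $\Lambda$. Theorem~\ref{theorem: 1} guarantees an optimal threshold policy, so there is a relative value function $h$ and an optimal average cost $J^*$ satisfying
\begin{equation*}
J^*+h(\delta)=\min_{a\in\mathcal{A}}\Big\{u(\delta,a)+Q_a h(1)+(1-Q_a)h(\delta+1)\Big\}.
\end{equation*}
Take the action-dependent part of the bracket and write $\Delta(\delta)=h(\delta+1)-h(1)$. Then, up to the common term $h(\delta+1)$, action $a$ costs
\begin{equation*}
(1-\beta)p_ac_a+\beta(1-Q_a)\delta^2-Q_a\Delta(\delta).
\end{equation*}

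The key structural fact I would establish first is that $\Delta(\delta)\ge 0$, i.e.\ $h$ is nondecreasing in $\delta$. This follows from the usual coupling argument or by induction along value iteration: the per-stage cost is increasing in $\delta$ and the transitions are monotone. With this in hand, dropping $\Delta$ can only make recruiting look worse than it really is. Hence any $\delta$ at which recruiting already wins under the pessimistic comparison $\Delta=0$ also satisfies the true optimality inequality. This yields sufficient conditions of the form ``if $\delta\ge\hat\delta$ then the action beyond the threshold is preferred,'' and therefore upper bounds on the thresholds.

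The three bounds come from three pairwise comparisons.

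\emph{Threshold $\delta_L$ ($L$ versus $N$).} Recruiting $L$ is at least as good as $N$ whenever $(1-\beta)p_Lc_L-\beta Q_L\delta^2\le Q_L\Delta(\delta)$. Since $\Delta\ge 0$, it suffices that $\delta^2\ge (1-\beta)\eta_L/\beta$, which gives $\hat\delta_L$.

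\emph{Threshold $\delta_H$ ($H$ versus $L$).} $H$ beats $L$ whenever $(1-\beta)(p_Hc_H-p_Lc_L)\le(Q_H-Q_L)(\beta\delta^2+\Delta)$. The LH-structure conditions give $Q_H\ge Q_L$, so the same argument applies and yields $\delta^2\ge(1-\beta)(\eta_HQ_H-\eta_LQ_L)/(\beta(Q_H-Q_L))$.

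\emph{Threshold $\delta_B$.} Here I would compare $B$ against $H$, the preceding action. The marginal cost is $(1-\beta)p_Lc_L=(1-\beta)\eta_LQ_L$, and the marginal success probability is $Q_B-Q_H=Q_L(1-Q_H)$. So $B$ wins once $\beta\delta^2 Q_L(1-Q_H)\ge(1-\beta)\eta_L Q_L$, i.e.\ $\delta^2\ge(1-\beta)\eta_L/(\beta(1-Q_H))$, matching $\hat\delta_B$.

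The main obstacle is turning ``the later action is preferred at $\delta\ge\hat\delta$'' into ``the threshold is at most $\hat\delta$.'' A preference of $H$ over $L$ does not by itself rule out $N$ being optimal there. I would close this gap using the ordering and monotone structure from Theorem~\ref{theorem: 1}:

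1. For $\delta\ge\hat\delta_L$, $N$ is dominated, since $L$ beats it.
2. For $\delta\ge\max(\hat\delta_L,\hat\delta_H)$, the optimal action is $H$ or $B$.
3. For $\delta\ge\hat\delta_B$, $B$ is optimal, after checking that $B$ also dominates $L$ and $N$ there, which again uses $\Delta\ge0$.

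If necessary I would note that the definitions $\delta_H=\max(\cdot)$ together with the nested-order property make each stated bound valid. Rounding up with $[\cdot]^+$ handles integrality. I expect the monotonicity of $h$ to need a careful induction under the truncated chains $\Lambda^{(m)}$ followed by passage to the limit, as in \cite{hsu2019scheduling}.
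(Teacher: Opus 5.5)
Your proposal is correct and is essentially the paper's argument. The paper compares $D^\rho(\delta,N,L)$, $D^\rho(\delta,L,H)$ and $D^\rho(\delta,H,B)$ in the discounted problem, drops the nonnegative term $\rho h_\rho(\delta+1)$, and carries the $\rho$-independent bounds over to $\Lambda$ via the discounted-to-average limit, whereas you make the same pairwise comparisons directly on the average-cost equation. The domination issue you flag is skipped in the paper but closes as you expect: the LH conditions give $\eta_L<\eta_H<\gamma_{L,H}<\gamma_{L,B}<\gamma_{H,B}$ and $\eta_B<\gamma_{H,B}$, hence $\hat\delta_L\le\hat\delta_H\le\hat\delta_B$, and $B$ strictly beats every other action once $\delta\ge\hat\delta_B$.
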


Next, based on Lemma~\ref{lemma: structure (1)}, we reduce the feasible set $\mathcal{A}$ in Eq.~\eqref{eq: optimal policy update}. In more detail, suppose that the AoI $\delta=\delta^{(m)}(t)$, we can define the reduced feasible set $\mathcal{A}_{LH}(\delta)$, as follows:
\begin{equation}\label{eq: feasible action space function} 
\mathcal{A}_{LH}(\delta) = \left\{\begin{array}{ll}
    \mathcal{A},  & \delta<\hat{\delta}_L \\
    \{L,H,B\}, & \hat{\delta}_L\leq\delta<\hat{\delta}_H\\
    \{H,B\}, & \hat{\delta}_H\leq\delta<\hat{\delta}_B\\
    \{B\}, & \delta\geq\hat{\delta}_B\\
\end{array}
\right.
\end{equation}

\rev{Finally, we use the reduced feasible set $\mathcal{A}_{LH}(\delta)$ to replace $\mathcal{A}$ in Eq.~\eqref{eq: optimal policy update} to accelerate this optimization operation.

We then leverage the structural property of the optimal policy defined in Theorem~\ref{theorem: 1}\footnote{The optimal policy for the truncated MDPs is threshold-type as well, according to the same proof as Theorem~\ref{theorem: 1}} to reduce the number of optimization operations in Eq.~\eqref{eq: optimal policy update}. Specifically, in each round, we directly set \Both~as the optimal action of the current state if it was taken as the optimal action of the previous state. Algorithm~\ref{alg: BRVIA} shows how the BRVI algorithm determines the optimal policy together with the reduced feasible set. }

\begin{algorithm}[!t]
    \caption{BRVI algorithm}
    \label{alg: BRVIA}
    \KwIn{the weighted factor: $\beta$\\
    \qquad \quad\Tl~vehicles' parameters: $p_L,c_L,r_L$\\
    \qquad \quad\Th~vehicles' parameters: $p_H,c_H,r_H$\\
    }
    \KwOut{the optimal policy $\phi^*$}
    Determine the structure of the optimal policy based on Theorem \ref{theorem: 1}.
    \BlankLine
    \tcc{Take LH structure as an example}
    Set the reduced feasible set $\mathcal{A}_{LH}(\delta)$ based on Eq.~\eqref{eq: feasible action space function} for all truncated states $\delta\in\mathcal{S}^{(m)}$.
    \BlankLine
    Initialize $J^{(m)}(\delta)\leftarrow0$ for all truncated states $\delta\in\mathcal{S}^{(m)}$. \\
    Record $J^{(m)}$ of the last round $J^{(m)}_{last}(\delta) \leftarrow 0$.
    \BlankLine
    \While{$\left|\frac{J^{(m)}-J^{(m)}_{last}}{J^{(m)}}\right|>\theta$}{$J^{(m)}_{last}(\delta)\leftarrow J^{(m)}(\delta)$ for all $\delta\in\mathcal{S}^{(m)}$.
    \BlankLine
        \For{$\delta\in\mathcal{S}^{(m)}$}{
        \eIf{there exists $y>0$ such that $\phi^*(\delta-y)=B$}{$\phi^*(\delta) \leftarrow B$.}{
        $\phi^*(\delta) \leftarrow \arg\min_{a\in\mathcal{A}_{LH}(\delta)} u(\delta,a)+\Ex[J^{(m)}(\delta')]$.}
        $J^{(m)}_{tmp}(\delta)\leftarrow u(\delta,\phi^*(\delta))+\mathbb{E}[J^{(m)}(\delta')]-J^{(m)}(1)$.
        }
        $J^{(m)}(\delta)\leftarrow J^{(m)}_{tmp}(\delta)$ for all $\delta\in\mathcal{S}^{(m)}$.   
    }
\end{algorithm}

Finally, we establish the optimality of the BRVI algorithm for the original MDP $\Lambda$.

\begin{theorem} \label{thm: alg converges}
    \rev{As the number of truncated state $m$ goes infinity,} the output $\phi^*(\delta)$ in Algorithm~\ref{alg: BRVIA} converges to the optimal policy of the original MDP $\Lambda$ in a finite number of iterations.
\end{theorem}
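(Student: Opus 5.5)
The proof splits into two parts. First, the BRVI iteration on $\Lambda^{(m)}$ produces the same iterates as the standard RVI iteration on $\Lambda^{(m)}$. Second, the optimal policies of the truncated MDPs $\Lambda^{(m)}$ converge to an optimal policy of $\Lambda$ as $m\to\infty$. That second step is the approximating-sequence result of \cite{hsu2019scheduling}, which we invoke after checking its hypotheses for our cost $u(\delta,a)$.

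\textbf{Step 1: BRVI coincides with RVI on $\Lambda^{(m)}$.} The Bellman minimization in Eq.~\eqref{eq: optimal policy update} over $\mathcal{A}_{LH}(\delta)$ must give the same minimizer and value as the minimization over $\mathcal{A}$. The argument has three parts.

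\emph{Threshold structure for the truncated MDP.} By the same proof as Theorem~\ref{theorem: 1}, the optimal policy of $\Lambda^{(m)}$ is threshold type. The key inputs are that the iterates $J_n$ are nondecreasing in $\delta$ and that $u(\delta,a)-u(\delta,a')$ has the monotone-difference property in $\delta$; both survive truncation, since $[\delta+1]^+_m$ is monotone.

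\emph{Actions removed by Lemma~\ref{lemma: structure (1)} are never strictly needed.} Lemma~\ref{lemma: structure (1)} bounds the thresholds, so beyond $\hat\delta_L$ the action $N$ is dominated, and similarly for $L$ beyond $\hat\delta_H$ and for $H$ beyond $\hat\delta_B$. The bounds come from comparing the one-step cost increment $(1-\beta)p_ac_a$ with the saving $\beta(Q_a-Q_{a'})\delta^2$ together with the monotonicity of $J_n$. Because these comparisons use only monotonicity of $J_n$, they hold at every iteration $n$, not only at the fixed point. Therefore restricting to $\mathcal{A}_{LH}(\delta)$ does not change $J_{n+1}$.

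\emph{The shortcut setting $\phi^*(\delta)=B$ is exact.} The shortcut fires once $B$ has been chosen at some smaller state. It is justified by the per-iteration threshold property: if $B$ minimizes at $\delta-y$, then it minimizes at $\delta$. This is the supermodularity argument used for Theorem~\ref{theorem: 1}, applied to $J_n$.

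\textbf{Step 2: finite-iteration convergence on $\Lambda^{(m)}$.} $\Lambda^{(m)}$ is a finite MDP with $m$ states and 4 actions. Under every action other than $N$, state $1$ is reached with probability $Q_a>0$. Under $N$ forever, the cost grows as $\beta\delta^2$, so optimal policies are unichain with the recurrent state $1$, and the MDP is aperiodic because a self-loop exists at $m$. Standard RVI theory \cite{puterman2014markov} then gives $J_n-J_n(1)\to h^{(m)}$, the relative value function. Since the action set is finite, the greedy policy equals an optimal policy after finitely many iterations, once the gap between the best and second-best action values exceeds the error in $J_n$. Step 1 implies that the BRVI output equals this optimal policy $\phi^{*(m)}$ in finite time.

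\textbf{Step 3: $m\to\infty$, the main obstacle.} We verify the conditions of Sennott's approximating-sequence theorem, as used in \cite{hsu2019scheduling}. The conditions are:
\begin{enumerate}
\item there is a policy with finite average cost, for example always choosing $B$, under which AoI is geometric with parameter $Q_B$, so $\Ex[\delta^2]<\infty$;
\item the relative values $h^{(m)}(\delta)$ are bounded below uniformly in $m$ (they are nonnegative because $h^{(m)}$ is nondecreasing and $h^{(m)}(1)=0$) and bounded above by a function $M(\delta)$ independent of $m$. For the upper bound, take the expected cost to reach state $1$ under the always-$B$ policy, which is finite and independent of $m$ because truncation only lowers ages;
\item $\limsup_m g^{(m)}\le g^*$.
\end{enumerate}
With these conditions, every limit point of $\phi^{*(m)}$ is average-cost optimal for $\Lambda$. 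Thresholds are bounded by Lemma~\ref{lemma: structure (1)} uniformly in $m$, so the policies are determined by finitely many integers. The sequence is therefore eventually constant and equal to an optimal policy of $\Lambda$, which gives the claim. The delicate point is the uniform-in-$m$ bound $M(\delta)$ under the modified transitions at the truncation boundary. I would handle it by coupling the truncated and original age processes, so that $\delta^{(m)}(t)\le\delta(t)$ pathwise.
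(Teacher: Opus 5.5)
Your proposal has the same two-part skeleton as the paper: relative value iteration (RVI) convergence on the finite unichain MDP $\Lambda^{(m)}$ for fixed $m$, and Sennott's approximating-sequence conditions, as in \cite{hsu2019scheduling}, for $m\to\infty$. Where it departs from the paper, it is mostly more careful.

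Your Step~1 has no counterpart in the paper. The paper's fixed-$m$ argument only checks the unichain property and never shows that the reduced sets $\mathcal{A}_{LH}(\delta)$ and the $B$-shortcut leave the RVI iterates unchanged. Your key observation supplies this. At iteration $n$, the Bellman objectives of $a$ and $a'$ differ by $(Q_{a'}-Q_a)\bigl(\beta\delta^2+J_n([\delta+1]^+_m)-J_n(1)\bigr)-(1-\beta)(p_{a'}c_{a'}-p_ac_a)$. This is strictly increasing in $\delta$ whenever $Q_{a'}>Q_a$, because every iterate $J_n$ is nondecreasing. Both modifications are therefore exact at every iteration, not only at the fixed point.

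Your bound $M(\delta)$, the expected cost of first passage to state $1$ under always-$B$, is independent of both $m$ and the discount factor, as Sennott's condition requires. The paper's $G(\delta)=\dis\bigl(-\delta/\ln\dis+1/\ln^2\dis\bigr)$ depends on $\dis$ and diverges as $\dis\to1$, so your choice repairs that step. You also supply the aperiodicity that RVI convergence needs.

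Three points need fixing.

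(a) Bounded thresholds put $\phi^{*(m)}$ in a finite set, but that does not make the sequence eventually constant. What does follow is that for all large $m$, $\phi^{*(m)}$ is one of the finitely many policies occurring infinitely often. Each of these is a limit point and hence $\Lambda$-optimal. Convergence to \emph{the} optimal policy additionally needs uniqueness or a fixed tie-break.

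(b) You list $\limsup_m J^{(m),*}\le J^*$ but never prove it. This, not $M(\delta)$, is where the coupling is delicate. Under always-$B$ the coupling is trivial because the action ignores the state. Here, however, you must run a state-dependent $\Lambda_\dis$-optimal policy on both chains. The identity $\delta^{(m)}(t)=\min\{\delta(t),m\}$ holds pathwise only if both chains take the same action. This is guaranteed when $m>\hat{\delta}_B$, since that policy then plays $B$ at every age $\ge m$. That is the paper's Lemma~\ref{lemma: cost comparison}. It yields $J^{\dis,(m)}\le J^{\dis,*}$, and multiplying by $(1-\dis)$ and letting $\dis\to1$ gives the bound.

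(c) Your unichain justification in Step~2 is wrong: in $\Lambda^{(m)}$ ages are capped at $m$, so the cost does not grow under $N$. The correct reason, which the paper uses, is that state $m$ is reachable from every state under every action, since $1-Q_a>0$. Hence every stationary policy has a single recurrent class, and that class contains the self-loop at $m$.
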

Proofs are provided in the online appendix\cite{appendix}.

 
\section{Simulation Results}\label{Sec: simulations}
In this section, we conduct extensive simulations for the optimal policy and the proposed BRVI algorithm. First, we show the performance of the optimal policy in Section~\ref{subsec: comparison of the average cost under different policies}. Next, Section~\ref{subsec: efficiency comparison} shows the efficiency of the proposed BRVI algorithm.

\subsection{Performance of the optimal policy} \label{subsec: comparison of the average cost under different policies}
In the subsection, we compare the optimal policy defined in Theorem~\ref{theorem: 1} and the following two baseline policies in terms of the company's average cost.
\begin{itemize}
    \item Zero-wait policy: a naive policy without considering the tradeoff between AoI loss and recruitment costs, \emph{i.e.,} recruiting vehicles of both types at all times.
    \item Dynamic pricing\cite{wang2022dynamic}: a policy considering the tradeoff between AoI loss and recruitment costs but ignoring crowdsource vehicles' heterogeneous sensing capability.
\end{itemize}

\begin{figure}[htbp]
    \centering
    \vspace{-0.4cm}
    \includegraphics[width=6cm]{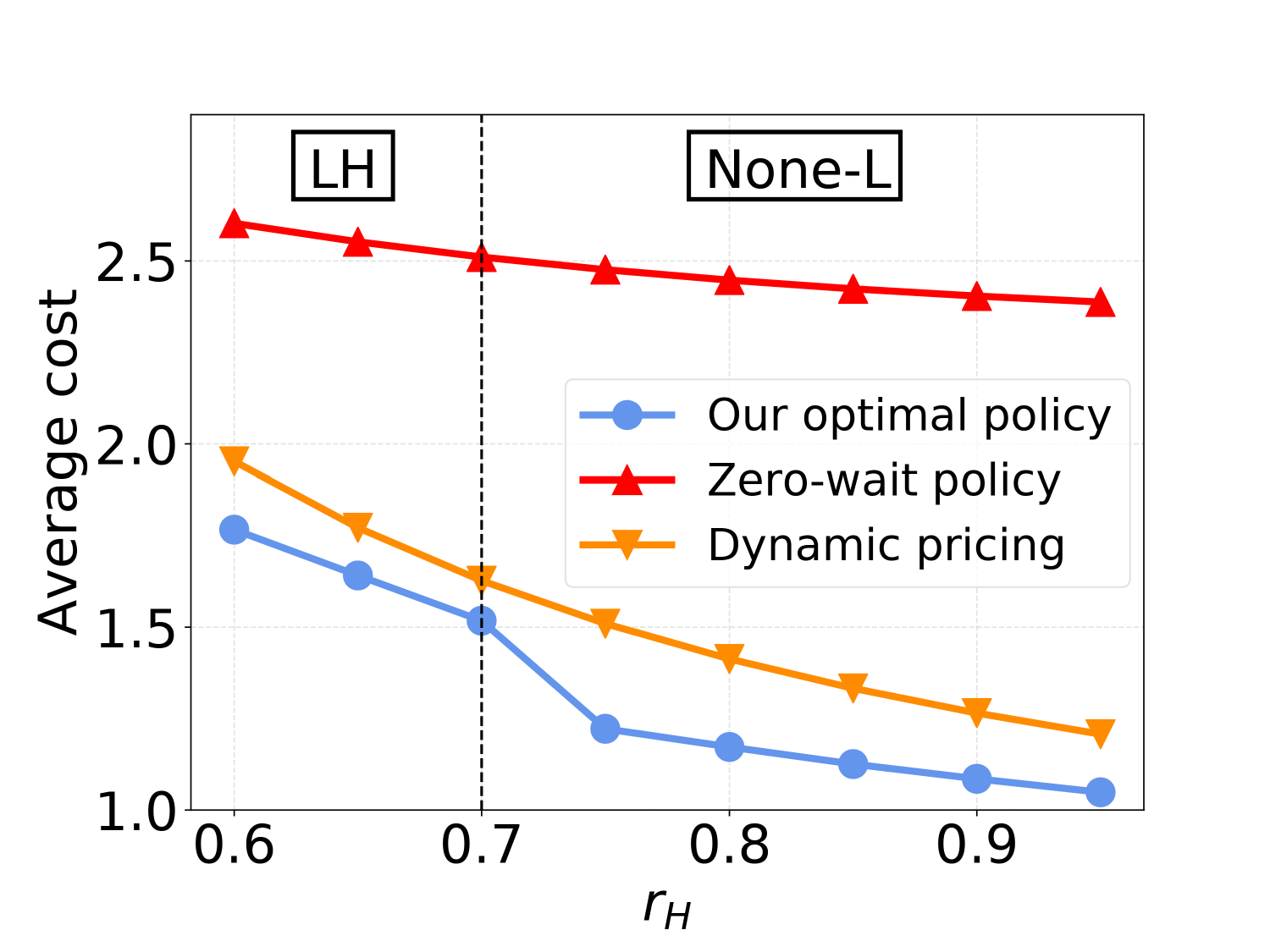}
    \caption{Comparison of average cost under different policies.}
    \label{fig: average cost comparison}
\end{figure}

\rev{To show the outperformance of our optimal policy, we conduct the simulation as shown in Fig.~\ref{fig: average cost comparison}, where the x-axis represents the qualified ratio $\Hqual$ of \Th~vehicles while the y-axis indicates the company's average cost. For illustrative purposes, we set other parameters as follows: the weighted factor $\beta=0.3$, \Tl~and \Th~vehicles' arrival probabilities $p_L=0.5,p_H=0.95$, \Tl~vehicles' sensing capabilities $r_L=0.6$, \Tl~and \Th~vehicles' operational costs $c_L=2,c_H=2.5$.}

\rev{Fig.~\ref{fig: average cost comparison} shows that our optimal policy significantly decreases the company's average cost. As shown in Fig.~\ref{fig: average cost comparison}, the optimal policy reduces the average cost by $46.84\%$ on average compared to the zero-wait policy and up to $19.04\%$ compared to dynamic pricing. Notice that the sharp change in the blue curve is due to the structural change in the optimal policy, \emph{i.e.,} from LH to None-L structure, while dynamic pricing remains unchanged.}

\subsection{Efficiency Comparison} \label{subsec: efficiency comparison}
In the subsection, we compare the proposed BRVI algorithm to the following two baseline algorithms.
\begin{itemize}
    \item RIV algorithm: a classical dynamic programming method to solve the average-cost MDP problem.
    \item Structural RVI algorithm: an improved RVI algorithm to reduce computational complexity based on the threshold-type structural property of the optimal policy, which is also state-of-the-art.
\end{itemize}

\rev{We perform simulations in Table~\ref{tab: eff} to compare the convergence time for different values of the weighted factor $\beta$. In Table~\ref{tab: eff}, the convergence time is measured in minutes, and numbers in parentheses indicate the improvement of our proposed BRVI algorithm compared to RVI and structural RVI algorithms, respectively. Fur illustrative purposes, we set other parameters as $p_L=0.5,p_H=0.95,r_L=0.6,r_H=0.7,c_L=2,c_H=2.5$, the convergence tolerance $\theta=10^{-10}$ and the number of truncated state $m=1000$.}

\begin{table}[!t] 
    \centering 
    \caption{Convergence time (min) comparison\label{tab: eff}}
    \begin{tabular}{|c|ccc|}
    \hline
    \diagbox[innerwidth=1.2cm]{$\beta$}{Alg.} & RVI & Structural RVI & \textbf{BRVI(proposed)} \\
    \hline    
    $10^{-4}$  & $10.84$ & $3.99$ & \bm{$3.66$}(\bm{$(66.19\%,8.20\%)$})  \\
    $10^{-3}$ & $2.32$  & $0.84$ & \bm{$0.77$}(\bm{$(66.65\%,7.49\%)$})   \\
    $10^{-2}$ & $0.86$ & $0.33$   & \bm{$0.27$}(\bm{$(68.11\%,17.49\%)$})   \\
    $10^{-1}$ & $0.22$  & $0.09$   & \bm{$0.07$}($\bm{68.86\%,21.45\%}$) \\
    \hline
    \end{tabular}
    \vspace{-0.4cm}
\end{table}

Table~\ref{tab: eff} demonstrates that, on average, the proposed BRVI algorithm reduces convergence time by $13.66\%$ compared to the Structural RVI algorithm and by even $67.4\%$ compared to the RVI algorithm. Thus, we can conclude that the BRVI algorithm greatly reduces the convergence time.

\rev{The complexity of the structural RVI algorithm increases with the size of the feasible set, which expands factorially with the number of vehicle types. In practice, there are more than two types of crowdsource vehicles. Thereby, our proposed algorithm can significantly reduce the complexity even in such cases.}

\section{Conclusion and Future work}
\rev{This paper studies the tradeoff between freshness and recruitment costs involving crowdsource vehicles that arrive randomly and have heterogeneous operational costs and sensing capabilities.} To achieve a better tradeoff, we define the average cost and formulate the average minimization problem as an MDP $\Lambda$. \rev{Leveraging MDP, we prove that the optimal policy is threshold-type age-dependent. Additionally, our findings reveal some counter-intuitive insights. In some cases, the company should initiate vehicle recruitment earlier when vehicles arrive more frequently, or have higher operational costs or sensing capabilities.} The optimal policy reduces the average cost by $19.04\%$ compared to the state-of-the-art mechanism. Our proposed BRVI algorithm also reduces the convergence time by an average of $13.66\%$ compared to the existing algorithm. 

In the future work, we could extend these results to a continuous-time system with more general distributions of operational costs and sensing capabilities, which is more practical. Additionally, a more general setting could be explored, such as competition among multiple companies, which could be formulated using a duopoly model.



\bibliographystyle{IEEEtran}
\bibliography{IEEEabrv,ref}

\newpage
\section{Appendices: The impact of the Vehicles’ Parameters}
In this section, we illustrate the impact of sensing capabilities on thresholds in Fig.~\ref{fig: capability}, where the x-axis represents sensing capabilities of \Tl~and \Th~vehicles (from left to right, respectively) and the y-axis indicates the AoI. Note that the threshold $\delta_B$ denotes the AoI at which the company starts recruiting both types of vehicles.
\begin{figure}[h]
    \centering
    \subcaptionbox{\label{fig: r_L_}}{\includegraphics[width = .5\linewidth]{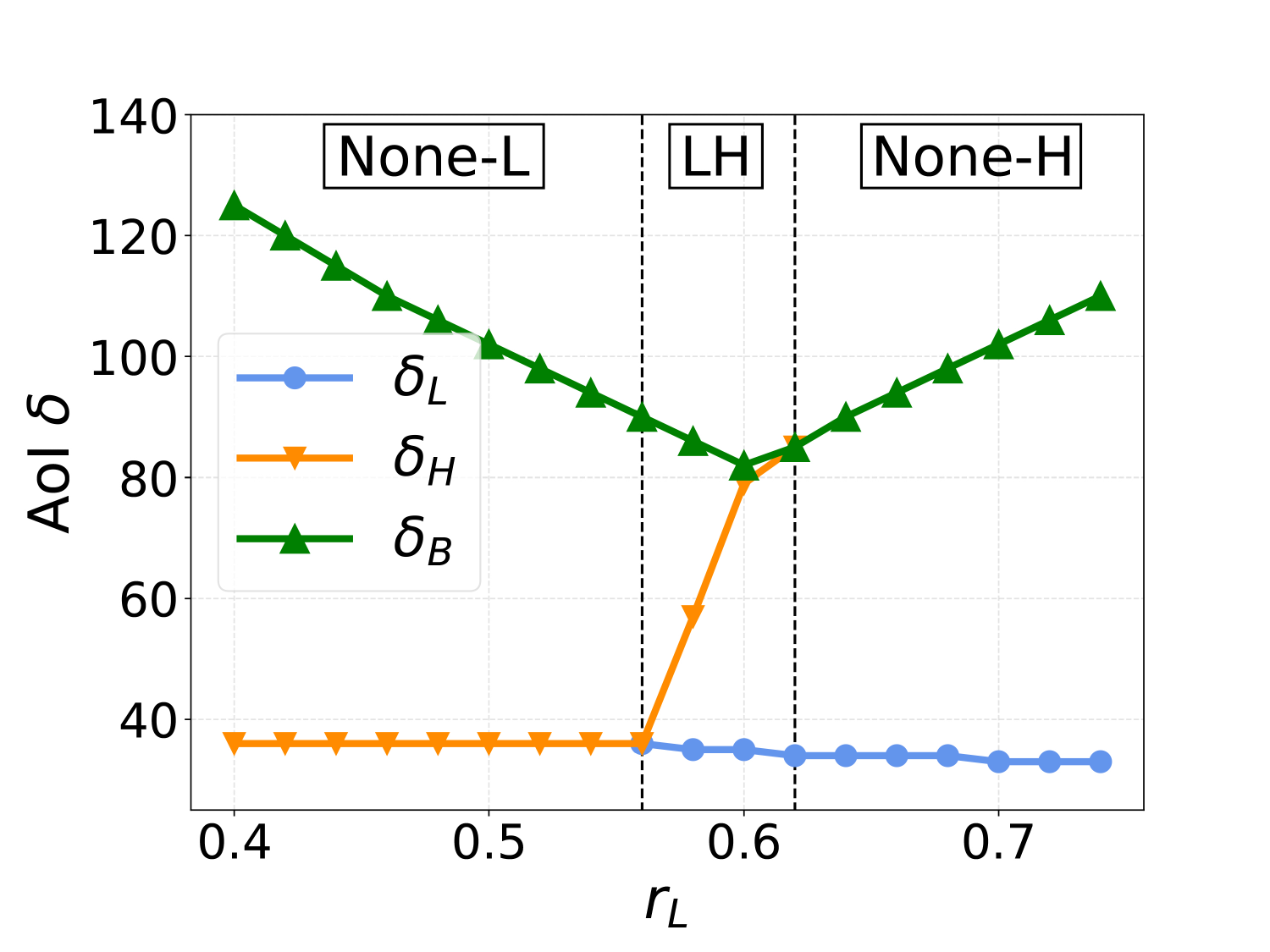}}\hfill
    \subcaptionbox{\label{fig: r_H_}}{\includegraphics[width = .5\linewidth]{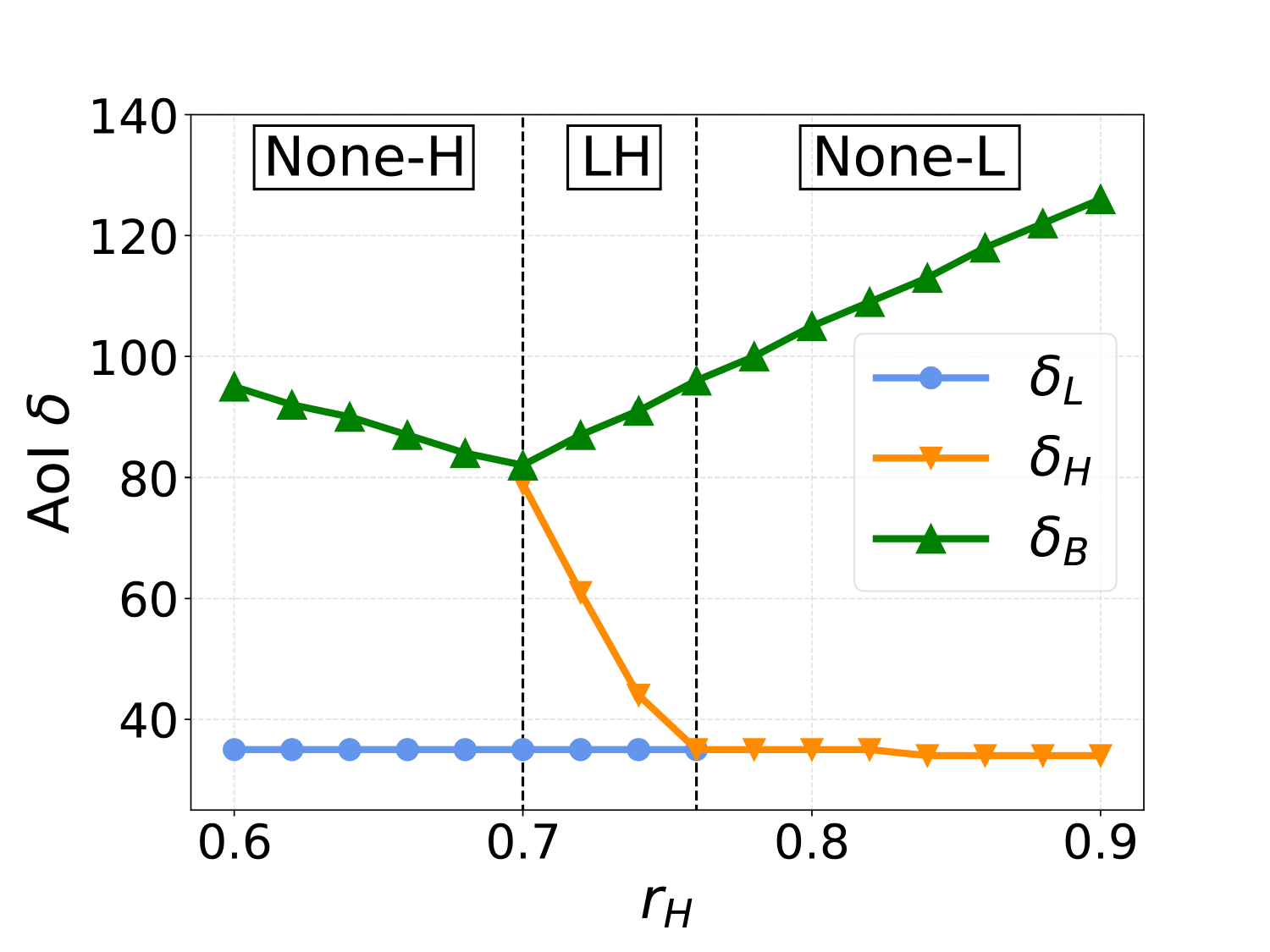}}
    \caption{The thresholds vary as the operational cost increases. Fig.~\ref{fig: r_L_} and \ref{fig: r_H_} depict the optimal policy transitioning between None-L to LH to None-H structure and  None-H to LH to None-L structure, respectively.}
    \label{fig: capability}
\end{figure}

\begin{obs}
    Fig.~\ref{fig: r_L_} and \ref{fig: r_H_} show that as vehicles' sensing capability increases, the threshold $\delta_B$ of recruiting vehicles of both types will first decrease. 
\end{obs}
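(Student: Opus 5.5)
Although the statement is read off Fig.~\ref{fig: capability}, I would try to justify it analytically. The idea is to characterize $\delta_B$ through the Bellman equation of $\Lambda$ and then track how the characterizing inequality moves with $r_L$ (and, symmetrically, with $r_H$). The claim is only about the first part of each curve, so I would restrict attention to the leftmost regime. In Fig.~\ref{fig: r_L_} this is the None-L structure, and in Fig.~\ref{fig: r_H_} it is the None-H structure.

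\textbf{Step 1: the switching condition.} Let $J^*$ be the optimal average cost and $h$ a relative value function, which exists by the same argument used for Theorem~\ref{theorem: 1}. Let $a$ be the action used just before $B$ in the given structure ($a=H$ for None-L, $a=L$ for None-H). Since every action $a\neq N$ resets to state $1$ on success, comparing the Bellman terms of $B$ and $a$ at state $\delta$ shows that $B$ is weakly preferred iff
\begin{equation*}
\beta\delta^2 + h(\delta+1)-h(1) \;\geq\; (1-\beta)\,\gamma_{a,B}.
\end{equation*}
Define $g(\delta):=\beta\delta^2+h(\delta+1)-h(1)$. By Theorem~\ref{theorem: 1}, $\delta_B$ is the smallest $\delta$ for which this inequality holds.

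\textbf{Step 2: monotonicity of $g$.} I would show that $g$ is increasing in $\delta$, by proving that $h$ is nondecreasing in $\delta$ (value iteration on the truncated MDPs $\Lambda^{(m)}$ preserves this). This confirms that $\delta_B$ is well defined as the first crossing point.

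\textbf{Step 3: dependence of the right-hand side on the parameter.} In the None-L regime of Fig.~\ref{fig: r_L_},
\begin{equation*}
\gamma_{H,B}=\frac{p_Lc_L}{Q_B-Q_H}=\frac{c_L}{r_L(1-Q_H)},
\end{equation*}
which is strictly decreasing in $r_L$. Symmetrically, in the None-H regime of Fig.~\ref{fig: r_H_}, $\gamma_{L,B}=c_H/(r_H(1-Q_L))$ is strictly decreasing in $r_H$. This is the quantitative form of the third insight: the marginal cost-effectiveness of switching to $B$ falls, so the threshold $(1-\beta)\gamma_{a,B}$ that $g$ must reach is lowered.

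\textbf{Step 4 (main obstacle): dependence of the left-hand side on the parameter.} The function $g$ itself depends on $r_L$ through $h$. A higher $r_L$ makes $B$ more effective, which tends to lower the relative cost of high ages, so $h(\delta+1)-h(1)$ could shrink and offset the effect in Step 3.

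My first attempt would be a sample-path coupling argument. I would use the same arrival and quality coins for two values $r_L<r_L'$ and compare the cost accumulated from state $\delta+1$ until the first reset. The differences $h(\delta+1)-h(1)$ are expected costs-to-reset minus $J^*$ times the expected time to reset. I would aim to bound their decrease by the decrease of $(1-\beta)\gamma_{H,B}$ at the crossing point.

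If the coupling bound is too weak, I would fall back on the explicit renewal-reward formula for a policy with given thresholds. With geometric success probabilities $Q_H$ and $Q_B$, both $J^*$ and $h$ have closed forms in terms of $\delta_H,\delta_B$. I would then verify the sign of $\partial\delta_B/\partial r_L$ at the left end of the regime directly, using the bound $\hat{\delta}_B$-type expressions as in Lemma~\ref{lemma: structure (1)} to control $\delta_B$.
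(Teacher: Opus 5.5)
There is a genuine gap, and it is exactly the step you label the main obstacle. For context, the paper does not prove this statement. It is read off simulated curves for one fixed parameter set, and the only justification offered is the heuristic that $\gamma_{H,B}=c_L/(r_L(1-Q_H))$ decreases in $r_L$. Your Steps 1--3 correctly formalize that heuristic; indeed $Q(\delta,H)-Q(\delta,B)=(Q_B-Q_H)\bigl(\beta\delta^2+h(\delta+1)-h(1)-(1-\beta)\gamma_{H,B}\bigr)$. But they only show that the right-hand side of the switching inequality moves in the favorable direction. Where the first crossing moves depends on the net effect of that change and of the $r_L$-dependence of $g$ through $h$ and $J^*$. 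Step 4 contains no argument for this, only two plans.

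Step 3 cannot carry the conclusion on its own, and the paper's own figure shows why. In the LH regime of Fig.~\ref{fig: r_L_}, the action preceding $B$ is also $H$. So $\delta_B$ is fixed by the same inequality with the same decreasing $(1-\beta)\gamma_{H,B}$, yet there $\delta_B$ increases with $r_L$. Hence the change in $g$ can outweigh the change in the threshold level. Whether it does in the None-L regime depends on $\beta,p_L,p_H,c_L,r_H$, so a coupling argument will not produce a parameter-free comparison.

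Your fallback does not close the gap either. First, $\delta_B$ is integer-valued and jointly optimized with $\delta_H$, so ``the sign of $\partial\delta_B/\partial r_L$'' would have to be replaced by tracking discrete jumps. Second, the bound $\hat{\delta}_B=[\sqrt{(1-\beta)\eta_L/(\beta(1-Q_H))}]^+=[\sqrt{(1-\beta)\gamma_{H,B}/\beta}]^+$ is obtained by discarding $h\geq 0$. That is precisely the term whose dependence on $r_L$ is at issue, so its monotonicity says nothing about $\delta_B$.

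What you can honestly establish is the observation at the figure's parameter values. You would compute $J^*$, $h$ and the optimal thresholds, via renewal-reward formulas for threshold policies or by solving the truncated MDP; this is what the paper does numerically. A general statement would need an extra hypothesis under which the drop in $(1-\beta)\gamma_{H,B}$ provably exceeds the drop in $g$ at the crossing point.

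One smaller point: Step 1 characterizes $\delta_B$ only when $\delta_H<\delta_B$. If $H$ is never used, the relevant comparison is $B$ versus $N$, with $\eta_B$ in place of $\gamma_{H,B}$.
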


The intuition is that the company may tend to delay recruitment when facing vehicles with higher sensing capabilities. For instance, Fig.~\ref{fig: r_L_} illustrates that $\delta_B$ in LH and None-H structures increases with the sensing capability $\Lqual$ of \Tl~vehicles. However, $\delta_B$ in None-L structure decreases in $\Lqual$, which is counter-intuitive. This is because the cost-effectiveness difference in the action \Both~and \Htype, represented by $\ceb$ and $\ceh$, decreases in $\Lqual$, and $\gamma_{HB}$ also decreases in $\Lqual$. 
\begin{equation}
    \gamma_{HB} = \frac{c_L}{r_L(1-Q_H)} \text{ decreases in } r_L 
\end{equation}

Therefore, the company is likely to opt for action \Both~instead of \Ltype.

\section{Appendices: Proofs of Main Results}
\subsection{Preliminaries}
To characterize the optimal policy of the MDP $\Lambda$, we start with introducing a $\dis$-discounted MDP $\Lambda_{\dis}$, which has the same definition as the MDP $\Lambda$ except for the objective function. It aims at minimizing the expected total $\dis$-discount cost $J^{\dis}(\delta,\phi)$ given an initial state\footnote{The state of any MDPs in this paper is the AoI, therefore, we just use $\delta$ to represent the state for convenience hereafter.} $s=\delta$ and under the policy $\phi$. 
\begin{equation}\label{eq. discount MDP average cost}
    J^{\dis}(\delta,\phi) = \lim_{T\to\infty} \sum_{t=1}^T \dis^t \cdot u(S(t),A^{\phi}(t)|S(1)=\delta),
\end{equation}
where $\rho\in(0,1)$ is the discount factor. 

To distinguish the optimal policies of MDPs $\Lambda$ and $\Lambda_{\rho}$, we use $\Lambda$-$optimal$ and $\Lambda_{\dis}$-$optimal$ to denote them respectively.


The minimum value function $J^{\dis,*}(\delta)$ satisfies the following property.
\begin{lemma}\label{lemma: bounded value function}
    For any given $\delta$ and $\dis$, $J^{\dis,*}(\delta)<\infty$.
\end{lemma}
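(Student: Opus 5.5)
The plan is to exhibit one concrete stationary deterministic policy whose discounted cost is finite from every starting state. Since $J^{\dis,*}(\delta)$ is the infimum over all policies, it is bounded above by the cost of that policy.

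I would use the zero-wait policy $\phi_B$ that always takes action \Both. Under $\phi_B$ the AoI starts at $\delta$ and, at each slot, either resets to $1$ with probability $Q_B$ or increases by one. It never exceeds $\delta+t-1$ at slot $t$. Since $1-Q_B\le 1$, this gives the pathwise bound
\begin{equation*}
    u(\delta(t),B)\le (1-\beta)(p_Lc_L+p_Hc_H)+\beta(\delta+t-1)^2 .
\end{equation*}
Two facts about the objective in Eq.~\eqref{eq. discount MDP average cost} need care. First, it is written without an expectation, so I read it as the expected discounted sum, consistent with Eq.~\eqref{Average cost}. Second, costs are nonnegative, so monotone convergence lets me exchange the limit and the expectation.

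The computation is then
\begin{equation*}
    J^{\dis,*}(\delta)\le J^{\dis}(\delta,\phi_B)\le \sum_{t=1}^{\infty}\dis^t\left[(1-\beta)(p_Lc_L+p_Hc_H)+\beta(\delta+t-1)^2\right].
\end{equation*}
This series converges for every $\dis\in(0,1)$, because $\sum_t \dis^t t^k<\infty$ for $k=0,1,2$ (for instance by the ratio test). For each fixed $\delta$ the bound is an explicit finite quantity, of order $\beta\delta^2/(1-\dis)$ plus terms in $(1-\dis)^{-2}$ and $(1-\dis)^{-3}$.

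There is no real obstacle here. The only delicate point is the interpretation of the objective as an expectation and the justification for swapping the limit with the expectation, both handled above. Note that I deliberately bound the cost by the worst case, where the AoI never resets, so no probabilistic estimate on reset times is required. The same pathwise bound, $\delta(t)\le\delta+t-1$, holds under any policy, so the argument would work with any policy in place of $\phi_B$.
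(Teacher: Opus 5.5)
Your proof is correct and is essentially the paper's argument. The paper bounds $J^{\dis,*}(\delta)$ by the discounted cost of the never-recruit policy, whose AoI path is deterministically $\delta+t-1$; you use the always-$B$ policy with the same pathwise bound plus a constant recruitment term, and, as you note, any policy works.

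Your direct series bound is also cleaner than the paper's integral comparison. The paper's stated closed form $\beta\left(\frac{-\delta}{\ln\dis}+\frac{1}{\ln^2\dis}\right)$ is the integral of $(\delta+x)\dis^x$ rather than $(\delta+x)^2\dis^x$. So your explicit bound, not the paper's, is the one that could legitimately serve as $G(\delta)$ in the later truncation argument.
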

\begin{proof}
    See Appendix A.
\end{proof}

According to \cite{sennott1989average} and Lemma \ref{lemma: bounded value function}, we have a direct lemma.
\begin{lemma} \label{lemma: discounted-MDP}
    \begin{enumerate}
        \item[(a)] The value function $J^{\rho}(\delta)$ satisfies the Bellman equation
        \begin{equation} \label{eq: bellman equation for beta}
            J^{\dis,*}(\delta) = \min_{a\in\mathcal{A}} u(\delta,a) + \dis \sum_{\delta'\in\mathcal{S}}P_{\delta\delta'}(a)J^{\dis,*}(\delta').
        \end{equation}
        \item[(b)] There exists a deterministic stationary policy $\phi^{\dis,*}$ that satisfies the Bellman equation \eqref{eq: bellman equation for beta}.
        \item[(c)] We can define a value iteration $J_t^\dis(s)$ by $J_0^\dis(s)=0$ and for any $t>0$,
        \begin{equation}
            J^\dis_{t+1}(\delta) = \min_{a\in\mathcal{A}} u(\delta,a) + \dis \sum_{\delta'\in\mathcal{S}}P_{\delta\delta'}(a)J^\dis(\delta').
        \end{equation}
        Then $J^\dis_{t}(\delta) \to J^{\dis,*}(\delta)$ as $t\to\infty$, for any $\delta$ and $\dis$.
    \end{enumerate}
\end{lemma}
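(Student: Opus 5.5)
This lemma is a direct consequence of the general theory of discounted MDPs with countable state space and finite action set. The plan is to verify the hypotheses of the standard results in \cite{sennott1989average} (equivalently \cite{puterman2014markov}), using Lemma~\ref{lemma: bounded value function} as the key finiteness condition. The costs $u(\delta,a)$ in Eq.~\eqref{eq: cost function} are nonnegative, since $\beta\in[0,1]$, $Q_a\le 1$ and $p_ac_a\ge 0$. The action set $\mathcal{A}$ is finite, so every minimum in the statements is attained.

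For part (a), I would apply the principle of optimality. For any policy, condition on the first action and the first transition, which gives $J^{\dis}(\delta,\phi)=u(\delta,a)+\dis\sum_{\delta'}P_{\delta\delta'}(a)J^{\dis}(\delta',\phi')$, where $\phi'$ is the shifted policy. Nonnegativity of the costs justifies the Tonelli-type interchange. From this, $J^{\dis,*}$ satisfies ``$\ge$'' in Eq.~\eqref{eq: bellman equation for beta}. For ``$\le$'', I would take any action $a$ and follow an $\epsilon$-optimal policy from each successor state. Only two successor states occur, $1$ and $\delta+1$, so this is easy. Lemma~\ref{lemma: bounded value function} guarantees that both sides are finite, so the equation is meaningful.

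For part (b), finiteness of $\mathcal{A}$ means the minimizer in Eq.~\eqref{eq: bellman equation for beta} exists at every $\delta$. Define $\phi^{\dis,*}(\delta)$ as any minimizer, with ties broken deterministically. I would then show that this stationary policy attains $J^{\dis,*}$. Iterating the Bellman equation $T$ times gives $J^{\dis,*}(\delta)=\mathbb{E}[\sum_{t\le T}\dis^{t}u]+\dis^{T}\mathbb{E}[J^{\dis,*}(S(T+1))]$. The remainder term is nonnegative, so $J^{\dis}(\delta,\phi^{\dis,*})\le J^{\dis,*}(\delta)$, and hence equality holds.

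For part (c), I would show by induction that the iterates are monotone, $0=J_0^\dis\le J_1^\dis\le\cdots$, and satisfy $J_t^\dis\le J^{\dis,*}$. Monotonicity holds because the Bellman operator is monotone and the costs are nonnegative. The upper bound follows since $J_t^\dis$ is the optimal $t$-horizon cost, which is at most the infinite-horizon cost. Hence $J_t^\dis\uparrow \bar J\le J^{\dis,*}$. By monotone convergence, and because the minimum is over finitely many actions, we can pass to the limit inside the operator, so $\bar J$ satisfies the Bellman equation. A minimizer of $\bar J$'s equation then yields a policy whose cost is at most $\bar J$, by the same telescoping argument as in (b). Therefore $J^{\dis,*}\le\bar J$, and we have equality.

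The main obstacle is the infinite state space in this limit-interchange step, since the costs $\delta^2$ are unbounded. I would rely on nonnegativity together with Lemma~\ref{lemma: bounded value function} instead of any boundedness assumption, exactly as in Sennott's framework.
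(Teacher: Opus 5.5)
Your proposal is correct and follows the same route as the paper, which proves this lemma in one line by invoking \cite{sennott1989average}, with Lemma~\ref{lemma: bounded value function} supplying the finiteness hypothesis under nonnegative costs and a finite action set. Your sketches of (a)--(c) are the standard arguments behind that citation: the principle of optimality, optimality of a minimizing stationary policy via the nonnegative telescoping remainder, and monotone value iteration from $J_0^\dis=0$ with the limit passed through the finite minimum. They make the proof self-contained but do not change the approach.
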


Then we can leverage Lemma~\ref{lemma: discounted-MDP}(c) to prove.
\begin{lemma} \label{lemma: the monotonicity value function}
    For any $\dis$, $J^{\dis,*}(\delta)$ is non-decreasing in $\delta$.
\end{lemma}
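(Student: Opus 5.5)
The plan is induction on the value iteration index $t$ in Lemma~\ref{lemma: discounted-MDP}(c). We show that every iterate $J^{\dis}_t(\delta)$ is non-decreasing in $\delta$, and then pass to the limit $t\to\infty$. Monotonicity is preserved under pointwise limits, so $J^{\dis,*}(\delta)=\lim_{t}J^{\dis}_t(\delta)$ inherits it.

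\textbf{Base case and setup.} For the base case, $J^{\dis}_0\equiv 0$ is trivially non-decreasing. For the inductive step, assume $J^{\dis}_t$ is non-decreasing in $\delta$. For each fixed action $a$, write the one-step quantity as
\begin{equation*}
    G_t(\delta,a)=u(\delta,a)+\dis\left[Q_aJ^{\dis}_t(1)+(1-Q_a)J^{\dis}_t(\delta+1)\right],
\end{equation*}
with $Q_N=0$. This form follows directly from the transition probabilities of $\Lambda$: from state $\delta$ the process moves only to $1$ or to $\delta+1$.

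\textbf{Inductive step.} Fix $\delta_1\le\delta_2$ and an action $a$; we compare $G_t(\delta_1,a)$ with $G_t(\delta_2,a)$ term by term.
\begin{itemize}
    \item The cost $u(\delta,a)$ in Eq.~\eqref{eq: cost function} is a constant plus $\beta(1-Q_a)\delta^2$. Since $\beta\ge 0$, $1-Q_a\ge 0$ and $\delta\ge 1$, it is non-decreasing in $\delta$.
    \item The term $Q_aJ^{\dis}_t(1)$ does not depend on $\delta$.
    \item The term $(1-Q_a)J^{\dis}_t(\delta+1)$ is non-decreasing by the induction hypothesis, because $\delta_1+1\le\delta_2+1$.
\end{itemize}
Hence $G_t(\delta_1,a)\le G_t(\delta_2,a)$ for every $a$. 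Taking the minimum over $a\in\mathcal{A}$ preserves the inequality: the minimizer $a_2$ at $\delta_2$ gives
\begin{equation*}
    J^{\dis}_{t+1}(\delta_1)\le G_t(\delta_1,a_2)\le G_t(\delta_2,a_2)=J^{\dis}_{t+1}(\delta_2).
\end{equation*}

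\textbf{Main obstacle.} There is essentially none. The only subtlety is that the iteration in Lemma~\ref{lemma: discounted-MDP}(c) is over a countably infinite state space, so each $J^{\dis}_t(\delta)$ must be finite for the comparisons to make sense. This holds by induction: $u$ is finite at each state and each iterate involves only finitely many states, and the finiteness of the limit is guaranteed by Lemma~\ref{lemma: bounded value function}.
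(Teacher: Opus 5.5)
Your proposal is correct and follows essentially the same route as the paper: induction on the value-iteration iterates $J^{\dis}_t$, showing that for each fixed action both the one-step cost and the continuation value (the next state is either $1$ or $\delta+1$) are non-decreasing in $\delta$, then taking the minimum over actions and passing to the limit $t\to\infty$. Your explicit handling of the minimum via the minimizer at the larger state, and your finiteness remark, make precise two steps the paper leaves implicit.
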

\begin{proof}
    See Appendix B.
\end{proof}

Based on Lemma~\ref{lemma: discounted-MDP} and Lemma~\ref{lemma: the monotonicity value function}, the following lemma gives the connection between $\Lambda$-$optimal$ and $\Lambda_{\dis}$-$optimal$ policies.
\begin{lemma}\label{lemma: from discount to average}
    \begin{enumerate}
        \item[(a)] There exists a determinsitic stationary policy $\phi^*$ that is $\Lambda$-$optimal$.
        \item[(b)] There exists a finite constant $J^*=\lim_{\rho\to1}(1-\rho) J^{\dis,*}(\delta)$ for every state $\delta$ such that the minimum average cost is $J^*$, independent of initial state $\delta$. 
        \item[(c)] For any sequence $\{\rho_n\}_{n=1}^\infty$ of discount factors that converges to 1, there exists a subsequence $\{\theta_n\}_{n=1}^\infty$ such that $\lim_{n\to\infty} \phi^{\theta_n,*} = \phi^*$, where $\phi^*$ is the $\Lambda$-$optimal$ policy. 
    \end{enumerate}     
\end{lemma}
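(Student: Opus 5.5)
This is the standard vanishing-discount argument of Sennott, and I would verify its two hypotheses for $\Lambda$ and then invoke \cite{sennott1989average}. Lemma~\ref{lemma: bounded value function} gives $J^{\dis,*}(\delta)<\infty$. Lemma~\ref{lemma: discounted-MDP} gives the Bellman equation and a deterministic stationary discounted-optimal policy. What remains to check is the following.

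\textbf{Condition (i): a uniform bound on the relative value.} Fix the reference state $\delta=1$ and set $h^{\dis}(\delta)=J^{\dis,*}(\delta)-J^{\dis,*}(1)$. By Lemma~\ref{lemma: the monotonicity value function}, $h^{\dis}(\delta)\ge 0$. This gives the lower bound $h^{\dis}(\delta)\ge -M$ with $M=0$.

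\textbf{Condition (ii): an upper bound independent of $\dis$.} For each $\delta$ I would find $M_\delta<\infty$ with $h^{\dis}(\delta)\le M_\delta$ for all $\dis$. I would compare with the zero-wait policy $\phi_B$, which always plays $B$. Let $\tau$ be the first hitting time of state $1$ from $\delta$ under $\phi_B$. It is geometric with parameter $Q_B>0$. Assume at least one of $p_Lr_L,p_Hr_H$ is positive; otherwise every policy gives infinite cost and the statement is trivial.

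A standard coupling argument, using that $\phi_B$ reaches $1$ at time $\tau$ and then follows the optimal policy, gives
\begin{equation*}
J^{\dis,*}(\delta)\le \Ex\Big[\sum_{t=0}^{\tau-1}u(\delta+t,B)\Big]+J^{\dis,*}(1).
\end{equation*}
The expected cost until hitting is at most $\sum_{t\ge0}(1-Q_B)^t\big[(1-\beta)p_Bc_B+\beta(\delta+t)^2\big]$. This series is finite because $(\delta+t)^2$ grows only polynomially, so it gives the required $M_\delta$, independent of $\dis$.

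Here I would also confirm the mild technicality that $\dis^{t}\le 1$ makes the discounted bound dominated by the undiscounted one. I would also check that the hitting-time expectation is finite: $\Ex[\tau]=1/Q_B$, and $\Ex[\tau^3]<\infty$ as well, which handles the quadratic cost.

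\textbf{Condition (iii): finiteness of the optimal discounted cost from the reference state.} I would show $(1-\dis)J^{\dis,*}(1)$ is bounded, using the same zero-wait policy, which has finite average cost.

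With (i)–(iii), Sennott's theorem yields the three parts of Lemma~\ref{lemma: from discount to average}:
\begin{itemize}
\item \textbf{Part (b).} Along a sequence $\dis_n\to1$, $(1-\dis_n)J^{\dis_n,*}(\delta)\to J^*$, independent of $\delta$. This follows because the $h^{\dis}$ are bounded, so the differences vanish after multiplication by $(1-\dis)$.
\item \textbf{Part (c).} The set of deterministic stationary policies is compact in the product topology, since $\mathcal{A}$ is finite and $\mathcal{S}$ is countable. A diagonal argument therefore extracts a subsequence $\theta_n$ with $\phi^{\theta_n,*}\to\phi^*$ pointwise.
\item \textbf{Part (a).} The limit $\phi^*$ satisfies the average-cost optimality inequality with $h=\liminf h^{\theta_n}$. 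By Sennott's result it is therefore average-optimal and deterministic stationary.
\end{itemize}

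The main obstacle I expect is the upper bound $M_\delta$ uniform in $\dis$, because the cost is unbounded. The geometric tail of $\tau$ has to beat the quadratic growth of the cost; since $Q_B>0$ it does, and this is the step I would write out carefully.
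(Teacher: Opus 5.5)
Your proposal is correct and is essentially the paper's argument. Both verify Sennott's vanishing-discount hypotheses using the zero-wait policy and the monotonicity of $J^{\dis,*}$ (giving $h_{\dis}\ge 0$); the paper cites the packaged criterion of \cite{hsu2019scheduling} (zero-wait induces an irreducible, aperiodic chain with finite average cost), whereas you check the upper bound $h^{\dis}(\delta)\le M_\delta$ directly via the geometric first-passage cost to state $1$, which is more explicit.

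One small correction: when $Q_B=0$ (and $\beta>0$) the statement is not trivial but fails for part (b), since $(1-\dis)J^{\dis,*}(\delta)\to\infty$ and no finite $J^*$ exists. That case must therefore be excluded by the model's assumptions (e.g.\ $p_Hr_H>0$) rather than dismissed.
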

\begin{proof}
    See Appendix C.
\end{proof}

\subsection{Proof of Theorem \ref{theorem: 1}}
To show Theorem \ref{theorem: 1}, we start by providing a similar Theorem~\ref{thm: 2} for the $\dis$-discount MDP $\Lambda_{\dis}$. Based on Lemma~\ref{lemma: from discount to average}, we prove that Theorem \ref{theorem: 1} holds.

Before providing a similar theorem, we prepare some lemmas to help the proof.
\begin{lemma}\label{lemma: action N appears first and action B appears finally}
    For any $\beta$, we have
    \begin{itemize}
        \item If $\phi^{\beta}(1)\not=N$, then $\phi^{\beta}(\delta)\not=N$ for all $\delta$.
        \item If $\phi^{\beta}(\delta)=B$, then $\phi^{\beta}(\delta')=B$ for all $\delta'>\delta$.
    \end{itemize}
\end{lemma}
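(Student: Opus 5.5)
The plan is to work directly with the Bellman equation of Lemma~\ref{lemma: discounted-MDP}(a) for $\Lambda_{\dis}$; I read $\phi^{\beta}$ in the statement as the $\Lambda_{\dis}$-optimal policy $\phi^{\dis,*}$. Write $V(\delta)=J^{\dis,*}(\delta)$ and define the state-action value
\[
G(\delta,a)=u(\delta,a)+\dis\bigl[Q_aV(1)+(1-Q_a)V(\delta+1)\bigr].
\]
This formula also covers $a=N$, since $Q_N=0$. The optimal action $\phi^{\dis,*}(\delta)$ is a minimizer of $G(\delta,\cdot)$, and Lemma~\ref{lemma: bounded value function} makes all quantities finite. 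The whole argument rests on the pairwise differences $G(\delta,a_2)-G(\delta,a_1)$ and on showing that they are strictly monotone in $\delta$.

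Using Eq.~\eqref{eq: cost function}, the term $\beta\delta^2$ common to all actions cancels. For any two actions $a_1,a_2$ this gives
\[
G(\delta,a_2)-G(\delta,a_1)=(1-\beta)\bigl(p_{a_2}c_{a_2}-p_{a_1}c_{a_1}\bigr)-(Q_{a_2}-Q_{a_1})\,D(\delta),
\]
where
\[
D(\delta)=\beta\delta^2+\dis\bigl(V(\delta+1)-V(1)\bigr).
\]
By Lemma~\ref{lemma: the monotonicity value function}, $V(\delta+1)$ is non-decreasing in $\delta$. Since $\beta\delta^2$ is strictly increasing, $D(\delta)$ is strictly increasing in $\delta$. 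Therefore, whenever $Q_{a_2}>Q_{a_1}$, the difference $G(\delta,a_2)-G(\delta,a_1)$ is strictly decreasing in $\delta$. This is a single-crossing property: once the more reliable action $a_2$ is weakly preferred to $a_1$, it is strictly preferred at every larger age.

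For the first bullet, take $a_1=N$ and $a_2=a:=\phi^{\dis,*}(1)\neq N$, so that $Q_a>Q_N=0$ (assuming all $p$'s and $r$'s are positive). Optimality at $\delta=1$ gives $G(1,a)-G(1,N)\le 0$. Strict decrease then gives $G(\delta,a)<G(\delta,N)$ for every $\delta>1$, so $N$ is never a minimizer at $\delta>1$, and hence $\phi^{\dis,*}(\delta)\neq N$ for all $\delta$.

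For the second bullet, take $a_2=B$ and $a_1\in\{N,L,H\}$. From Eq.~\eqref{eq: success probability}, $Q_B-Q_L=r_Hp_H(1-r_Lp_L)>0$ and $Q_B-Q_H=r_Lp_L(1-r_Hp_H)>0$, and $Q_B>0=Q_N$. If $\phi^{\dis,*}(\delta)=B$, then $G(\delta,B)\le G(\delta,a_1)$ for each such $a_1$. Strict decrease gives $G(\delta',B)<G(\delta',a_1)$ for all $\delta'>\delta$, so $B$ is the unique minimizer there.

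The only delicate point is ties at the reference state itself. For example, $G(1,a)=G(1,N)$ could hold, and the selected deterministic policy must still be consistent. I would handle this by fixing the tie-breaking rule "prefer the action with the larger $Q_a$." Strict monotonicity of $D$ guarantees that ties occur only at isolated states and cannot propagate, so the conclusions hold for this choice of $\phi^{\dis,*}$. Beyond that, I do not expect an obstacle, since monotonicity of $V$ is already supplied by Lemma~\ref{lemma: the monotonicity value function}.
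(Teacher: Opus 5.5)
Your proposal is correct and is essentially the paper's argument. Appendix~D shows that the value $Q^\rho(\delta,N)$ grows in $\delta$ faster than the value of any recruiting action, because the common age-dependent increment is scaled by $1-Q_a$; that is your single-crossing property in increment form. Your closed form for $G(\delta,a_2)-G(\delta,a_1)$ in terms of $D(\delta)$ is also the same decomposition the paper later uses through $F(\delta)$ in the proof of Theorem~\ref{thm: 2}.

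Your version is cleaner than the paper's in three ways. The paper states the increment inequality in the reversed direction, it starts from a strict inequality at $\delta=1$ that a tie would not supply, and it never addresses the second bullet. Also, because $D$ is strictly increasing, your argument already gives both conclusions for every selection of minimizers, so the tie-breaking rule in your last paragraph is not needed.
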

\begin{proof}
    See Appendix D.
\end{proof} 
        
\begin{lemma}\label{lemma: cost efficiency inequality}
There are some inequalities:
    \begin{equation} \label{eq: expected recruitment expenses per HD map}
        \ceb \geq \min\left\{\cel,\ceh\right\}
    \end{equation} 
    \begin{equation} \label{eq: eta1}
        \frac{p_Hc_H-p_Lc_L}{Q_H-Q_L}<\cel \text{ when } Q_L>Q_H, \cel<\ceh
    \end{equation}
    \begin{equation} \label{eq: eta2}
        \frac{p_Hc_H-p_Lc_L}{Q_H-Q_L}<\ceh \text{ when } Q_H>Q_L,\cel>\ceh
    \end{equation}
\end{lemma}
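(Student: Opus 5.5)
The three inequalities in Lemma~\ref{lemma: cost efficiency inequality} are purely algebraic, so the plan is to rewrite each cost-effectiveness explicitly and compare. We use $\cel=\frac{c_L}{r_L}$, $\ceh=\frac{c_H}{r_H}$, and $\ceb=\frac{p_Lc_L+p_Hc_H}{Q_B}$ with $Q_B=Q_L+Q_H-Q_LQ_H$. Note that $Q_L\cel=p_Lc_L$ and $Q_H\ceh=p_Hc_H$.

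For \eqref{eq: expected recruitment expenses per HD map}, write the numerator of $\ceb$ as $Q_L\cel+Q_H\ceh\ge (Q_L+Q_H)\min\{\cel,\ceh\}$. Since $Q_B=Q_L+Q_H-Q_LQ_H\le Q_L+Q_H$, we get
\[
\ceb\ge \frac{Q_L+Q_H}{Q_B}\min\{\cel,\ceh\}\ge\min\{\cel,\ceh\}.
\]
This is a weighted (mediant-type) average bound, strengthened by the subadditivity $Q_B\le Q_L+Q_H$. The degenerate case $Q_B=0$ is excluded, since the probabilities are positive under the model.

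For \eqref{eq: eta1}, again use $p_Hc_H-p_Lc_L=Q_H\ceh-Q_L\cel$. Under $Q_L>Q_H$ the denominator $Q_H-Q_L$ is negative, so the claim $\frac{Q_H\ceh-Q_L\cel}{Q_H-Q_L}<\cel$ is equivalent, after multiplying by the negative number $Q_H-Q_L$, to
\[
Q_H\ceh-Q_L\cel>\cel(Q_H-Q_L),
\]
that is, $Q_H\ceh>Q_H\cel$, i.e.\ $\ceh>\cel$, which is the hypothesis (with $Q_H>0$).

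For \eqref{eq: eta2}, the denominator $Q_H-Q_L$ is now positive, so the claim is equivalent to $Q_H\ceh-Q_L\cel<\ceh(Q_H-Q_L)$. This reduces to $-Q_L\cel<-Q_L\ceh$, i.e.\ $\cel>\ceh$, again the hypothesis (with $Q_L>0$).

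No step is a real obstacle. The only point needing care is tracking the sign flip in \eqref{eq: eta1} and the implicit positivity $Q_L,Q_H>0$ (i.e.\ $p_a,r_a>0$), which should be stated explicitly. One could also note that \eqref{eq: expected recruitment expenses per HD map} is in fact strict whenever $Q_LQ_H>0$, although the lemma only requires the weak form.
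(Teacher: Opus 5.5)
Your proof is correct. For the second and third inequalities it matches the paper's computation. The paper subtracts $\eta_L$ (resp.\ $\eta_H$) from $\frac{p_Hc_H-p_Lc_L}{Q_H-Q_L}$, simplifies to $\frac{Q_H(\eta_H-\eta_L)}{Q_H-Q_L}$ (resp.\ $\frac{Q_L(\eta_H-\eta_L)}{Q_H-Q_L}$), and reads off the sign. That is equivalent to your cross-multiplication with the sign flip.

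For the first inequality the routes differ. The paper argues by contradiction: it supposes $\eta_B<\min\{\eta_L,\eta_H\}$ and asserts, without the algebra, that $\eta_B<\eta_L$ and $\eta_B<\eta_H$ force opposite orderings of $\eta_L$ and $\eta_H$. Your direct bound $\eta_B\ge\frac{Q_L+Q_H}{Q_B}\min\{\eta_L,\eta_H\}$ does three things:
\begin{itemize}
\item it supplies exactly the missing computation;
\item it gives the quantitative strengthening you mention;
\item it sidesteps a slip in the paper, whose two implications are stated in reverse.
\end{itemize}
In fact $\eta_B<\eta_L$ yields $\eta_H<(1-Q_L)\eta_L$, i.e.\ $\eta_H<\eta_L$. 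The paper's argument still closes only because both implications are swapped together.

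One small addition: your final step $\frac{Q_L+Q_H}{Q_B}\,m\ge m$ needs $m=\min\{\eta_L,\eta_H\}\ge 0$. So list nonnegativity of the operational costs $c_L,c_H$ next to $Q_L,Q_H>0$ among the implicit assumptions.
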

\begin{proof}
    See Appendix E.
\end{proof} 

\begin{theorem} \label{thm: 2}
    There exists a threshold-type age-dependent stationary deterministic optimal policy $\phi^{\dis,*}$ of the MDP $\Lambda_\dis$. Depending on the parameter setting, such an optimal policy falls into one of the following four structures:
    \begin{enumerate}
        \item \textbf{LH structure} (in Fig.~\ref{fig: LH structure}): if $\frac{Q_L}{Q_H} \leq 1$ and $\frac{1-Q_H}{1-Q_L}<\frac{\eta_L}{\eta_H}<1$, the optimal policy includes actions in the order of \None, \Ltype, \Htype, \Both, with three thresholds satisfying $1 \leq \delta_L \leq \delta_H \leq \delta_{B}\in\Nsetp$.
        \item \textbf{HL structure} (in Fig.~\ref{fig: HL structure}): if $\frac{Q_L}{Q_H} > 1$ and $1<\frac{\eta_L}{\eta_H}<\frac{1-Q_H}{1-Q_L}$, the optimal policy includes actions in the order of \None, \Htype, \Ltype, \Both, with three thresholds satisfying $1\leq \delta_H \leq \delta_L \leq \delta_{B}\in \Nsetp$.        
        \item \textbf{None-L structure} (in Fig.~\ref{fig: None-L structure}): if $\frac{\eta_L}{\eta_H} \geq \max\{1,\frac{1-Q_H}{1-Q_L}\}$, the optimal policy includes actions in the order of \None, \Htype, \Both, with two thresholds satisfying $1 \leq \delta_H \leq \delta_{B} \in \Nsetp$.        
        \item \textbf{None-H structure} (in Fig.~\ref{fig: None-H structure}): if $\frac{\eta_L}{\eta_H}<\min\{1,\frac{1-Q_H}{1-Q_L}\}$, the optimal policyincludes actions in the order of \None, \Ltype, \Both, with two thresholds satisfying $1 \leq \delta_L \leq \delta_{B} \in \Nsetp$.
    \end{enumerate}
\end{theorem}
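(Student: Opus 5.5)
We prove Theorem~\ref{thm: 2} by comparing actions pairwise through the Bellman equation of Lemma~\ref{lemma: discounted-MDP}(a). By Lemma~\ref{lemma: discounted-MDP}(b) we may restrict attention to a deterministic stationary $\phi^{\dis,*}$. For state $\delta$ and action $a\neq N$ the right-hand side of \eqref{eq: bellman equation for beta} is
\begin{equation*}
  W_a(\delta)=(1-\beta)p_ac_a+\beta(1-Q_a)\delta^2+\dis Q_a J^{\dis,*}(1)+\dis(1-Q_a)J^{\dis,*}(\delta+1).
\end{equation*}
For $a=N$ we set $Q_N=0$ and $p_Nc_N=0$. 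For two actions $a_1,a_2$ with $Q_{a_2}>Q_{a_1}$ we obtain
\begin{equation*}
  W_{a_1}(\delta)-W_{a_2}(\delta)=(Q_{a_2}-Q_{a_1})\Big[\beta\delta^2+\dis J^{\dis,*}(\delta+1)-\dis J^{\dis,*}(1)-(1-\beta)\gamma_{a_1,a_2}\Big].
\end{equation*}
Define $G(\delta)=\beta\delta^2+\dis\big(J^{\dis,*}(\delta+1)-J^{\dis,*}(1)\big)$. By Lemma~\ref{lemma: the monotonicity value function}, $G$ is strictly increasing in $\delta$. Hence $a_2$ is weakly preferred to $a_1$ exactly when $G(\delta)\geq(1-\beta)\gamma_{a_1,a_2}$. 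Each pairwise comparison is therefore a single-crossing (threshold) test in $\delta$, with the threshold determined by the marginal cost-effectiveness $\gamma_{a_1,a_2}$. Note that $\gamma_{N,a}=\eta_a$.

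Since $G$ is monotone, the optimal action at $\delta$ is the one minimizing the function $a\mapsto(1-\beta)p_ac_a-Q_aG(\delta)$. This is the lower envelope of four lines in the variable $g=G(\delta)$, with slopes $-Q_a$ and intercepts $(1-\beta)p_ac_a$. As $g$ increases, the minimizer moves to actions with larger $Q_a$, and the actions that ever appear are exactly the vertices of the lower convex hull of the points $(Q_a,p_ac_a)$ for $a\in\{N,L,H,B\}$. This immediately gives that $N$ is used first and $B$ last (the content of Lemma~\ref{lemma: action N appears first and action B appears finally}), because $Q_B>\max\{Q_L,Q_H\}$ and $Q_N=0$. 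It also gives that the policy is threshold-type with thresholds ordered along the hull.

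It remains to determine which of $L$ and $H$ lie on the hull, and in which order. This is where the parameter conditions come in. Point $L$ lies strictly below segment $N$–$B$ iff $\eta_L<\eta_B$, and strictly below segment $N$–$H$ or $H$–$B$ depending on the order of $Q_L$ and $Q_H$. Using $p_Bc_B=p_Lc_L+p_Hc_H$ and $1-Q_B=(1-Q_L)(1-Q_H)$, these conditions reduce to the stated inequalities:
\begin{itemize}
\item $\eta_L/\eta_H<1$ is the comparison of slopes from $N$.
\item $\eta_L/\eta_H>(1-Q_H)/(1-Q_L)$ is equivalent to $\gamma_{H,B}>\gamma_{L,B}$ type comparisons, since $\gamma_{L,B}=p_Hc_H/(Q_H(1-Q_L))=\eta_H/(1-Q_L)$ and $\gamma_{H,B}=\eta_L/(1-Q_H)$.
\end{itemize}
In the LH case ($Q_L\le Q_H$), $L$ and $H$ are both hull vertices iff $\eta_L<\eta_H$ and $\gamma_{L,H}<\gamma_{H,B}$, and by Lemma~\ref{lemma: cost efficiency inequality} these reduce to the stated ratio conditions; the HL case is symmetric. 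In the None-L case $L$ lies above the hull, and in the None-H case $H$ does. Inequality \eqref{eq: expected recruitment expenses per HD map} guarantees that at least one single-type action precedes $B$ whenever it is used.

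The main obstacle is this case bookkeeping: showing that the four parameter regions exactly characterize the convexity of the hull, including the boundary and equality cases. I would handle it by computing each vertex test explicitly in terms of $\eta_L,\eta_H,Q_L,Q_H$. Finally, the existence claims hold because the thresholds are finite: $G(\delta)\to\infty$, so $B$ is eventually chosen. A threshold equal to $1$ covers the cases where an action never appears.
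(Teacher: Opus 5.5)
Your proposal is correct. It rests on the same identity as the paper's proof: your $G(\delta)$ is the paper's $F(\delta)$, and your $W_{a_1}(\delta)-W_{a_2}(\delta)=(Q_{a_2}-Q_{a_1})\,[G(\delta)-(1-\beta)\gamma_{a_1,a_2}]$ is its $D^{\rho}(\delta,a_1,a_2)$. The difference lies in how the action order is extracted.

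The paper proceeds sequentially. It uses $\eta_B\ge\min\{\eta_L,\eta_H\}$ to decide which single-type action first beats $N$. It then compares the two remaining candidates from that action ($\gamma_{L,H}$ against $\gamma_{L,B}$, or $\gamma_{L,H}$ against $\gamma_{H,B}$), which amounts to building the lower hull by hand, one edge at a time. ``$N$ first, $B$ last'' is delegated to Lemma~\ref{lemma: action N appears first and action B appears finally}, and the fact that an abandoned action never returns is only implicit in the single-crossing of each $D^{\rho}$.

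Your lower-envelope view (minimize $(1-\beta)p_ac_a-Q_a g$ with $g=G(\delta)$ strictly increasing) gives monotonicity, the endpoints and the threshold form in one step. Your identities $\gamma_{L,B}=\eta_H/(1-Q_L)$ and $\gamma_{H,B}=\eta_L/(1-Q_H)$ supply the translation into the $\frac{1-Q_H}{1-Q_L}$ conditions of the statement, which the paper's proof never carries out; it stops at comparisons such as $\gamma_{L,H}\gtrless\gamma_{L,B}$. As a bonus, $G(\delta)\ge\beta\delta^2$ and the switches occur where $G$ crosses $(1-\beta)$ times the hull-edge slopes $\eta_L,\gamma_{L,H},\gamma_{H,B}$. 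So your picture yields Lemma~\ref{lemma: structure (1)} immediately, and it extends verbatim to more vehicle types.

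A few points to tighten:

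\begin{itemize}
\item The reductions you attribute to Lemma~\ref{lemma: cost efficiency inequality} really come from the chord-slope (mediant) identities. $\eta_H$ is a convex combination of $\eta_L$ and $\gamma_{L,H}$, and $\gamma_{L,B}$ is a convex combination of $\gamma_{L,H}$ and $\gamma_{H,B}$. Hence $\eta_L<\gamma_{L,H}\iff\eta_L<\eta_H$, and $\gamma_{L,H}<\gamma_{H,B}\iff\gamma_{L,B}<\gamma_{H,B}$. Eqs.~\eqref{eq: eta1}--\eqref{eq: eta2} are instances of the first identity only; the second is what you need for the $\frac{1-Q_H}{1-Q_L}$ condition.
\item Fix a tie-breaking rule (e.g.\ the smallest $Q_a$ among minimizers) so that the selected action is monotone in $\delta$.
\item A hull vertex skipped because no $G(\delta)$ lands in its interval shows up as coinciding thresholds, not only as a threshold equal to $1$.
\item When you do the bookkeeping, you will find the four stated regions are not exhaustive. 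The cases $Q_L<Q_H$ with $\eta_L/\eta_H=\frac{1-Q_H}{1-Q_L}$ ($L,H,B$ collinear) and $Q_L>Q_H$ with $\eta_L=\eta_H$ ($N,H,L$ collinear) satisfy none of the four conditions. In both, the middle point lies on a hull edge and the $N,L,B$ pattern is optimal, so the existence claim survives. But no argument can place these parameters inside the stated regions, so do not try to cover them there.
\end{itemize}
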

\begin{proof}
    Before we start the proof, there are some useful definitions.
    \begin{definition} \label{def: Q function}
        The candidate value function $Q^\dis(\delta, a)$ under the state $\delta$ and the action $a$ is as follows.
        \begin{equation}
            \begin{split}
                Q^\dis(\delta,a)=&(1-\beta)p_ac_a + \beta(1-Q_a) \ageloss\\
                &+ \dis\left(Q_a J^{\dis,*}(1) + (1-Q_a) J^{\dis,*}(\delta+1)\right)
            \end{split}    
        \end{equation}
    \end{definition}
    By this notation, the minimum average cost under the initial state $\delta$ and the optimal action of this state can be written as
    \begin{equation} 
        J^{\dis,*}(\delta) = \min_{a\in\mathcal{A}} Q^\dis(\delta,a)
    \end{equation}
    \begin{equation}    
        \phi^{\dis,*}(\delta) = \arg\min_{a\in\mathcal{A}} Q^\dis(\delta,a) 
    \end{equation}
    
    \begin{definition}\label{def: D function}
        The difference in candidate value functions under the state $\delta$ and different actions $a_1,a_2$ by $D^\dis(\delta,a_1,a_2)$ is as follows.
        \begin{equation}
            \begin{split}
                D^\dis(\delta,a_1,a_2)=Q^\dis(\delta,a_1)-Q^\dis(\delta,a_2)
            \end{split}    
        \end{equation}
    \end{definition}
    Notice that $D^\dis(\delta,a_1,a_2)<0$ implies that the company prefers action $a_1$ to $a_2$ under the state $\delta$.
    
    Based on Lemma.~\ref{lemma: action N appears first and action B appears finally}, there are two cases: $\phi^{\dis,*}(1)=N$ and $\phi^{\dis,*}(1)\not=N$. We only need to consider the first case as the second case is a special case of the first in this theorem.
    
    Note that $\phi^{\beta,*}(1)=N$. We can determine the next action in the order by the following functions.
    \begin{itemize}
        \item $D^\dis(\delta,N,L) = Q_L\left(F(\delta) -(1-\beta)\cel\right)$
        \item $D^\dis(\delta,N,H) = Q_H\left(F(\delta) -(1-\beta)\ceh\right)$
        \item $D^\dis(\delta,N,B) = Q_B\left(F(\delta) -(1-\beta)\ceb\right)$
    \end{itemize}
    where $F(\delta) = \dis h_\dis(\delta+1) + \beta \ageloss = \dis (J^\dis(\delta+1)-J^\dis(1)) + \beta \ageloss$, which increases in $\delta$. 
    
    Because $D^\dis(\delta, N, a)$ for $a=L, H, B$ are increasing in $\delta$, and are possibly negative first and definitely positive when $\delta$ is large enough. Moreover, as Eq.~\eqref{eq: expected recruitment expenses per HD map} in Lemma \ref{lemma: cost efficiency inequality}, $D^\beta(\delta, N, L)$ or $D^\beta(\delta, N, H)$ first turns to be positive, which implies that the next action of the optimal policy is either \Ltype~or \Htype~instead of \Both. More specifically, when $\cel<\ceh$, the next action is \Ltype, otherwise, the next action is \Htype.
    \begin{enumerate}
        \item[(a)] Case 1: $\cel<\ceh$. Since the second action is \Ltype, we can consider the following two functions.
        \begin{equation*}
        \begin{split}
            &D^\dis(\delta,L,H)=(Q_H-Q_L)
            \left(F(\delta) - (1-\beta)\frac{p_Hc_H-p_Lc_L}{Q_H-Q_L}\right)
        \end{split}        
        \end{equation*}
        \begin{equation*}
        \begin{split}
            &D^\dis(\delta,L,B)=(Q_B-Q_L)\left(F(\delta) - (1-\beta)\frac{p_Hc_H}{Q_B-Q_L}\right)
        \end{split}        
        \end{equation*}
        Then, there are three possibilities:
        \begin{itemize}
            \item If $Q_H<Q_L$, then the next action is \Both, which is exactly None-H structure. This is because there does not exist a $\delta$ satisfying that $D^\dis(\delta,L,H)>0$ and $D^\dis(\delta,N,L)>0$ when $Q_H<Q_L$ and $\cel<\ceh$. If otherwise, we have 
            \begin{eqnarray*}
                D^\dis(\delta,L,H)>0 \Rightarrow F(\delta) < (1-\beta)\frac{p_Hc_H-p_Lc_L}{Q_H-Q_L}
            \end{eqnarray*}
            \begin{eqnarray*}
                D^\dis(\delta,N,L)>0 \Rightarrow F(\delta) > (1-\beta) \cel
            \end{eqnarray*}
            According to Eq.~\eqref{eq: eta1}, they are a contradiction.
            
            \item If $Q_H>Q_L$ and $\frac{p_Hc_H-p_Lc_L}{Q_H-Q_L}>\frac{p_Hc_H}{Q_B-Q_L}$, the next action is \Both, which is exactly None-H structure. This is because as $\delta$ increases, $D^\dis(\delta,L,B)$ turns to be positive earlier than $D^\dis(\delta,L,B)$.
            \item If $Q_H>Q_L$ and $\frac{p_Hc_H-p_Lc_L}{Q_H-Q_L}<\frac{p_Hc_H}{Q_B-Q_L}$, then the next action is \Htype, and finally B, which is exactly LH structure.
        \end{itemize}
        \item[(b)] Case 2: $\cel>\ceh$. Since the second action is \Htype, we can consider the following two functions.
        \begin{equation*}
            D^\dis(\delta,H,L) = (Q_L-Q_H)\left(F(\delta)-(1-\beta)\frac{p_Hc_H-p_Lc_L}{Q_H-Q_L}\right)
        \end{equation*}
        \begin{equation*}
            D^\dis(\delta,H,B)
            = (Q_B-Q_H)\left(F(\delta)-(1-\beta)\frac{p_Lc_L}{Q_B-Q_H}\right)
        \end{equation*}
        Hence, there are also three possibilities:
        \begin{itemize}
            \item If $Q_H>Q_L$, then the next action is \Both, which is exactly None-L structure. This is because there does not exist a $\delta$ satisfying that $D^\dis(\delta,H,L)>0$ and $D^\dis(\delta,N,H)>0$ when $Q_H>Q_L$ and $\cel>\ceh$. If otherwise, we have 
            \begin{eqnarray*}
                D^\dis(\delta,H,L)>0 \Rightarrow F(\delta) < (1-\beta)\frac{p_Hc_H-p_Lc_L}{Q_H-Q_L}
            \end{eqnarray*}
            \begin{eqnarray*}
                D^\dis(\delta,N,H)>0 \Rightarrow F(\delta) > (1-\beta) \ceh
            \end{eqnarray*}
            According to Eq.~\eqref{eq: eta2}, they are contradictory.
            \item If $Q_H<Q_L$ and $\frac{p_Hc_H-p_Lc_L}{Q_H-Q_L}>\frac{p_Lc_L}{Q_B-Q_H}$, the next action is \Both, which is exactly None-L structure.
            \item If $Q_H<Q_L$ and $\frac{p_Hc_H-p_Lc_L}{Q_H-Q_L}<\frac{p_Lc_L}{Q_B-Q_H}$, then the next action is \Ltype, and finally \Both, which is exactly HL structure.
        \end{itemize}
    \end{enumerate}
    
\end{proof}
        
\subsection{Proof for Lemma \ref{lemma: structure (1)}}

\begin{IEEEproof}
    Since the structure of $\Lambda$-$optimal$ policy is LH structure, we can find upper bounds of $\delta_L,\delta_H,\delta_B$ from $D^\dis(\delta, N, L)$, $D^\dis(\delta, L, H)$, and $D^\dis(\delta, H, B)$, respectively. 
    
    Recall that 
    \begin{equation*}
        D^\dis(\delta,N,L) = Q_L\left(\dis h_\dis(\delta+1) + \beta \ageloss -(1-\beta)\cel\right)
    \end{equation*}
    where $h_\dis(\delta+1)=J^\dis(\delta+1)-J^\dis(1)>0$. Thereby, $D^\dis(\delta,N,L)$ is definitely positive as $\delta>\sqrt{(1-\beta)/(\beta\cel)}$. Moreover, the threshold $\delta_L$ should be an integer. Hence, the company will never choose action \None, when AoI $\delta$ is larger than $\hat{\delta}_L=\left[\sqrt{(1-\beta)/(\beta\cel)}\right]^+$. And based on the recruitment order in LH structure, the next action should be \Ltype, therefore, the threshold $\delta_L$ of initiating the \Tl~vehicle recruitment should be less than $\hat{\delta}_L$, \emph{i.e.,} $\delta_L<\hat{\delta}_L$.
    
    Similarly, we can find the upper bounds of $\delta_H, \delta_B$ from $D^\dis(\delta,L,H)$, $D^\dis(\delta,H,B)$, respectively. 
\end{IEEEproof}

Here, we present the threshold bounds for HL, None-L, and None-H structures.
\begin{lemma}\label{lemma: structure (2)}
    Consider the HL structure of the optimal policy in Theorem~\ref{theorem: 1}, the optimal thresholds $\delta_H$, $\delta_L$, $\delta_B$ are upper bounded by the following values, respectively. 
    \begin{equation}
        \hat{\delta}_H=\left[\sqrt{\frac{(1-\beta)\ceh}{\beta}}\right]^+
    \end{equation}
    \begin{equation}
        \hat{\delta}_L=\left[\sqrt{\frac{(1-\beta)(\ceh Q_H-\cel Q_L)}{\beta(Q_H-Q_L)}}\right]^+
    \end{equation}
    \begin{equation}
        \hat{\delta}_B=\left[\sqrt{\frac{(1-\beta)\ceh}{\beta(1-Q_L)}}\right]^+
    \end{equation}
\end{lemma}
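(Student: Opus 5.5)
The plan is to mirror the proof of Lemma~\ref{lemma: structure (1)}, exchanging the roles of the two vehicle types. I work first with the $\dis$-discounted MDP $\Lambda_\dis$ and its difference functions $D^\dis(\delta,a_1,a_2)$ from Definition~\ref{def: D function}. The bounds are then transferred to $\Lambda$ via Lemma~\ref{lemma: from discount to average}(c): the $\Lambda$-optimal policy is a limit of $\Lambda_{\theta_n}$-optimal policies, so any $\dis$-independent bound on the discounted thresholds carries over.

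The key fact is $h_\dis(\delta+1)=J^{\dis,*}(\delta+1)-J^{\dis,*}(1)\ge 0$, which follows from Lemma~\ref{lemma: the monotonicity value function}. Hence $F(\delta)=\dis h_\dis(\delta+1)+\beta\delta^2\ge\beta\delta^2$. Each difference in the HL order has the form $(Q_{a_2}-Q_{a_1})\bigl(F(\delta)-(1-\beta)\gamma_{a_1,a_2}\bigr)$. It is therefore positive once $\beta\delta^2>(1-\beta)\gamma_{a_1,a_2}$, and this condition does not depend on $\dis$. I handle the three thresholds in turn.

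\begin{enumerate}
\item For $\delta_H$, use $D^\dis(\delta,N,H)=Q_H(F(\delta)-(1-\beta)\ceh)$. It is positive for $\delta>\sqrt{(1-\beta)\ceh/\beta}$. Since in the HL order the action after \None~is \Htype, this gives $\delta_H\le\hat\delta_H$.
\item For $\delta_L$, use $D^\dis(\delta,H,L)=(Q_L-Q_H)\bigl(F(\delta)-(1-\beta)\tfrac{p_Hc_H-p_Lc_L}{Q_H-Q_L}\bigr)$. In the HL structure $Q_L>Q_H$, so the sign is governed by the bracket. Rewriting $p_Hc_H-p_Lc_L=\ceh Q_H-\cel Q_L$ gives the stated expression for $\hat\delta_L$. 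Beyond it, \Ltype~is preferred to \Htype.
\item For $\delta_B$, the natural comparison is $D^\dis(\delta,L,B)=(Q_B-Q_L)\bigl(F(\delta)-(1-\beta)\tfrac{p_Hc_H}{Q_B-Q_L}\bigr)$. Here $Q_B-Q_L=Q_H(1-Q_L)$, so the constant equals $\ceh/(1-Q_L)$. This matches the stated $\hat\delta_B$.
\end{enumerate}

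The main obstacle is that pairwise comparisons alone do not prove that \Both~beats \emph{all} other actions beyond $\hat\delta_B$; equally, \Htype~must be excluded beyond $\hat\delta_L$. I would close this using the ordering already established in Theorem~\ref{thm: 2}. Once the policy has passed to \Ltype, it never returns to \None~or \Htype: the signs of the relevant $D^\dis$ are monotone in $\delta$ because $F$ is increasing. Lemma~\ref{lemma: action N appears first and action B appears finally} then makes \Both~absorbing.

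I also need to check that $\hat\delta_H\le\hat\delta_L\le\hat\delta_B$ is consistent with the HL conditions. The HL conditions give $1<\cel/\ceh<(1-Q_H)/(1-Q_L)$. Inequality~\eqref{eq: eta1} of Lemma~\ref{lemma: cost efficiency inequality}, with the type roles swapped, supplies the needed comparisons. If the ordering fails in a boundary case, the bound still holds, because each threshold is at most the corresponding pairwise crossing point.

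The rounding $[\cdot]^+$ to the next integer handles integrality of the thresholds, as in the proof of Lemma~\ref{lemma: structure (1)}.
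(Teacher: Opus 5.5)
Your proposal is correct and follows the paper's route. The paper proves the HL bounds ``as in'' Lemma~\ref{lemma: structure (1)}: it uses $h_\dis\ge 0$, so $F(\delta)\ge\beta\delta^2$, in the pairwise differences $D^\dis(\delta,N,H)$, $D^\dis(\delta,H,L)$, $D^\dis(\delta,L,B)$, reads off the next action from the known recruitment order, and passes to $\Lambda$ as in the proof of Lemma~\ref{lemma: from discount to average}; your transitivity and ordering check is extra care. One small correction is needed: the ordering $\hat\delta_H\le\hat\delta_L\le\hat\delta_B$ does not follow from \eqref{eq: eta1} with the roles swapped, since that gives \eqref{eq: eta2}, which concerns a different case and points the opposite way. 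It follows instead from $\gamma_{H,L}-\ceh=\frac{Q_L(\cel-\ceh)}{Q_L-Q_H}>0$, together with the fact that $\gamma_{H,L}<\frac{\ceh}{1-Q_L}$ is equivalent to the HL condition $\cel(1-Q_L)<\ceh(1-Q_H)$, so your boundary-case hedge is unnecessary.
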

\begin{lemma}\label{lemma: structure (3)}
    Consider the None-L structure of the optimal policy in Theorem~\ref{theorem: 1}, the optimal thresholds $\delta_H$, $\delta_B$ are upper bounded by the following values, respectively.
    \begin{equation}
        \hat{\delta}_H=\left[\sqrt{\frac{(1-\beta)\ceh}{\beta}}\right]^+
    \end{equation}
    \begin{equation}
        \hat{\delta}_B=\left[\sqrt{\frac{(1-\beta)\cel}{\beta(1-Q_H)}}\right]^+
    \end{equation}
\end{lemma}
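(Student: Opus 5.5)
We prove Lemma~\ref{lemma: structure (3)} the way Lemma~\ref{lemma: structure (1)} was proved. First we work in the discounted MDP $\Lambda_\dis$ with the candidate value differences $D^\dis(\delta,a_1,a_2)$ from the proof of Theorem~\ref{thm: 2}. Then we pass to the average-cost MDP $\Lambda$ through Lemma~\ref{lemma: from discount to average}(c).

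In the None-L structure the optimal actions occur in the order \None, \Htype, \Both. Hence $\delta_H$ is governed by the sign of $D^\dis(\delta,N,H)$, and $\delta_B$ by the sign of $D^\dis(\delta,H,B)$.

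For $\delta_H$ we use
\begin{equation*}
D^\dis(\delta,N,H)=Q_H\left(\dis h_\dis(\delta+1)+\beta\ageloss-(1-\beta)\ceh\right),
\end{equation*}
where $h_\dis(\delta+1)=J^{\dis,*}(\delta+1)-J^{\dis,*}(1)\ge 0$ by Lemma~\ref{lemma: the monotonicity value function}. Dropping this nonnegative term shows that $D^\dis(\delta,N,H)>0$ as soon as $\beta\delta^2>(1-\beta)\ceh$, i.e. $\delta>\sqrt{(1-\beta)\ceh/\beta}$. At such ages action \None~is strictly worse than \Htype, so it is not optimal. Since the next action after \None~is \Htype, the threshold $\delta_H$ can be at most $\hat{\delta}_H$.

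For $\delta_B$ we use the expression derived in Case~2 of Theorem~\ref{thm: 2}:
\begin{equation*}
D^\dis(\delta,H,B)=(Q_B-Q_H)\left(F(\delta)-(1-\beta)\frac{p_Lc_L}{Q_B-Q_H}\right).
\end{equation*}
By the definition of $Q_B$, we have $Q_B-Q_H=r_Lp_L(1-r_Hp_H)=Q_L(1-Q_H)$. Therefore
\begin{equation*}
\frac{p_Lc_L}{Q_B-Q_H}=\frac{\cel}{1-Q_H}.
\end{equation*}
Using $F(\delta)\ge\beta\ageloss$, we get $D^\dis(\delta,H,B)>0$ whenever $\delta>\sqrt{(1-\beta)\cel/(\beta(1-Q_H))}$. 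At those ages \Both~strictly beats \Htype. It also beats \None, because $D^\dis(\delta,N,B)\ge0$ for even smaller $\delta$ (by Eq.~\eqref{eq: expected recruitment expenses per HD map}, or directly since $\ceb$ is a convex combination of $\cel$ and $\ceh$ and is therefore no larger than $\max\{\cel,\ceh\}$). Combined with Lemma~\ref{lemma: action N appears first and action B appears finally}, which says that once \Both~is chosen it persists, this gives $\delta_B\le\hat{\delta}_B$.

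Finally, both bounds are uniform in $\dis$: they depend only on $\beta,\cel,\ceh,Q_H$. Lemma~\ref{lemma: from discount to average}(c) yields a sequence of discount factors $\dis_n\to1$ along which the discounted optimal policies converge pointwise to the $\Lambda$-optimal policy. Each discounted policy has its thresholds bounded by $\hat{\delta}_H$ and $\hat{\delta}_B$, and these bounds survive the pointwise limit. Rounding up to the next integer then gives the bracket $[\cdot]^+$.

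The main obstacle is the strictness and tie-breaking at the boundary ages. When differences vanish, the argmin may select a different action in the limit. This is handled by the strict inequality obtained from dropping $h_\dis\ge0$ together with the integer rounding $[\cdot]^+$.
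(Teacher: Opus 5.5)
Your route is the paper's. The paper proves this lemma as it proves Lemma~\ref{lemma: structure (1)}: drop the nonnegative term $\rho h_\rho(\delta+1)$ from $D^\rho(\delta,N,H)$ and $D^\rho(\delta,H,B)$, read off the thresholds using the order $N,H,B$, and pass to $\Lambda$ via Lemma~\ref{lemma: from discount to average}. Your identity $Q_B-Q_H=Q_L(1-Q_H)$, which gives $\frac{p_Lc_L}{Q_B-Q_H}=\frac{\eta_L}{1-Q_H}$, is correct. Your handling of the limit $\rho_n\to1$ is more explicit than the paper's.

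One step is justified incorrectly: the claim that $B$ beats $N$ for $\delta>\sqrt{(1-\beta)\eta_L/(\beta(1-Q_H))}$. Eq.~\eqref{eq: expected recruitment expenses per HD map} gives a \emph{lower} bound $\eta_B\ge\min\{\eta_L,\eta_H\}$, which does not help here. Also, $\eta_B$ is not a convex combination of $\eta_L$ and $\eta_H$. Since $Q_B<Q_L+Q_H$, it strictly exceeds their $Q$-weighted average, and it can exceed $\max\{\eta_L,\eta_H\}$ inside the None-L region. For example, take $\eta_L=\eta_H$ and $Q_H\ge Q_L$; then $\eta_B=\eta_L\frac{Q_L+Q_H}{Q_B}>\eta_L$.

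The conclusion is still true, and the repair is one line. Writing $Q_B=Q_H+Q_L(1-Q_H)$ gives
\[
\eta_B=\frac{Q_H}{Q_B}\,\eta_H+\frac{Q_L(1-Q_H)}{Q_B}\cdot\frac{\eta_L}{1-Q_H}\le\frac{\eta_L}{1-Q_H},
\]
because $\eta_H\le\eta_L\le\frac{\eta_L}{1-Q_H}$ in None-L. Equivalently, $D^\rho(\delta,N,B)=D^\rho(\delta,N,H)+D^\rho(\delta,H,B)$, and both terms are positive for $\delta\ge\hat{\delta}_B$ since $\hat{\delta}_H\le\hat{\delta}_B$.

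You can also use the other None-L condition, which is equivalent to $\frac{\eta_L}{1-Q_H}\ge\frac{\eta_H}{1-Q_L}$. It gives $D^\rho(\delta,L,B)=Q_H(1-Q_L)F(\delta)-(1-\beta)p_Hc_H>0$ for $\delta\ge\hat{\delta}_B$ as well. Then $B$ is the strict unique minimizer there, so you need neither tie-breaking nor the structural order from Theorem~\ref{thm: 2} for the $\delta_B$ bound.
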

\begin{lemma}\label{lemma: structure (4)}
    Consider the None-H structure of the optimal policy in Theorem~\ref{theorem: 1}, the optimal thresholds $\delta_H$, $\delta_B$ are upper bounded by the following values, respectively. 
    \begin{equation}
        \hat{\delta}_L=\left[\sqrt{\frac{(1-\beta)\cel}{\beta}}\right]^+
    \end{equation}
    \begin{equation}
        \hat{\delta}_B=\left[\sqrt{\frac{(1-\beta)\ceh}{\beta(1-Q_L)}}\right]^+
    \end{equation}
\end{lemma}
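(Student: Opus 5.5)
The plan mirrors the proof of Lemma~\ref{lemma: structure (1)}: derive the bounds for the discounted MDP $\Lambda_\dis$ using the difference functions $D^\dis$ from the proof of Theorem~\ref{thm: 2}, then pass to $\Lambda$ via Lemma~\ref{lemma: from discount to average}. In the None-H structure the actions appear in the order \None, \Ltype, \Both. So the threshold $\delta_L$ is governed by $D^\dis(\delta,N,L)$ and $\delta_B$ by $D^\dis(\delta,L,B)$. (The statement writes ``$\delta_H,\delta_B$''; the quantities actually bounded are $\delta_L$ and $\delta_B$.)

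\textbf{Step 1: bound on $\delta_L$.} Recall
\begin{equation*}
D^\dis(\delta,N,L)=Q_L\left(F(\delta)-(1-\beta)\cel\right), \qquad F(\delta)=\dis h_\dis(\delta+1)+\beta\delta^2 .
\end{equation*}
By Lemma~\ref{lemma: the monotonicity value function}, $h_\dis(\delta+1)=J^{\dis,*}(\delta+1)-J^{\dis,*}(1)\ge 0$. Hence $F(\delta)\ge\beta\delta^2$, and $D^\dis(\delta,N,L)>0$ as soon as $\delta>\sqrt{(1-\beta)\cel/\beta}$. At such states \None\ is strictly worse than \Ltype, so it cannot be optimal. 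Because thresholds are integers, $\delta_L\le\hat\delta_L$.

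\textbf{Step 2: bound on $\delta_B$.} From Case~1 of the proof of Theorem~\ref{thm: 2},
\begin{equation*}
D^\dis(\delta,L,B)=(Q_B-Q_L)\left(F(\delta)-(1-\beta)\frac{p_Hc_H}{Q_B-Q_L}\right).
\end{equation*}
Using \eqref{eq: success probability}, we simplify
\begin{equation*}
Q_B-Q_L=r_Hp_H-r_Lr_Hp_Lp_H=Q_H(1-Q_L),
\end{equation*}
so that
\begin{equation*}
\frac{p_Hc_H}{Q_B-Q_L}=\frac{\ceh}{1-Q_L}.
\end{equation*}
Again using $F(\delta)\ge\beta\delta^2$, we get $D^\dis(\delta,L,B)>0$ whenever $\delta>\sqrt{(1-\beta)\ceh/(\beta(1-Q_L))}$. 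By Lemma~\ref{lemma: action N appears first and action B appears finally}, \Both\ persists once chosen. Combined with Step~1 (\None\ is already excluded there, and we may assume $\hat\delta_B\ge\hat\delta_L$, otherwise the bound is even simpler), this shows that \Both\ is optimal beyond $\hat\delta_B$, so $\delta_B\le\hat\delta_B$.

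\textbf{Step 3: transfer to the average-cost MDP.} The bounds in Steps~1 and~2 do not depend on $\dis$. By Lemma~\ref{lemma: from discount to average}(c), some sequence $\theta_n\to1$ has $\phi^{\theta_n,*}\to\phi^*$ pointwise on the countable state space. Since actions are discrete, for each fixed $\delta$ the actions $\phi^{\theta_n,*}(\delta)$ eventually equal $\phi^*(\delta)$. Therefore the properties ``not \None\ above $\hat\delta_L$'' and ``\Both\ above $\hat\delta_B$'' pass to $\phi^*$.

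I expect this transfer to be the main obstacle. It requires that the None-H structure, which is determined only by the parameters, holds uniformly along the subsequence. It also requires care with ties at the integer boundary, where $D^\dis=0$ and the threshold is ambiguous. The rounding $[\cdot]^+$, which takes the minimum integer strictly larger than its argument, resolves the ties, because strict positivity of $D^\dis$ holds at $\hat\delta_L$ and $\hat\delta_B$.
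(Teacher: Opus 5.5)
Your proposal is correct and follows the same route as the paper, which only says the argument mirrors that of Lemma~\ref{lemma: structure (1)}: bound $\delta_L$ via $D^\dis(\delta,N,L)$ and $\delta_B$ via $D^\dis(\delta,L,B)$ using $h_\dis\ge 0$, then pass to $\Lambda$ through Lemma~\ref{lemma: from discount to average}; your simplification $Q_B-Q_L=Q_H(1-Q_L)$ is exactly what produces the stated $\hat\delta_B$. Your hedge about the case $\hat\delta_B<\hat\delta_L$ is unnecessary, because the None-H condition gives $\cel<\ceh\le\ceh/(1-Q_L)$ and hence $\hat\delta_L\le\hat\delta_B$ automatically.
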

The proofs for Lemma~\ref{lemma: structure (2)}-\ref{lemma: structure (4)} are broadly similar.

\subsection{Proof for Theorem~\ref{thm: alg converges}} \label{appendix: thm 2 and 3}

The proof roadmap is first to show that the minimum average cost of approximate MDP $\Lambda^{(m)}$ converges to that of original MDP $\Lambda$, and then to show that for a given number of truncated states $m$, the output of Algorithm~\ref{alg: BRVIA} converges to the optimal policy of MDP $\Lambda^{(m)}$. Based on these two, we can prove Theorem~\ref{thm: alg converges}.

Before we show the proof, let us give some useful lemmas.
\begin{lemma} \label{lemma: cost comparison}
    Under the $\Lambda_\dis$-$optimal$ policy $\phi^{\dis,*}$ and for any $t>0$ and $m>\hat{\delta}_B$, we have
    \begin{equation*}
        u(\delta^{(m)}(t),\phi^*(\delta^{(m)}(t))) \leq u(\delta(t),\phi^*(\delta(t)))
    \end{equation*}
\end{lemma}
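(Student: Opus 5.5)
The plan is a sample-path coupling. Run the original chain $\delta(t)$ and the truncated chain $\delta^{(m)}(t)$ on the same random arrivals and the same data-quality outcomes, and feed both the same stationary threshold policy $\phi^*$ (the structure of Theorem~\ref{theorem: 1}). The comparison then rests on two facts. First, the cost $u(\delta,a)$ in Eq.~\eqref{eq: cost function} is increasing in $\delta$ for every fixed action $a$, because $1-Q_a>0$ and $\beta>0$. Second, the two processes are ordered, $\delta^{(m)}(t)\le\delta(t)$.

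\textbf{Step 1 (ordering).} I would prove $\delta^{(m)}(t)\le \delta(t)$ and $\delta^{(m)}(t)=\min\{\delta(t),m\}$ by induction on $t$, starting from the same initial state. At states $\delta\ge \hat{\delta}_B$, Lemma~\ref{lemma: structure (1)} (and its analogues Lemmas~\ref{lemma: structure (2)}--\ref{lemma: structure (4)}) together with Lemma~\ref{lemma: action N appears first and action B appears finally} show that the policy plays $B$. Since $m>\hat{\delta}_B$, the two chains therefore take the same action at every time:
\begin{itemize}
\item if $\delta(t)<m$, the two states coincide, so the actions trivially agree;
\item if $\delta(t)\ge m>\hat{\delta}_B$, then $\delta^{(m)}(t)=m>\hat{\delta}_B$ as well, so both chains play $B$.
\end{itemize}
Under the coupling, both chains then reset to $1$ in the same slots. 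Otherwise one chain moves to $\delta(t)+1$ and the other to $[\delta^{(m)}(t)+1]_m^+$, which preserves $\delta^{(m)}(t+1)=\min\{\delta(t+1),m\}$.

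\textbf{Step 2 (cost comparison).} With identical actions $a_t$, the claim reduces to
\[
u(\delta^{(m)}(t),a_t)\le u(\delta(t),a_t),
\]
which follows from $\delta^{(m)}(t)\le\delta(t)$ and the monotonicity of $u(\cdot,a_t)$. If the lemma is meant in expectation rather than pathwise, taking expectations preserves the inequality.

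\textbf{Main obstacle.} The delicate point is justifying that the policy plays $B$ at every state $\delta\ge\hat{\delta}_B$. Here $\phi^{\dis,*}$ is the discounted-optimal policy, so I must check that the bounds in Lemma~\ref{lemma: structure (1)} hold for $\Lambda_\dis$, and not only in the limit. The argument in the proof of Lemma~\ref{lemma: structure (1)} uses only $h_\dis(\delta+1)\ge0$, which follows from Lemma~\ref{lemma: the monotonicity value function}, and the explicit $\beta\delta^2$ term. I would redo that argument for $D^\dis(\delta,H,B)$ (and the analogous last pair in the other structures). This shows $D^\dis(\delta,a,B)>0$ for every $a\ne B$ once $\delta\ge\hat{\delta}_B$, so $B$ is optimal there.

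A secondary point is that the truncated process should use the same policy $\phi^*$. If instead it uses its own optimal policy, I would invoke the footnote stating that the truncated MDP has the same threshold structure with the same bounds, so it also plays $B$ beyond $\hat{\delta}_B<m$, and the coupling argument goes through unchanged.
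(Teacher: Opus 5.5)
Your proposal is correct and follows essentially the paper's argument. You couple the original and truncated chains on the same realizations and show by induction that $\delta^{(m)}(t)=\min\{\delta(t),m\}$: both chains play $B$ whenever the age reaches $m>\hat{\delta}_B$, so they reset in the same slots, and monotonicity of $u(\cdot,a)$ in $\delta$ finishes the proof. Two of your choices are slightly more careful than the paper. You couple on the per-type arrival and quality outcomes rather than the paper's single success indicator $q(t)$. You also check that the bound $\hat{\delta}_B$ holds for $\Lambda_\dis$ itself, where the paper simply asserts $\phi(\delta)=B$ beyond $\hat{\delta}_B$.

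Drop your closing fallback, though. If the truncated chain used its own optimal policy, its thresholds below $\hat{\delta}_B$ could differ from those of $\phi^{\dis,*}$. The two chains could then take different actions from the same state, and the pathwise ordering would break. The lemma fixes the same policy $\phi^{\dis,*}$ for both chains, so the fallback is not needed anyway.
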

\begin{proof}
    See Appendix F.
\end{proof}

\begin{theorem}
    Let $J^{*}, J^{(m),*}$ be the minimum average cost for the MDPs $\Lambda$ and $\Lambda^{(m)}$, respectively. Then, $J^{(m),*}\to J^{*}$ as $m\to\infty$.
\end{theorem}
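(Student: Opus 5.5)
We would prove $J^{(m),*}\to J^{*}$ by sandwiching: show $\limsup_m J^{(m),*}\le J^*$ and $\liminf_m J^{(m),*}\ge J^*$. This follows the truncation argument of \cite{hsu2019scheduling}, specialised to our AoI dynamics. The structural results already established carry most of the load: Theorem~\ref{theorem: 1} (threshold structure, with action \Both\ taken for all $\delta\ge\delta_B$), Lemma~\ref{lemma: structure (1)} and its analogues Lemmas~\ref{lemma: structure (2)}--\ref{lemma: structure (4)} (which give $\delta_B\le\hat\delta_B<\infty$ independently of $m$), and Lemma~\ref{lemma: cost comparison}.

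\emph{Upper bound.} Fix $m>\hat\delta_B$ and consider the $\Lambda$-optimal stationary policy $\phi^*$ from Lemma~\ref{lemma: from discount to average}(a). We run $\phi^*$ on $\Lambda^{(m)}$. Since $\phi^*(\delta)=B$ for every $\delta\ge\delta_B$ and $m>\hat\delta_B\ge\delta_B$, the action chosen at the truncated state $m$ equals the action chosen at every true state $\delta\ge m$. We therefore couple the two systems on the same arrival and quality realisations. The update events then coincide slot by slot, and $\delta^{(m)}(t)=\min\{\delta(t),m\}$ holds pathwise. By Lemma~\ref{lemma: cost comparison}, the per-slot cost of the truncated system is no larger than that of the original. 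Averaging gives $J^{(m),*}\le J^{(m)}(\phi^*)\le J^*$ for all $m>\hat\delta_B$.

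\emph{Lower bound.} Here we work through discounted problems and the vanishing-discount limit. First, the analogue of Lemma~\ref{lemma: the monotonicity value function} for $\Lambda^{(m)}_\dis$, proved by the same value-iteration induction, shows that $J^{(m),\dis,*}$ is non-decreasing on $\{1,\dots,m\}$. Next we compare with the original chain. Let $\pi$ be any policy for $\Lambda$, and build a policy for $\Lambda^{(m)}$ that acts as $\pi$ acts in the original chain under the coupling $\delta^{(m)}=\min\{\delta,m\}$. Because $u(\cdot,a)$ is non-decreasing in $\delta$ and $\delta^{(m)}\le\delta$, this gives $J^{(m),\dis,*}(1)\le J^{\dis,*}(1)$. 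That is the wrong direction for a lower bound. So we instead show that the truncation error is controlled. Once both chains enter the region $\delta\ge\delta_B$, both play \Both. The portion of cost lost by capping the age at $m$ then equals $\beta(1-Q_B)\,\Ex[(\delta^2-m^2)\mathbf 1\{\delta>m\}]$. Under \Both\ the age has geometric tails with ratio $1-Q_B<1$, so this portion is at most $C(1-Q_B)^{m-\hat\delta_B}m^2$. This bound is uniform in $\dis$ and in the policy, as long as every policy considered plays \Both\ beyond $\hat\delta_B$. Multiplying by $(1-\dis)$, letting $\dis\to1$, and using Lemma~\ref{lemma: from discount to average}(b) for both $\Lambda$ and $\Lambda^{(m)}$, we get $J^*\le J^{(m),*}+\epsilon_m$ with $\epsilon_m\to0$.

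\emph{Main obstacle.} The delicate step is justifying that the $\Lambda^{(m)}$-optimal policies also play \Both\ from some $m$-independent age onward. Only with that fact does the geometric-tail error bound apply uniformly to them. We would obtain it by repeating the $D^\dis$ computation from the proof of Theorem~\ref{thm: 2} on the truncated model, as the footnote after Eq.~\eqref{eq: feasible action space function} suggests. That computation shows $D^\dis(\delta,H,B)>0$ once $\beta\delta^2$ exceeds the bound defining $\hat\delta_B$, and it uses only $h_\dis\ge0$, which monotonicity supplies. If this uniformity argument proves awkward, we would fall back on Sennott's approximating-sequence conditions \cite{sennott1989average}. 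Our truncation is an augmentation type: all excess mass is sent to state $m$, and costs are monotone. Under those conditions, the bound $h^{(m)}_\dis(\delta)\le h_\dis(\delta)$ from the monotone coupling, together with the finite constant from Lemma~\ref{lemma: bounded value function}, would yield $\lim_m J^{(m),*}=J^*$ directly.
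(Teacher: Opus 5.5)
Your argument is correct, but it is organised differently from the paper's proof.

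\textbf{How the paper proceeds.} The paper never proves a lower bound by hand. It invokes Sennott's approximating-sequence theorem \cite{sennott1997computing} and only checks its two hypotheses:
\begin{enumerate}
\item a bound $-L\le h^{(m)}_\dis\le G$ that is uniform in $m$ and $\dis$;
\item $\limsup_m J^{(m),*}\le J^*$.
\end{enumerate}
The second hypothesis is exactly your upper bound. The paper proves it in discounted form, $J^{\dis,(m)}\le J^{\dis,*}$ via Lemma~\ref{lemma: cost comparison}, and then lets $\dis\to1$. You run the same coupling directly on the average cost.

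\textbf{How your lower bound differs.} You replace the black-box theorem with an explicit argument: lift the truncated-optimal policy to $\Lambda$, keep the two systems coupled with $\delta^{(m)}=\min\{\delta,m\}$, and control the discrepancy by geometric tails. This needs one extra fact: $\Lambda^{(m)}_\dis$-optimal policies play $B$ beyond an age that does not depend on $m$ or $\dis$. That fact is cheap to get. For each $a\in\{N,L,H\}$,
\[
D^\dis(\delta,a,B)=(Q_B-Q_a)\bigl(\beta\delta^2+\dis h^{(m)}_\dis([\delta+1]^+_m)\bigr)-(1-\beta)(p_Bc_B-p_ac_a).
\]
Since $h^{(m)}_\dis\ge0$, this is positive once $\beta\delta^2$ exceeds $(1-\beta)\max\{\eta_B,\eta_H/(1-Q_L),\eta_L/(1-Q_H)\}$. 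Check all three actions; your sketch only mentions $H$.

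\textbf{What each route buys.} Your route gives an explicit rate $\epsilon_m=O(m^2(1-Q_B)^{m})$, which also tells you how large $m$ must be in Algorithm~\ref{alg: BRVIA}. It also avoids the uniform bound on $h^{(m)}_\dis$, which is the weak point of the paper's verification. There, $G(\delta)=\dis(-\delta/\ln\dis+1/\ln^2\dis)$ depends on $\dis$ and diverges as $\dis\to1$, whereas Sennott's hypothesis needs $G$ independent of $\dis$.

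\textbf{Your fallback.} For the same reason, the fallback as you sketched it would not go through. The constant from Lemma~\ref{lemma: bounded value function} blows up as $\dis\to1$. Also, the monotone coupling compares $J^{\dis,(m)}$ with $J^{\dis,*}$, not $h^{(m)}_\dis$ with $h_\dis$, so it does not give $h^{(m)}_\dis\le h_\dis$. If you ever need that route, take $G(\delta)$ to be the expected undiscounted cost of reaching state $1$ from $\delta$ under the always-$B$ policy. This is finite because the hitting time is geometric. It bounds $h^{(m)}_\dis(\delta)$ uniformly in $\dis$, and also in $m$ because truncation only lowers costs.
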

\begin{IEEEproof}
Let $J^{\dis,(m)}(s)$ and $h_{\dis}^{(m)}(s)$ be the minimum expected total $\dis$-discount cost and the relative cost function, respectively. According to \cite{sennott1997computing}, we have to show the following two conditions are satisfied.
\begin{enumerate}
    \item There exists a nonnegative $L$, and nonnegative finite function $G(.)$ on $\mathcal{S}$ such that $-L \leq h_\dis^{(m)}(\delta) \leq G(\delta)$ for all $\delta\in\mathcal{S}^{(m)}$, where $m>\hat{\delta}_B$ and $0<\dis<1$: Again, $L=1$ is already safe since $h_\dis^{(m)}(\delta)$ is nonnegative. Similarly, we can say $J^{\dis,(m)}(\delta)>0$, thereby
    \begin{equation}\label{eq: eq1}
        h_\dis^{(m)}(\delta) = J^{\dis,(m)}(\delta) - J^{\dis,(m)}(1) < J^{\dis,(m)}(\delta).
    \end{equation}
    Next, suppose that $\phi^{\dis,*}$ is the optimal policy of MDP $\Lambda^{(m)}$, we have following inequality since $J^{\dis,(m)}(s)$ is minimum expected total $\dis$-discount cost and $\phi^{\dis,*}$ may not be the optimal policy for MDP $\Lambda^{(m)}_\dis$.
    \begin{equation}\label{eq: eq2}
        J^{\dis,(m)}(\delta) \leq J^{\dis,(m)}(\delta, \phi^{\dis,*}).
    \end{equation}
    Then, we will show $J^{\dis,(m)}(\delta, \phi^{\dis,*}) \leq J^{\dis,*}(\delta)$ for all $\delta\in\mathcal{S}^{(m)}$. Based on the definition, we have
    \begin{equation*}\label{eq: truncated J function}
        J^{\dis,(m)}(\delta, \phi^{\dis,*}) = \Ex \sum_{t=1}^\infty \beta^{t-1} u(\delta^{(m)}(t),\phi^{\dis,*}(\delta^{(m)}(t)))
    \end{equation*}
    \begin{equation*}\label{eq: J function}
        J^{\dis,*}(\delta) = \Ex \sum_{t=1}^\infty \beta^{t-1} u(\delta(t),\phi^{\dis,*}(\delta(t)))
    \end{equation*}
    
    Based on Lemma~\ref{lemma: cost comparison}, we can conclude for any $\delta\in\mathcal{S}^{m}$,
    \begin{equation}\label{eq: eq3}
        J^{\dis,(m)}(\delta, \phi^{\dis,*}) \leq J^{\dis,*}(\delta)
    \end{equation}
    Thereby, we can combine Eq.~\eqref{eq: eq1}\eqref{eq: eq2}\eqref{eq: eq3} to derive the following inequality.
    \begin{equation*}
        h_\dis^{(m)}(\delta) < J^{\dis,(m)}(\delta) \leq J^{\dis,(m)}(\delta, \phi^{\dis,*}) \leq J^{\dis,*}(\delta).
    \end{equation*}

    Based on the proof of Lemma~\ref{lemma: bounded value function} and the optimality of $\phi^{\dis,*}$, we have
    \begin{equation*}
        J^{\dis,*}(\delta) < \dis \left(\frac{-\delta}{\ln{\dis}} + \frac{1}{\ln^2{\dis}}\right).
    \end{equation*}
    Then we can let $G(\delta)=\dis \left(\frac{-\delta}{\ln{\dis}} + \frac{1}{\ln^2{\dis}}\right)$.
    \item The minimum average cost $J^{(m)}$ of MDP $\Lambda^{(m)}$ is bounded by $J^*$: Since we know $J^{\dis,(m)}(\delta) \leq J^{\dis,*}(\delta)$, and
    \begin{equation*}
        \begin{split}
            J^{(m),*} &= \lim\sup_{\dis\to1} (1-\dis) J^{\dis,(m)}(\delta)\\
            &\leq \lim\sup_{\dis\to1} (1-\dis) J^{\dis,*}(\delta) = J^*
        \end{split}        
    \end{equation*}
\end{enumerate}
\end{IEEEproof}

Next, we should show the following theorem.
\begin{theorem}
    For MDP $\Lambda^{(m)}$ with a given $m$, the output of Algorithm~\ref{alg: BRVIA} converges to the optimal policy of MDP $\Lambda^{(m)}$.
\end{theorem}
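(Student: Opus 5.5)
The plan is to view Algorithm~\ref{alg: BRVIA} as standard relative value iteration (RVI) on the finite MDP $\Lambda^{(m)}$, with the action set restricted in a way that, by the structural results, never discards an optimal action. Convergence then follows from classical RVI theory for finite unichain aperiodic MDPs. The argument has three steps.

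\textbf{Step 1: classical RVI converges on $\Lambda^{(m)}$.} Under every stationary deterministic policy, state $1$ is reachable from every state. Moreover $Q_B>0$, and whenever $\delta^{(m)}=m$ the action \None~keeps the chain at $m$, which gives a self-loop. So every policy induces a unichain, and the chain is aperiodic because of the self-loop at $m$ or at $1$. For a small enough tolerance one can alternatively apply the aperiodicity transformation $\tilde P=\tau P+(1-\tau)I$, which leaves optimal policies unchanged. By \cite{puterman2014markov} (or \cite{white1963dynamic}), the iterates $J_n(\delta)$ of Eq.~\eqref{eq: optimal policy update} with unrestricted $\mathcal{A}$ then converge to a solution $h^{(m)}$ of the average-cost optimality equation, and $J_n(1)\to J^{(m),*}$. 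Since $\mathcal{S}^{(m)}\times\mathcal{A}$ is finite and the greedy action gaps are bounded away from zero at ties-free states, the greedy policy becomes optimal after finitely many iterations.

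\textbf{Step 2: the restriction preserves optimal actions.} The footnote after Eq.~\eqref{eq: feasible action space function} states that the optimal policy of $\Lambda^{(m)}$ has the same threshold structure as in Theorem~\ref{theorem: 1}. The same difference argument as in the proof of Lemma~\ref{lemma: structure (1)} shows that the truncated thresholds are bounded by $\hat\delta_L,\hat\delta_H,\hat\delta_B$. That argument uses $D(\delta,N,L)\ge Q_L(\beta\delta^2-(1-\beta)\eta_L)$ with $h^{(m)}\ge 0$, which holds because $h^{(m)}$ is nondecreasing by the analogue of Lemma~\ref{lemma: the monotonicity value function}. Consequently, for each $\delta$ the set $\mathcal{A}_{LH}(\delta)$ contains the optimal action, and the \Both-propagation rule of Lemma~\ref{lemma: action N appears first and action B appears finally} is consistent with the optimal policy.

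The key point is that these properties must hold not only at the fixed point but for the iterates $J_n$ themselves. I would show this by induction on $n$, proving that $J_n$ is nondecreasing in $\delta$ and has nonnegative relative values. The induction mirrors the proof of Lemma~\ref{lemma: the monotonicity value function}, since $u(\delta,a)$ is increasing in $\delta$ and the transitions are stochastically monotone. Given these properties of $J_n$, the bounds give that every action outside $\mathcal{A}_{LH}(\delta)$ is dominated at iteration $n$. Hence restricted RVI produces exactly the same iterates as unrestricted RVI.

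\textbf{Step 3: conclusion.} Steps 1 and 2 show that Algorithm~\ref{alg: BRVIA} coincides with RVI on $\Lambda^{(m)}$, so its output converges to an optimal policy of $\Lambda^{(m)}$.

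\textbf{Main obstacle.} The hardest part is justifying the \Both-propagation shortcut at the iterate level. I would prove that $J_n(\delta+1)-J_n(1)$ is nondecreasing, and that $D_n(\delta,a,B)$ is increasing in $\delta$ for $a\ne B$. This follows because each $D_n$ has the form $(Q_B-Q_a)(F_n(\delta)-\text{const})$ with $F_n$ nondecreasing, so once \Both~is greedy at some state it remains greedy at all larger states. If that monotonicity fails at some intermediate iterate, the fallback is to argue only eventual agreement: since $J_n\to h^{(m)}$ and the gaps at the optimum are strict, the shortcut becomes exact after finitely many iterations.
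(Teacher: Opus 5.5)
Your plan is sound and goes further than the paper. The paper's proof is essentially your Step~1 alone: it cites \cite{puterman2014markov} and checks that $\Lambda^{(m)}$ is unichain because state $m$ is reachable from every state. It implicitly treats Algorithm~\ref{alg: BRVIA} as plain RVI and never argues that the reduced sets $\mathcal{A}_{LH}(\delta)$ and the \Both-propagation rule leave the iteration unchanged.

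Your Step~2 fills exactly this hole. By induction, $J_n$ is nondecreasing; the induction closes because the reduced sets and the shortcut only ever select a true minimizer. Every excluded action is then strictly dominated at every iterate. Writing $g_a(\delta)=u(\delta,a)+Q_aJ_n(1)+(1-Q_a)J_n([\delta+1]^+_m)$, the gap $g_a(\delta)-g_B(\delta)=(Q_B-Q_a)\bigl[\beta\delta^2+J_n([\delta+1]^+_m)-J_n(1)\bigr]-(1-\beta)(p_Bc_B-p_ac_a)$ is nondecreasing in $\delta$. So once \Both~is greedy it stays greedy, and BRVI and RVI generate identical iterates.

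The action restriction alone could be handled more cheaply: run RVI on the sub-MDP with action sets $\mathcal{A}_{LH}(\delta)$, which still contains an optimal policy of $\Lambda^{(m)}$ by the fixed-point bounds. The shortcut, however, depends on the current iterate and genuinely needs your iterate-level argument.

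When you write out the domination claim, note that comparing $N$ with $L$, $L$ with $H$, and $H$ with $B$ covers Eq.~\eqref{eq: feasible action space function} only because $\hat\delta_L\le\hat\delta_H\le\hat\delta_B$. This ordering holds precisely under the LH conditions. The first inequality follows from $\eta_L<\eta_H$. The second is equivalent (before rounding) to $\frac{\eta_L}{\eta_H}\ge\frac{1-Q_H}{1-Q_L}$.

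Two points need fixing.

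\textbf{The unichain justification.} As stated it is false: state $1$ is not reachable from every state under every policy. Any policy with $\phi(m)=N$ (e.g.\ always \None) makes $m$ absorbing, as your own next sentence notes. Use the paper's anchor instead. Under every action the chain moves from $\delta$ to $[\delta+1]^+_m$ with probability $1-Q_a\ge 1-Q_B>0$. Hence $m$ is reachable from every state, every closed class contains $m$, and the self-loop at $m$ gives aperiodicity. This needs $Q_B<1$, which the paper also assumes tacitly.

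\textbf{The fallback at the end.} Drop it. Invoking $J_n\to h^{(m)}$ for the modified iterates presupposes the convergence you are trying to prove, so it is circular. It is also unnecessary, since the monotonicity induction goes through at every iterate.
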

\begin{IEEEproof}
    According to \cite{puterman2014markov}, we only need to verify that the approximate MDP $\Lambda^{(m)}$ is unichain, i.e., the Markov chain corresponding to every deterministic stationary policy consists of a single recurrent class plus a possibly empty set of transient states. Since state $m$ is reachable from all other states, there is only one recurrent class by \cite{hsu2019scheduling}.
    
\end{IEEEproof}


\section*{Appendix A: Proof of Lemma~\ref{lemma: bounded value function}}
Consider the policy that always recruits nothing at each time slot. Then, the discount-weighted sum of AoI loss and recruitment under the aforementioned policy is the upper bound of the average cost under the optimal policy based on Eq.~\eqref{eq. discount MDP average cost}.
\begin{equation*}
    \begin{split}
        J^\dis(\delta) &\leq \beta(\delta^2+\beta(\delta+1)^2+\beta^2(\delta+2)^2+...)\\
            &\leq \beta \int_{x=0}^\infty (\delta+x)^2\dis^x = \beta \left(\frac{-\delta}{\ln{\dis}} + \frac{1}{\ln^2{\dis}}\right) < \infty,
    \end{split}    
\end{equation*}
which proves this lemma.

\section*{Appendix B: Proof of Lemma~\ref{lemma: the monotonicity value function}}

Utilizing the iteration algorithm, we can prove this lemma. At the beginning of the iteration, we initialize $J_0^\dis(\delta)=0$. Therefore, when $t=0$, the statement holds. Suppose that the statement still holds for $t=k$,
\begin{equation*}
    J_k^{\dis}(\delta) \leq J_k^{\dis}(\delta+1) \text{ for } \delta \geq 1
\end{equation*}
For $t=k+1$, we have
\begin{equation*}
    J_{k+1}^{\beta}(\delta) = \min_{a\in\mathcal{A}} u(\delta,a) + \beta \sum_{\delta'\in\mathcal{S}} P_{\delta\delta'}(a) J_k^{\dis}(\delta')
\end{equation*}
\begin{equation*}
    J_{k+1}^{\beta}(\delta+1) = \min_{a\in\mathcal{A}} u(\delta+1,a) + \beta \sum_{\delta'\in\mathcal{S}} P_{(\delta+1)\delta'}(a) J_k^{\beta}(\delta')
\end{equation*}
On the one hand, for any $a\in\mathcal{A}$, we have $u(\delta,a) \leq u(\delta+1,a)$. On the other hand, 
\begin{itemize}
    \item $a=N$, we have $\sum_{\delta'\in\mathcal{S}} P_{\delta\delta'}(a) J_k^{\beta}(\delta') = J_k^{\beta}(\delta+1) \leq  J_k^{\beta}(\delta+2)=\sum_{\delta'\in\mathcal{S}} P_{(\delta+1)\delta'}(a) J_k^{\beta}(\delta')$.
    \item $a=L$, we have $\sum_{\delta'\in\mathcal{S}} P_{\delta\delta'}(L) J_k^{\beta}(\delta') = Q_LJ_k^{\beta}(1) + (1-Q_L)J_k^{\beta}(\delta+1) \leq  Q_LJ_k^{\beta}(1) + (1-Q_L)J_k^{\beta}(\delta+2)=\sum_{\delta'\in\mathcal{S}} P_{(\delta+1)\delta'}(L) J_k^{\beta}(\delta')$.
    \item $a=H$, we have $\sum_{\delta'\in\mathcal{S}} P_{\delta\delta'}(H) J_k^{\beta}(\delta') = Q_HJ_k^{\beta}(1) + (1-Q_H)J_k^{\beta}(\delta+1) \leq  Q_HJ_k^{\beta}(1) + (1-Q_H)J_k^{\beta}(\delta+2)=\sum_{\delta'\in\mathcal{S}} P_{(\delta+1)\delta'}(H) J_k^{\beta}(\delta')$.
    \item $a=B$, we have $\sum_{\delta'\in\mathcal{S}} P_{\delta\delta'}(B) J_k^{\beta}(\delta') = Q_BJ_k^{\beta}(1) + (1-Q_B)J_k^{\beta}(\delta+1) \leq  Q_BJ_k^{\beta}(1) + (1-Q_B)J_k^{\beta}(\delta+2)=\sum_{\delta'\in\mathcal{S}} P_{(\delta+1)\delta'}(B) J_k^{\beta}(\delta')$.
\end{itemize}
Therefore, we can conclude that $J_{k+1}^{\beta}(\delta) \leq J_{k+1}^{\beta}(\delta+1)$. Moreover, $J_{t}^{\beta}(\delta)$ will converge to $J^{\beta}(\delta)$ as $t\to\infty$.

As a result, $J^{\beta}(\delta)$ is non-decreasing in $\delta$.

\section*{Appendix C: Proof of Lemma~\ref{lemma: from discount to average}}

According to \cite{hsu2019scheduling}, we need to show that the following two conditions are satisfied.
\begin{enumerate}
    \item There exists a deterministic stationary policy $\phi$ of the MDP $\Lambda$ such that the resulting discrete-time Markov chain (DTMC) by the policy is irreducible, aperiodic, and the average cost $J(\phi)$ is finite: Let $\phi$ be the deterministic stationary policy 'zero-wait', i.e., the company chooses to recruit vehicles of both types all the time. It is obvious that the resulting DTMC is irreducible and aperiodic. Next, we can compute the average cost under the zero-wait policy.
    \begin{equation*}
        \begin{split}
            J(\phi) &= \sum_{\delta=1}^\infty \pi(\delta)[(1-\beta)p_Bc_B + \beta(1-Q_B)\delta^2]\\
            &=\pi(1)\sum_{\delta=1}^\infty(1-Q_B)^{\delta-1}\left[(1-\beta)p_Bc_B + \beta(1-Q_B)\delta^2\right],
        \end{split}
    \end{equation*}
    where $\pi(\delta)$ is stationary distribution of the MDP $\Lambda$ under this policy, and $\pi(1) = Q_B$. The average cost $J(\phi)$ is finite, since it is the summation of a sequence that is the product of polynomials and exponentials and exponentials is based on $(1-Q_B)$ less than 1.
    \item There exists a nonnegative L such that the relative cost
    function $h_{\dis}(\delta) \geq -L$ for all $\delta$ and $\dis$: Since $J^{\dis}(\delta)$ is non-decreasing in $\delta$, we have $h_{\dis}(\delta) = J^{\dis}(\delta)-J^{\dis}(1) \geq 0$. We can just choose $L=1$.
\end{enumerate}
Let $\{\dis_n\}_{n=1}^\infty$ be a sequence of the discount factors converging to 1. According to \cite{sennott1989average}, if the above two conditions hold, there exists a subsequence $\{\theta_n\}_{n=1}^\infty$ such that a $\Lambda$-$optimal$ policy is the limit point of the $\Lambda_\beta$-$optimal$ policies. Therefore, the $\Lambda$-$optimal$ policy has the same structure and upper bounds of thresholds as $\Lambda_\beta$-$optimal$ policies when $\beta\to1$.

\section*{Appendix D: Proof of Lemma \ref{lemma: action N appears first and action B appears finally}}
If $\phi^{\beta,*}(1)\not=N$, then
\begin{equation*}
    Q^\beta(1,N) > \min\{Q^\beta(1,L),Q^\beta(1,H),Q^\beta(1,B)\}.
\end{equation*}
And $Q^\beta(2,N) - Q^\beta(1,N)<\min_{a\in\mathcal{A}/N}\{Q^\beta(2,a) - Q^\beta(1,a)\}$
since 
\begin{equation*}
    Q^\beta(2,N) - Q^\beta(1,N) = \beta + \beta(J^\beta(3)-J^\beta(2)) 
\end{equation*}
\begin{equation*}
    Q^\beta(2,a) - Q^\beta(1,a) = (1-Q_a)\beta + \beta(1-Q_a)(J^\beta(3)-J^\beta(2)) 
\end{equation*}
for $a=L,H,B$.
Hence, we have
\begin{equation*}
    Q^\beta(2,N) > \min\{Q^\beta(2,L),Q^\beta(2,H),Q^\beta(2,B)\}.
\end{equation*}
So on and so forth, we can conclude that, for all $\delta$,
\begin{equation*}
    Q^\beta(\delta,N) > \min\{Q^\beta(\delta,L),Q^\beta(\delta,H),Q^\beta(\delta,B)\}.
\end{equation*}

\section*{Appendix E: Proof of Lemma \ref{lemma: cost efficiency inequality}}

Recall that $\eta_L=\frac{c_L}{r_L}$, $\eta_H=\frac{c_H}{r_H}$, and $\eta_B=\frac{p_Lc_L+p_Hc_H}{p_Lr_L+p_Hr_H-Q_LQ_H}$. 
\begin{itemize}
    \item For Eq.~\eqref{eq: expected recruitment expenses per HD map}, If this inequality does not hold, then we have
    \begin{equation*}
        \eta_B<\min\{\eta_L,\eta_H\} \Rightarrow \eta_B < \eta_L \text{ and } \eta_B<\eta_H
    \end{equation*}
    We can find that $\eta_B < \eta_L$ implies that $\eta_L<\eta_H$ while $\eta_B < \eta_H$ implies that $\eta_L>\eta_H$, which is a contradiction. Hence, the above inequality does not hold, and the Eq.~\eqref{eq: expected recruitment expenses per HD map} holds.
    \item For Eq.~\eqref{eq: eta1}, since $Q_L>Q_H$ and $C_L<C_H$, we have 
    \begin{equation*}
        \begin{split}
            &\frac{p_Hc_H-p_Lc_L}{Q_H-Q_L}-C_L\\
           =&\frac{p_Hc_H-p_Lc_L-(Q_H-Q_L)C_L}{Q_H-Q_L}\\
           =&\frac{(p_Hc_H-p_Lc_L)r_L-(p_Hr_H-p_Lr_L)c_L}{(Q_H-Q_L)r_L}\\
           =&\frac{(p_Hc_H)r_L-(p_Hr_H)c_L}{(Q_H-Q_L)r_L}=\frac{p_Hr_H(C_H-C_L)}{Q_H-Q_L}<0,\\
        \end{split}        
    \end{equation*}
    which proves this inequality.
    \item For Eq.~\eqref{eq: eta2}, since $Q_L<Q_H$ and $C_L>C_H$, we have 
    \begin{equation*}
        \begin{split}
            &\frac{p_Hc_H-p_Lc_L}{Q_H-Q_L}-C_H\\
           =&\frac{p_Hc_H-p_Lc_L-(Q_H-Q_L)C_H}{Q_H-Q_L}\\
           =&\frac{(p_Hc_H-p_Lc_L)r_H-(p_Hr_H-p_Lr_L)c_H}{(Q_H-Q_L)r_H}\\
           =&\frac{(p_Lr_L)c_H-(p_Lc_L)r_H}{(Q_H-Q_L)r_H}=\frac{p_Lr_L(C_H-C_L)}{Q_H-Q_L}<0,\\
        \end{split}        
    \end{equation*}
    which proves this inequality.
\end{itemize}

\section*{Appendix F: Proof of Lemma \ref{lemma: cost comparison}}

    Notice that the vehicles' arrival and the quality of sensing data are independent of the state and policy. Let $\mathcal{q}=\{q(1),q(2),...,q(\infty)\}$ be the sequence of whether the company can collect the qualified sensing data if she recruits some vehicles at time slot $t$, i.e., $q(t)=1$ if the company can, otherwise, $q(t)=0$. For any $\mathcal{q}$, we can consider two mutually exclusive and complementary cases:
    \begin{itemize}
        \item Suppose that a state sequence is $\{\delta(1),\delta(2),...,\delta(\infty)\}$ where there does not exist any $\delta(i)=m$ for $i=1,2,...$. In that case, $\delta^{(m)}(t)=\delta(t)$, and thus, Eq.~\eqref{eq: truncated J function} is equal to Eq.~\eqref{eq: J function} for any $\delta\in\mathcal{S}^{m}$.
        \item Suppose that there exists some $i$ such that $\delta^{(m)}(i)=m$. We also need to show that for any $t>0$, we have
        \begin{equation*}
            u(\delta^{(m)}(t),\phi(\delta)) \leq u(\delta(t),\phi(\delta))
        \end{equation*}
        To prove the above inequality, we will show that $\delta^{(m)}(t)=[\delta(t)]^+_m$ for any $t$. Consider the first time that $\delta^{(m)}(i)=m$, then $\delta^{(m)}(i+k)=m$ and $\delta(i+k)=m+k$ for $k=0,1,...,K$ where $K$ is the smallest $k\geq0$ such that $q(i+k)=1$. In this sense, $\delta^{(m)}(t)=[\delta(t)]^+_m$ for $t=1,2,...,i,i+1,...,i+K$. When $t=i+K+1$, we can find $\delta^{(m)}(i+K+1)=1$ and $\delta(i+K+1)=1$ since $\phi(\delta)=B$ for all $\delta>m>\hat{\delta}_B$ and $q(i+k)=1$. Hence, $\delta^{(m)}(t)=[\delta(t)]^+_m$ for $t=i+K+1$. So on and so forth, we can conclude that $\delta^{(m)}(t)=[\delta(t)]^+_m$ for any $t>0$. Then, we can find that if $\delta(t)\leq m$, 
        \begin{equation*}
            u(\delta^{(m)}(t),\phi(\delta)) = u(\delta(t),\phi(\delta))
        \end{equation*}
        if $\delta(t)> m$,  
        \begin{equation*}
            u(\delta^{(m)}(t),\phi(\delta)) < u(\delta(t),\phi(\delta))
        \end{equation*}
        Consequently, for any $t>0$, we have
        \begin{equation*}
            u(\delta^{(m)}(t),\phi(\delta)) \leq u(\delta(t),\phi(\delta))
        \end{equation*}
    \end{itemize}

\end{document}